\documentclass[twoside]{article}
\usepackage[accepted]{aistats2026}

\usepackage{epsfig}
\usepackage{float}
\usepackage{afterpage}
\usepackage{amsmath}
\usepackage{amstext}
\usepackage{amssymb,bm, bbm}
\usepackage{latexsym}
\usepackage{color}
\usepackage{graphicx}
\usepackage{amsmath}
\usepackage{amsthm}
\usepackage{graphicx}
\usepackage[center]{caption}
\usepackage{booktabs}
\usepackage{multicol}
\usepackage{lipsum}
\usepackage{dblfloatfix}
\usepackage{mathrsfs}
\usepackage[round]{natbib}
\usepackage{tikz}
\usepackage{pgfplots}
\usepackage{enumitem}
\usepackage{makecell}
\usepackage{float}
\restylefloat{table}
\usepackage{subcaption}
\usepackage{mathtools}
\usepackage{csquotes}
\usepackage[utf8]{inputenc}
\usepackage{xcolor}
\usepackage{url}
\usepackage[hidelinks]{hyperref}

\usepackage{tikz}
\usepackage{pgfplots}
\usepgfplotslibrary{groupplots}
\usetikzlibrary{calc}
\pgfplotsset{compat=newest}

\usetikzlibrary{arrows.meta,intersections,calc}
\usepgfplotslibrary{fillbetween}

\allowdisplaybreaks

\newcommand{\rmd}{\mathrm{d}}
\newcommand{\bbE}{\mathbb{E}}\newcommand{\rme}{\mathrm{e}}

\newcommand{\bbN}{\mathbb{N}}

\newcommand{\bbR}{\mathbb{R}}

\newcommand{\rmW}{\mathrm{W}}\newcommand{\rmw}{\mathrm{w}}

\newcommand{\sfL}{\mathsf{L}}

\newcommand{\sfS}{\mathsf{S}}
\newcommand{\sfT}{\mathsf{T}}

\newcommand{\cP}{\mathcal{P}}
\newcommand{\cQ}{\mathcal{Q}}

\newcommand{\cS}{\mathcal{S}}

\newcommand{\cX}{\mathcal{X}}

\theoremstyle{mystyle}
\newtheorem{theorem}{Theorem}
\theoremstyle{mystyle}
\newtheorem{lemma}{Lemma}
\theoremstyle{mystyle}
\newtheorem{prop}{Proposition}
\theoremstyle{mystyle}
\newtheorem{corollary}{Corollary}
\theoremstyle{mystyle}
\newtheorem{definition}{Definition}
\theoremstyle{remark}
\newtheorem{rem}{Remark}
\theoremstyle{mystyle}
\theoremstyle{mystyle}
\theoremstyle{mystyle}
\theoremstyle{discussion}
\theoremstyle{mystyle}
\newtheorem{conj}{Conjecture}
\theoremstyle{mystyle}

\newcommand{\axisfont}{\fontsize{13.3}{15.96}\selectfont}

\begin{document}
	
	\twocolumn[
	
	\aistatstitle{Functional Properties of the Focal-Entropy}
	
	\aistatsauthor{ Jaimin Shah \And Martina Cardone \And  Alex Dytso}
	
	\aistatsaddress{ University of Minnesota,  \\ USA \And  University of Minnesota, \\USA \And Qualcomm Flarion Technology, Inc., \\ USA } ]
	
	\begin{abstract}
		The focal-loss has become a widely used alternative to cross-entropy in class-imbalanced classification problems, particularly in computer vision. Despite its empirical success, a systematic information-theoretic study of the focal-loss remains incomplete. In this work, we adopt a distributional viewpoint and study the \emph{focal-entropy},
		a focal-loss analogue of the cross-entropy. Our analysis establishes conditions for finiteness, convexity, and continuity of the \emph{focal-entropy}, and provides various asymptotic characterizations. We prove the existence and uniqueness of the focal-entropy minimizer, describe its structure, and show that it can depart significantly from the data distribution. In particular, we rigorously show that the focal-loss amplifies mid-range probabilities, suppresses high-probability outcomes, and, under extreme class imbalance, 
		induces an \emph{over-suppression regime} in which very small probabilities are further diminished. These results, which are also experimentally validated, offer a theoretical foundation for understanding the focal-loss and clarify the trade-offs that it introduces when applied to imbalanced learning tasks.
	\end{abstract}
	\section{Introduction}
	In learning-based prediction and decision-making, loss functions shape the decision rule and strongly influence performance. Selecting an appropriate loss is therefore critical. A widely used choice is the \emph{log-loss}, defined for $p \in (0,1]$ as
	\begin{equation}
		\sfL(p) = \log \left( \frac{1}{p} \right).
	\end{equation}
	When predicting with a distribution $Q_X$, the actual outcome $x \in \cX$ occurs with probability $P_X(x)$. The log-loss in this case is $\sfL(Q_X(x)) = \log \tfrac{1}{Q_X(x)}$. Averaging with respect to the true distribution $P_X$ yields the expected log-loss, which equals the \emph{cross-entropy} of $Q_X$ relative to $P_X$. Formally, this cross-entropy over $\cX$ is defined as
	\begin{equation}
		H(P_X,Q_X) \!=\! \begin{cases}
			\bbE_{X \sim P_X} \left [ \sfL(Q_X(X)) \right ]  & P_X \!\ll\! Q_X,\\
			\infty & \text{otherwise.}
		\end{cases}
		\label{eq:CrossEntropy}
	\end{equation}
	Despite its ubiquity, the log-loss often performs poorly under class imbalance -- common in tasks such as object detection~\citep{lin2017focal}, fraud detection~\citep{he2009learning}, natural language processing~\citep{li2020dice}, and various medical applications~\citep{al2019denses,vongkulbhisal2019unifying,shu2019pathological}. To address this,~\citep{lin2017focal} proposed the \emph{focal-loss}, now widely adopted, especially in computer vision. Its success arises from a simple modification of the log-loss: a modulating pre-factor is introduced that downweights well-classified (“easy”) examples and emphasizes misclassified (“hard”) ones, often yielding substantial accuracy gains.
	\begin{definition}[Focal-loss]
		\label{def:FocalLoss}
		Given a \emph{focus parameter} $\gamma \geq 0$ and a prediction score $p \in (0,1]$, the \emph{focal-loss} is defined as
		\begin{equation}
			\sfL_\gamma(p) = \underbrace{(1 - p)^\gamma}_{\text{focal term}} \underbrace{\log \left( \frac{1}{p} \right)}_{\text{log-loss}}.
			\label{eq:FocalLoss}
		\end{equation}
	\end{definition}
	Although the focal-loss is empirically effective, a \emph{comprehensive} information-theoretic treatment is still lacking. The cross-entropy has a clear foundation in information theory—minimizing it is equivalent to minimizing the Kullback–Leibler (KL) divergence between model and data—thus providing a principled optimization landscape. In contrast, the focal-loss alters this landscape in ways not yet fully understood. Characterizing the geometry of the resulting objective and its minimizer(s) is therefore essential.
	
	We adopt a distributional viewpoint and introduce the \emph{focal-entropy}, a focal-loss analogue of the cross-entropy that compares two distributions $P_X$ and $Q_X$.
	\begin{definition}[Focal-Entropy]
		\label{def:FocalEntropy}
		Given a \emph{focus parameter} $\gamma \geq 0$, the focal-entropy of $Q_X$ relative to $P_X$ over $\cX$ is defined as
		\begin{equation}
			H_\gamma(P_X,Q_X) \!=\! \begin{cases} 
				\bbE_{X \sim P_X}\!\left [\sfL_{\gamma}(Q_X(X))\right ] & P_X \!\ll\! Q_X,\\
				\infty & \text{otherwise.}
			\end{cases}
			\label{eq:FocalEntropy}
		\end{equation}
	\end{definition}
	For $\gamma=0$, the focal-entropy in~\eqref{eq:FocalEntropy} reduces to the cross-entropy in~\eqref{eq:CrossEntropy}.  A key defining property of the cross-entropy is that the minimizer over the second parameter $Q_X$ (in the space of all probability mass functions on $\cX$) is equal to $P_X $, that is, 
	\begin{equation}
		\label{eq:Gibbs}
		\min_{Q_X}  H(P_X, Q_X) =   H(P_X, P_X) = H(P_X),
	\end{equation}
	where the last expression is the Shannon entropy. 
    As we show in this paper, this property, however, is no-longer true for the  focal-entropy and, in general, the optimizer is not equal to $P_X$.  
    In this sense, the focal-loss may not appear to be a proper loss \citep{brehmer2020properization,buja2005loss}. However, recent work~\citep{bao2025calm} has shown that the focal-loss is a variant of proper scoring rules.
	The structure and basic properties (e.g., existence) of the minimizer, i.e.,
	\begin{equation}
		\label{eq:FEMinimizer}
		P_\gamma^\star = \arg \min_{Q_X} H_\gamma(P_X,Q_X)
	\end{equation}
	are indeed not fully understood. 
	This paper aims to address this gap by providing a full characterization of $P^\star_\gamma$  and studying many of its properties.

	\subsection{Notation}
	The set of all probability mass functions over a (possibly countably infinite) set $\cX$ is denoted by $\cP(\cX)$. For $P_X \in \cP(\cX)$, its support is
	\begin{equation}
		\cS = \{x \in \cX : P_X(x) > 0\}.
		\label{eq:SetS}
	\end{equation}
	We write $P_X \ll Q_X$ to denote absolute continuity of $P_X$ with respect to $Q_X$. For $P_X$, we let $p_{\min}=\min_{x\in\cS}P_X(x)$ and $p_{\max}=\max_{x\in\cS}P_X(x)$. All logarithms are natural. The Lambert function’s principal and negative branches are denoted by $\rmW_0(\cdot)$ and $\rmW_{-1}(\cdot)$, respectively~\citep{corless1996lambertw}.
	
	\subsection{Literature Review} 
	In~\citep{charoenphakdee2021focal}, the authors implicitly characterized the minimizer of the focal-entropy for the finite alphabet case. They showed that the minimizing class-posterior under the focal-loss is order preserving (i.e., it maintains the same ordering as the true posterior), leading to the conclusion that the focal-loss is \emph{classification-calibrated}.
	Building on this,~\citep{mukhoti2020calibrating} demonstrated that the focal-loss not only improves accuracy on challenging datasets but also yields models that are better calibrated than those trained with the cross-entropy. Their analysis provided both theoretical justification—framing the focal-loss as a regularized KL divergence that implicitly maximizes entropy—and empirical evidence across vision and NLP tasks, including robustness under distributional shift. In this work, we give a theoretical basis for one of their key empirical findings: that focal-loss predictions exhibit higher entropy than those from log-loss.
	Several extensions of the focal-loss have been proposed, from modifying the focal term~\citep{li2022generalized,li2020generalized} to making the focus parameter adaptive~\citep{ghosh2022adafocal}, and even showing that focal- and log-loss arise as special cases of a broader power-series formulation~\citep{leng2022polyloss}. Related directions include class-balanced reweighting~\citep{cui2019classbalanced}, margin-based methods for long-tailed learning~\citep{cao2019ldam}, and balanced softmax corrections~\citep{ren2020balanced}, each tackling class imbalance from complementary angles. On calibration, recent work such as dual focal-loss~\citep{zhang2023dual} and entropy-based generalizations like tilted cross-entropy~\citep{lin2021tce} or $\alpha$-entmax losses~\citep{peters2019sparse} further highlight the central role of entropy-modulated objectives. Collectively, these advances situate the focal-loss within a broader family of class imbalance- and calibration-aware losses that balance sharpness, entropy, and robustness across regimes.
	
	\subsection{Paper Outline and Contributions}
	Section~\ref{sec:AnalyticalPropFL} examines analytical properties of the focal-loss and related functions, focusing on objects such as the inverse of its derivative, which is central to characterizing the focal-entropy minimizer.
	
	Section~\ref{sec:focal_entropy} develops basic functional properties of the focal-entropy. Proposition~\ref{prop: focal-entropy vs gamma} shows that $\gamma \mapsto H_\gamma(P_X,Q_X)$ is non-increasing and convex. Proposition~\ref{Prop:LimitGammaInf} characterizes its asymptotic behavior for large $\gamma$, for both finite and countably infinite supports. Proposition~\ref{prop:focal-finite-iff-cross-finite} states that the focal-entropy is finite if and only if the cross-entropy is finite. Proposition~\ref{prop:convexity-Hgamma} establishes weak lower semicontinuity and convexity of $Q \mapsto H_\gamma(P_X,Q)$. Finally, Theorem~\ref{thm:MinimizationFocalEntropy} proves the existence, uniqueness, and structure of the focal-entropy minimizer $P_\gamma^\star$ in~\eqref{eq:FEMinimizer}. For the special case of two-point support, we further derive explicit bounds on $P_\gamma^\star$, as stated in Proposition~\ref{prop:binary-focal-minimizer}.

	Section~\ref{sec:prop_optimizer} focuses on the properties of the minimizer $P_\gamma^\star$, which shed light on the ability of the focal-loss to address class imbalance. Proposition~\ref{eq:PGmmaStar0Inft} shows that $P_\gamma^\star$ converges to the uniform distribution as $\gamma \to \infty$ and provides rates of convergence. To analyze how the focal-entropy transforms $P_X$ into $P_\gamma^\star$, we study the sequence $p_{(i)} - p^\star_{(i)}$, where $p_{(i)}$ denotes the $i$-th largest entry of $P_X$ and $p^\star_{(i)}$ is the $i$-th largest entry of $P^\star_\gamma$. Theorem~\ref{thm:Signdi} establishes that this sequence has at least one and at most two sign changes. This result enables a rigorous discussion of class imbalance, including the characterization of \emph{exact thresholds} for when small–to–moderate probabilities are amplified (i.e., imbalance is reduced). 
	Furthermore, we show that under extreme class imbalance, i.e., when certain probabilities are very small, the focal-loss does not amplify these probabilities, but instead further suppresses them. We name this \emph{over-suppression regime}. This finding is particularly important for practitioners, as it highlights the need to carefully select $\gamma$ to avoid such regimes. 
	
	To better understand the over-suppression regime, we study trade-offs between the support size, $\gamma$, and the values of the data distribution $P_X$.  Proposition~\ref{prop:binary_regime} shows that for every $\gamma >0$ and support size $|\cS|=2$, the over-suppression regime is not present.  We also conjecture that this holds for $|\cS|=3$ and provide theoretical and numerical evidence supporting this conjecture. 
	Proposition~\ref{prop:SuffCond1Pa=0} and Proposition~\ref{thm:SuffCondGammaMajor} establish sufficient conditions under which the over-suppression regime does not exist for arbitrary values of $|\cS|$.
	Finally, Proposition~\ref{prop:Majorization} shows that whenever the over-suppression regime does not exist, the data distribution $P_X$ majorizes the minimizer $P_\gamma^\star$. 
    Taken together, these results characterize how the focal-entropy reshapes the data distribution. Complementing this distributional viewpoint, Proposition~\ref{prop:RelatEntropyRelation} provides an information-theoretic interpretation by expressing the focal-entropy in terms of relative entropy.
	
	Section~\ref{sec:ExperimentalValidation} validates our results on both {\em synthetic} and {\em real} data (MNIST). Finally, Section~\ref{sec:Conslusion} concludes the paper. Most proofs are in the appendix and the code for our experiments is available at \mbox{\url{https://tinyurl.com/3kjmmua9}}.

	\section{Focal-Loss: Analytical Properties}
	\label{sec:AnalyticalPropFL}
	We derive several analytical properties of the focal-loss and related functions that will be useful for the derivation of the results in the subsequent sections of the paper.
	
	In the next proposition (proof in Appendix~\ref{app:AnalyticalPropFocalLoss}), we investigate the monotonicity and convexity of the focal-loss function in~\eqref{eq:FocalLoss}.
	\begin{prop}
		\label{prop:AnalyticalPropFocalLoss}
		For $p \in (0,1)$ and $\gamma \ge 0$,  we have the following properties:
		\begin{enumerate}
			\item $\gamma \mapsto \sfL_\gamma(p)$ is strictly decreasing;
			\item $p \mapsto \sfL_\gamma(p)$ is strictly decreasing with the derivative given by 
			\begin{equation} \label{eq:firstDer}
				\sfL^{\prime}_\gamma(p) = -(1-p)^{\gamma-1} \left( \gamma \log \frac{1}{p} +\frac{1-p}{p} \right) <  0;
			\end{equation}
			\item $p \mapsto \sfL_\gamma(p)$ is strictly convex with the second derivative given by 
			\begin{align} \label{eq:secondDer}
				\sfL^{ \prime \prime}_\gamma(p) 
				& =  \frac{(1-p)^{\gamma-2}}{p^2} \left[\gamma(1-\gamma)\,p^2\log p \right . \notag
				\\& \qquad  \left. + 2\gamma\,p(1-p) + (1-p)^2\right] >0.
			\end{align}
		\end{enumerate}
	\end{prop}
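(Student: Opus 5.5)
Item 1 is immediate. For $p\in(0,1)$ we have $\log(1/p)>0$ and $0<1-p<1$. Hence $\gamma\mapsto(1-p)^\gamma=e^{\gamma\log(1-p)}$ is strictly decreasing, since $\log(1-p)<0$, and multiplying by the positive constant $\log(1/p)$ preserves strict monotonicity.

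For item 2, I will differentiate the product directly:
\[
\sfL_\gamma'(p)=-\gamma(1-p)^{\gamma-1}\log\tfrac1p-(1-p)^\gamma\tfrac1p .
\]
Factoring out $(1-p)^{\gamma-1}$ gives \eqref{eq:firstDer}. Both terms in the bracket are nonnegative, and $\tfrac{1-p}{p}>0$ on $(0,1)$. So the derivative is strictly negative, and strict monotonicity follows by the mean value theorem.

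For item 3, I will differentiate \eqref{eq:firstDer} once more. Write $\sfL'_\gamma(p)=-(1-p)^{\gamma-1}g(p)$ with $g(p)=\gamma\log\tfrac1p+\tfrac1p-1$, so that $g'(p)=-\tfrac{\gamma}{p}-\tfrac1{p^2}$. Then
\[
\sfL''_\gamma(p)=(\gamma-1)(1-p)^{\gamma-2}g(p)+(1-p)^{\gamma-1}\Bigl(\tfrac{\gamma}{p}+\tfrac{1}{p^2}\Bigr).
\]
Factoring out $(1-p)^{\gamma-2}/p^2$ leaves the bracket
\[
(\gamma-1)\bigl(-\gamma p^2\log p+p(1-p)\bigr)+(1-p)(\gamma p+1).
\]
The $\log p$ term is $\gamma(1-\gamma)p^2\log p$. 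The remaining terms are $(\gamma-1)p(1-p)+\gamma p(1-p)+(1-p)$. I will check that these combine to $2\gamma p(1-p)+(1-p)^2$; indeed $-p(1-p)+(1-p)=(1-p)^2$. This reproduces \eqref{eq:secondDer}.

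The main obstacle is positivity of the bracket
\[
B(p)=\gamma(1-\gamma)p^2\log p+2\gamma p(1-p)+(1-p)^2 .
\]
I split into cases.

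If $\gamma\in[0,1]$, then $\gamma(1-\gamma)\ge0$ and $\log p<0$, so the first term is $\le 0$ and this case is not immediate.

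If $\gamma>1$, the first term is positive and $B>0$ trivially.

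So the real work is $\gamma\in(0,1)$, where I will use the bound $-\log p\le \tfrac{1-p}{p}$. This gives $p^2\log p\ge -p(1-p)$, and hence
\[
B\ge -\gamma(1-\gamma)p(1-p)+2\gamma p(1-p)+(1-p)^2=(1-p)\bigl[\gamma(1+\gamma)p+(1-p)\bigr]>0 .
\]
At $\gamma=0$ we have $B=(1-p)^2>0$ directly.

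Thus $B>0$ for all $\gamma\ge0$ and $p\in(0,1)$. Since also $(1-p)^{\gamma-2}/p^2>0$, we get $\sfL''_\gamma>0$, which gives strict convexity.
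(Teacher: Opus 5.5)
Your proposal is correct and follows essentially the same route as the paper: direct differentiation, the immediate cases $\gamma>1$ and $\gamma=0$, and, for $0<\gamma\le 1$, the bound $\log p\ge 1-\frac{1}{p}$, which yields exactly the paper's lower bound $(1-p)\bigl[\gamma(1+\gamma)p+(1-p)\bigr]$ on the bracket. You do not list the endpoint $\gamma=1$ explicitly, but your bound covers it, since the argument only uses $\gamma(1-\gamma)\ge 0$.
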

	As it will be shown later, the inverse of $\sfL^{\prime}_\gamma$ will play an important role in solving the first-order optimality conditions for the focal-entropy minimizer.  
	\begin{figure}
		\raggedright
		\input{Inverse_focal_prime}
		\vspace{-0.4cm}
		\caption{Inverse of $\sfL^{\prime}_\gamma$ in~\eqref{eq:firstDer}.}
		\vspace{-0.5cm}
		\label{fig:Inverse_focal_prime}
	\end{figure} 
	Some of its important properties are summarized next (proof in Appendix~\ref{app:AnalyticalPropFocalLossInverse}) and also illustrated in Figure~\ref{fig:Inverse_focal_prime}.
	\begin{prop}
		\label{prop:AnalyticalPropFocalLossInverse}
		For $p \in (0,1)$, the mapping $p \mapsto  \sfL_\gamma'(p)$ has a well-defined inverse $\left( \sfL^{\prime}_\gamma \right )^{-1}: (-\infty,0) \to (0,1)$, which satisfies the following properties:
		\begin{enumerate}
			\item $\left( \sfL^{\prime}_\gamma \right )^{-1}(-\infty) = 0$ and $\left( \sfL^{\prime}_\gamma \right )^{-1}(0^-) = 1$;
			\item $t \;\mapsto\; \left( \sfL^{\prime}_\gamma \right )^{-1}(t)$ is strictly increasing;
			\item $\left( \sfL^{\prime}_\gamma \right )^{-1}(t) \geq 0$.
		\end{enumerate}
	\end{prop}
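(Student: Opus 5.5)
The plan is to obtain everything from Proposition~\ref{prop:AnalyticalPropFocalLoss}, items 2 and 3, together with the explicit formula~\eqref{eq:firstDer}. By item 3, $\sfL^{\prime\prime}_\gamma(p)>0$ on $(0,1)$. Hence $p\mapsto \sfL^{\prime}_\gamma(p)$ is continuous and strictly increasing on $(0,1)$, and so it is injective there. By item 2 its values are negative. Therefore, once I show that the range of $\sfL^{\prime}_\gamma$ on $(0,1)$ is exactly $(-\infty,0)$, the inverse $(\sfL^{\prime}_\gamma)^{-1}:(-\infty,0)\to(0,1)$ is well defined and bijective. The inverse of a continuous strictly increasing function on an interval is itself continuous and strictly increasing, which gives item 2 of the statement.

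To pin down the range, I will compute the two endpoint limits of~\eqref{eq:firstDer}. As $p\to 0^+$, the prefactor $(1-p)^{\gamma-1}\to 1$, while $\gamma\log(1/p)\ge 0$ and $(1-p)/p\to+\infty$. Hence $\sfL^{\prime}_\gamma(p)\to-\infty$.

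The limit $p\to1^-$ is the only step needing care, because $(1-p)^{\gamma-1}$ blows up when $\gamma<1$. I will write $\gamma\log(1/p)=\gamma(1-p)+o(1-p)$ and $(1-p)/p=(1-p)+o(1-p)$. Then
\begin{equation*}
\sfL^{\prime}_\gamma(p)=-(1-p)^{\gamma}\,(\gamma+1+o(1)),
\end{equation*}
which tends to $0^-$ for $\gamma>0$. For $\gamma=0$ it tends to $-1$ rather than $0$. In that case I will check directly that $\sfL^{\prime}_0(p)=-1/p$, and treat the range accordingly as $(-\infty,-1)$: either note that the statement is meant for $\gamma>0$, or extend the inverse by the value $1$ on $[-1,0)$. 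I flag this as the one delicate point in the argument.

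With both limits in hand, continuity and the intermediate value theorem show that the range is $(-\infty,0)$, for $\gamma>0$. The limits then translate into item 1 of the statement: $(\sfL^{\prime}_\gamma)^{-1}(t)\to0$ as $t\to-\infty$, and $(\sfL^{\prime}_\gamma)^{-1}(t)\to1$ as $t\to0^-$. This follows because a monotone bijection between open intervals maps endpoints to endpoints.

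Item 3 of the statement is immediate, since the codomain of the inverse is $(0,1)$, so $(\sfL^{\prime}_\gamma)^{-1}(t)>0\ge0$.

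I expect the main obstacle to be the endpoint behaviour at $p=1$ for small $\gamma$, that is, correctly handling the $(1-p)^{\gamma-1}$ singularity via the first-order expansion above. Everything else is routine monotone-inverse reasoning.
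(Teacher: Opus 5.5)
Your proof is correct and follows the paper's route: continuity and strict monotonicity of $\sfL^{\prime}_\gamma$ from $\sfL^{\prime\prime}_\gamma>0$, endpoint limits read off from~\eqref{eq:firstDer}, and monotonicity passed to the inverse, with your expansion at $p\to1^-$ supplying the detail the paper leaves to ``inspection.'' Your $\gamma=0$ flag is genuine: the range of $\sfL^{\prime}_0=-1/p$ on $(0,1)$ is $(-\infty,-1)$, which matches the paper's own closed form $-1/t$ exceeding $1$ on $(-1,0)$. Resolve it by restricting to $\gamma>0$ or by shrinking the domain to $(-\infty,-1)$ when $\gamma=0$, not by extending the inverse by the constant $1$ on $[-1,0)$, since that extension would break the strict monotonicity in item~2.
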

	The strict monotonicity of $\left( \sfL^{\prime}_\gamma \right )^{-1}$ ensures that the mapping $t \;\mapsto\; \left( \sfL^{\prime}_\gamma \right )^{-1}(t)$ is injective (see Figure~\ref{fig:Inverse_focal_prime}). 
	For $\gamma=0$ and $\gamma=1$, we have a closed-form expression for the inverse of~\eqref{eq:firstDer}, which is given by
    \begin{equation}
	\left( \sfL^{\prime}_\gamma \right )^{-1}(t) = 
	\begin{cases}
		\frac{1}{\rmW_0(\rme^{1-t})} & \gamma=1,\\
		-\frac{1}{t} & \gamma = 0.
	\end{cases}
\end{equation}
	One final function that we will require is 
	\begin{equation}
		\label{eq:PhiFunc}
		\phi_\gamma(p) = - p \,\sfL_\gamma'(p), \,  p \in (0,1). 
	\end{equation}
	The function $\phi_\gamma$  will play an important role in Section~\ref{sec:prop_optimizer}, when we study properties of the minimizer of the focal-entropy.  Analyzing its unimodality and bounds helps explain how the focal-loss redistributes probability masses of the data distribution, particularly in imbalanced settings. The next proposition (proof in Appendix~\ref{app:properties_of_phi_gamma}) analyzes properties of $\phi_\gamma$ in~\eqref{eq:PhiFunc}.
	\begin{prop} \label{prop:properties_of_phi_gamma} For all $\gamma \ge 0$, the mapping $p \mapsto \phi_\gamma(p)$ satisfies the following properties:
		\begin{enumerate}
			\item $\phi_\gamma(0^+)=1$ and $\phi_\gamma(1) =0$;
			\item  $ p\mapsto \phi_\gamma(p)$ is unimodal. More specifically, the mapping $p \mapsto \phi_\gamma(p) $ is strictly decreasing if $\gamma > \kappa(p)$ and strictly increasing if $\gamma < \kappa(p)$ where 
			\begin{equation}
				\label{eq:kappaP}
				\kappa(p) = \frac{1}{p}- \frac{2(1-p)}{p \log \left( \frac{1}{p} \right )}, \ p \in (0,1). 
			\end{equation} 
			In addition, 
			\begin{itemize}
				\item For $p \in (0,1)$, $p \mapsto \kappa(p)$ is strictly decreasing;
				\item $\kappa(p) \leq 0$ for all $p \in [\rmw,1)$, where $\rmw = -\frac{1}{2}\rmW_0 \left( - \frac{2}{\rme^2} \right ) \approx 0.2032$;
			\end{itemize}		
			\item 
			$
			\max \limits_{ p \in (0,1)} \phi_\gamma(p) \le 1+\gamma;
			$
			\item Let $p^{+}_\gamma =\arg \max_{p \in (0,1)} \phi_\gamma(p)$, then
			\begin{equation}
				p^{+}_\gamma    \leq  \min \left\{ \rmw, \frac{1}{\left( 1+ \sqrt{1+\gamma}\right )^2} \right\};
				\label{eq:UBPgammaStar}
			\end{equation}
			\item  For $u \in (0, 1/3]$, it holds that
			$    
			\phi_\gamma(u) \ge \phi_\gamma \left(\frac{1-u}{2}\right). 
			$
		\end{enumerate}
	\end{prop}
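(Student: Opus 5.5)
The plan is to rewrite $\phi_\gamma$ in closed form and reduce each item to an elementary one-variable inequality. From \eqref{eq:firstDer},
\begin{equation*}
\phi_\gamma(p) = (1-p)^{\gamma} + \gamma\, p\,(1-p)^{\gamma-1}\log\tfrac{1}{p} = (1-p)^\gamma\bigl(1+\gamma g(p)\bigr), \qquad g(p)=\frac{p\log(1/p)}{1-p}.
\end{equation*}

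\emph{Item 1.} As $p\to 0^+$ we have $(1-p)^\gamma\to 1$ and $p\log(1/p)\to 0$, so $\phi_\gamma(0^+)=1$. As $p\to 1^-$ we have $g(p)\to 1$, so $\phi_\gamma(p)\to 0$ for every $\gamma\ge 0$. This includes $\gamma<1$, because $\log(1/p)\sim 1-p$ absorbs the factor $(1-p)^{\gamma-1}$.

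\emph{Item 2.} Differentiating and collecting terms should give
\begin{equation*}
\phi_\gamma'(p)=\gamma(1-p)^{\gamma-2}\Bigl[\log\tfrac1p-2(1-p)-\gamma p\log\tfrac1p\Bigr]=\gamma(1-p)^{\gamma-2}\,p\log\tfrac1p\,\bigl(\kappa(p)-\gamma\bigr).
\end{equation*}
Hence $\operatorname{sign}\phi_\gamma'=\operatorname{sign}(\kappa(p)-\gamma)$, which is exactly the stated monotonicity rule.

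Unimodality then follows once $\kappa$ is shown to be strictly decreasing. For that I would substitute $p=\rme^{-s}$, so that $\kappa=\rme^{s}\bigl(1-2(1-\rme^{-s})/s\bigr)$, and show $\frac{\rmd}{\rmd s}\kappa>0$. This reduces to an inequality between $\rme^{s}$ and polynomials in $s$, which I expect to close by a Taylor comparison.

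The sign statement for $\kappa$ is equivalent to $\log(1/p)\le 2(1-p)$. Its roots satisfy $p=\rme^{2p-2}$, i.e.\ $-2p\,\rme^{-2p}=-2\rme^{-2}$. The branch $\rmW_{-1}$ gives the trivial root $p=1$, and $\rmW_0$ gives $\rmw$. Concavity of $\log(1/p)-2(1-p)$ in the variable $\log p$ (or a direct sign check) then shows the inequality holds on $[\rmw,1)$.

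\emph{Item 3.} I use $\log(1/p)\le (1-p)/p$. This gives
\begin{equation*}
\phi_\gamma(p)\le (1-p)^\gamma+\gamma(1-p)^\gamma=(1+\gamma)(1-p)^\gamma\le 1+\gamma .
\end{equation*}

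\emph{Item 4.} For $\gamma=0$ the claim is trivial, since $\phi_0(p)=1-p$ is decreasing. For $\gamma>0$, item 2 shows that $p^+_\gamma$ is the unique root of $\kappa(p)=\gamma$. Because $\kappa\le 0<\gamma$ on $[\rmw,1)$, we get $p^+_\gamma\le \rmw$.

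For the second bound, set $p_0=(1+\sqrt{1+\gamma})^{-2}$, so that $\gamma=1/p_0-2/\sqrt{p_0}$. Then $\kappa(p_0)<\gamma$ is equivalent to $\log x<(x-1)/\sqrt{x}$ with $x=1/p_0>1$, which is the classical logarithmic-mean inequality. Since $\kappa$ is decreasing, it follows that $p^+_\gamma< p_0$.

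\emph{Item 5 (main obstacle).} Let $v=(1-u)/2\ge u$ for $u\le 1/3$, and note $(1-u)/(1-v)=r:=2(1-u)/(1+u)\ge 1$. Using the factored form and Bernoulli's inequality $r^\gamma\ge 1+\gamma\log r$, I get
\begin{equation*}
\frac{\phi_\gamma(u)}{(1-v)^\gamma}\ge(1+\gamma\log r)(1+\gamma g(u))\ge 1+\gamma\bigl(g(u)+\log r\bigr).
\end{equation*}
It therefore suffices to prove the $\gamma$-free inequality
\begin{equation*}
g\!\left(\tfrac{1-u}{2}\right)\le g(u)+\log\frac{2(1-u)}{1+u},\qquad u\in(0,1/3].
\end{equation*}
This holds with equality at $u=1/3$, where $v=u$ and $r=1$. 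I would prove it by checking that the difference is decreasing in $u$ through a derivative computation, or, failing that, by splitting $(0,1/3]$ into a few subintervals and bounding $g$, which is increasing, by its endpoint values. This calculus verification is the step I expect to be the most delicate.

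As a fallback for the range $u\ge p^+_\gamma$: there $\phi_\gamma$ is decreasing on $[u,v]$ by item 2, which gives the claim directly.
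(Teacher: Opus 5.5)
Items 1--4 follow the paper's argument closely. For the second bound in item 4 you evaluate $\kappa$ at the test point $p_0$, while the paper substitutes $\kappa(p^+_\gamma)=\gamma$, but both rest on the same logarithmic-mean versus geometric-mean inequality. Your reduction in item 5 is also the paper's: $r^\gamma\ge 1+\gamma\log r$ followed by dropping $\gamma^2 g(u)\log r\ge 0$ is exactly the tangent-line bound $f_u(\gamma)\ge f_u(0)+\gamma f_u'(0)$ that the paper obtains from convexity of $f_u(\gamma)=r^\gamma(1+\gamma g(u))-(1+\gamma g(v))$. So everything hinges on $D(u):=g(u)+\log\frac{2(1-u)}{1+u}-g\bigl(\frac{1-u}{2}\bigr)\ge 0$ on $(0,1/3]$, and that is where your plan breaks.

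You noticed the equality at $u=1/3$, but $D$ also vanishes as $u\to 0^+$, since $g(0^+)=0$, $\log r\to\log 2$ and $g(1/2)=\log 2$.
\begin{itemize}
\item A non-increasing function with $D(0^+)=D(1/3)=0$ would be identically zero, whereas $D(0.1)\approx 0.095$. So $D$ rises and then falls, and the monotonicity check cannot succeed.
\item The subdivision fallback fails for the same reason. With the monotone bounds you propose ($g$ increasing, $r$ and $(1-u)/2$ decreasing in $u$), a last piece $[a,1/3]$ yields the lower bound $g(a)-g\bigl(\frac{1-a}{2}\bigr)<0$. A first piece $(0,b]$ yields $\log\frac{1-b}{1+b}<0$. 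This happens however fine the partition, because the inequality is tight at both ends.
\item Your other fallback ($\phi_\gamma$ decreasing on $[p^+_\gamma,1)$) only handles $u\ge p^+_\gamma$. Since $p^+_\gamma\to\rmw$ as $\gamma\to0^+$, most of $(0,\rmw)$ stays open.
\end{itemize}
The missing idea is a shape argument that uses both zeros. The paper clears denominators, setting $N(u)=(1-u^2)D(u)=2u(1-u)\log 2+2(1-u)\log(1-u)-(1-u^2)\log(1+u)-u(1+u)\log u$. It then computes $N''(u)=\frac{1-5u^2-2u(1-u^2)\log\frac{1+u}{4u}}{u(u^2-1)}$ and bounds the numerator below by $\frac12(1+3u-9u^2-3u^3)>0$ using $\log x\le x-1$. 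Hence $N$ is concave on $(0,1/3]$ with zero endpoint values, so $N\ge 0$.

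There are also two smaller slips.
\begin{itemize}
\item $\log(1/p)-2(1-p)$ is convex, not concave, in $\log p$: it equals $2\rme^{t}-t-2$ with $t=\log p$. It is convexity that makes it nonpositive between the roots $\rmw$ and $1$. More simply, $\kappa$ being decreasing with $\kappa(\rmw)=0$ already suffices.
\item Your own factored form gives $\phi_0\equiv 1$, not $1-p$. So at $\gamma=0$ the claims $\phi_0(1)=0$, strict monotonicity, and a unique maximizer $p^+_0$ all degenerate, and items 1, 2 and 4 should be read for $\gamma>0$.
\end{itemize}
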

	\section{Focal-Entropy: Functional Properties}
	\label{sec:focal_entropy}
	In this section, we develop functional properties of the focal-entropy, proving its convexity, finiteness, and the existence and uniqueness of its minimizer.
	\subsection{Focal-Entropy vs. $\gamma$}
	We here derive a few properties that highlight how the focal-entropy in~\eqref{eq:FocalEntropy} varies as a function of $\gamma$.
	The first property is directly derived from the inspection of~\eqref{eq:FocalEntropy} (see also Appendix~\ref{app: focal-entropy vs gamma}, where some additional properties of the mapping $\gamma \mapsto H_\gamma(P_X,Q_X)$ are provided).
	\begin{prop} \label{prop: focal-entropy vs gamma}
		Fix some $P_X,Q_X \in \cP(\cX)$ such that $P_X \ll Q_X$.
		For $\gamma \geq 0$, the mapping
		$\gamma \mapsto H_\gamma(P_X,Q_X)$ is non-increasing and convex. 
		Thus, we have that 
		\begin{equation}
			0 \le H_\gamma(P_X,Q_X) \le H_0(P_X,Q_X) \!= H(P_X,Q_X).
		\end{equation}
	\end{prop}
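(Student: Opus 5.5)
The approach is to argue pointwise in $x$ and then pass to the expectation. Since $P_X \ll Q_X$, every $x \in \cS$ has $Q_X(x) > 0$, so $q_x := Q_X(x) \in (0,1]$ and the sum
\begin{equation*}
H_\gamma(P_X,Q_X) = \sum_{x \in \cS} P_X(x)\,(1-q_x)^\gamma \log\frac{1}{q_x}
\end{equation*}
is well defined, with all terms nonnegative and possibly summing to $+\infty$. For a fixed $q \in (0,1]$, define $g_q(\gamma) = (1-q)^\gamma \log(1/q)$.

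First, if $q=1$, then $g_q \equiv 0$. Otherwise $g_q(\gamma) = \log(1/q)\,\rme^{\gamma \log(1-q)}$ with $\log(1-q) < 0$ and $\log(1/q) > 0$. This is a positive multiple of a decreasing exponential, so it is non-increasing and convex in $\gamma$, with $g_q'' = (\log(1-q))^2 g_q \ge 0$. This also matches item 1 of Proposition~\ref{prop:AnalyticalPropFocalLoss} for $q \in (0,1)$.

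Next, I transfer these properties to $H_\gamma$. A nonnegative combination $\sum_x P_X(x) g_{q_x}(\gamma)$ of non-increasing convex functions is non-increasing and convex. For countably infinite $\cS$, I argue on partial sums, each of which is non-increasing and convex. The pointwise limit of non-increasing functions is non-increasing, and the inequality $f(\lambda a + (1-\lambda) b) \le \lambda f(a) + (1-\lambda) f(b)$ passes to the monotone limit in $[0,\infty]$.

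The only subtle point is the value $+\infty$. By monotonicity, if $H_{\gamma_0} = \infty$ then $H_\gamma = \infty$ on $[0,\gamma_0]$. So $\{\gamma : H_\gamma < \infty\}$ is an interval of the form $[\gamma_0,\infty)$ or $(\gamma_0,\infty)$, and convexity holds in the extended sense with the convention $\infty \le \infty$. Alternatively, I can restrict to the case $H(P_X,Q_X) < \infty$, where everything is finite, and note that the statement is trivial otherwise.

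Finally, the bounds follow directly. Nonnegativity holds because each term is $\ge 0$. The upper bound follows from monotonicity, $H_\gamma \le H_0$, together with $H_0 = H(P_X,Q_X)$, which holds because $(1-q)^0 = 1$ turns $\sfL_0$ into the log-loss. None of these steps presents a real obstacle; the only point needing care is the bookkeeping for infinite sums and infinite values just described.
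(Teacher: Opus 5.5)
Your proof is correct and rests on the same observation as the paper, which reads off both properties from the signs of $\frac{\rmd}{\rmd\gamma}H_\gamma(P_X,Q_X)=\bbE[\log(1-Q_X(X))\,\sfL_\gamma(Q_X(X))]\le 0$ and $\frac{\rmd^2}{\rmd\gamma^2}H_\gamma(P_X,Q_X)=\bbE[(\log(1-Q_X(X)))^2\,\sfL_\gamma(Q_X(X))]\ge 0$, i.e., the same termwise decaying-exponential structure $g_q(\gamma)=\log(1/q)\,\rme^{\gamma\log(1-q)}$ that you use. Your partial-sum and monotone-limit bookkeeping is a mild improvement: it avoids justifying term-by-term differentiation of a possibly infinite sum and handles the value $+\infty$ explicitly, and the paper addresses neither point.
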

	The next proposition (proof in Appendix~\ref{app:LimitGammaInf}) presents a limit for the focal-entropy when the focus parameter $\gamma$ tends to infinity.
	\begin{prop}
		\label{Prop:LimitGammaInf}
		Fix some $P_X,Q_X \in \cP(\cX)$ such that $P_X \ll Q_X$ and $H(P_X,Q_X)<\infty$.  Then, it holds that
		\begin{equation}
			\lim_{\gamma \to \infty}  \!\!H_\gamma^{ \frac{1}{\gamma}}\!(P_X, Q_X) \!=\! \left \{ \begin{array}{ll} \!\!\!\max_{x \in \cS} \bar{Q}_X(x)  &  \!|\cS|\!<\!\infty,  \\
				\!\!\!1 &  \!|\cS| \!=\!\infty,
			\end{array} \right. 
			\label{eq:LimitgammaInfinity}
		\end{equation}
		where $\bar{Q}_X(x) = 1 - Q_X(x)$ and $\cS$ is defined in~\eqref{eq:SetS}.
	\end{prop}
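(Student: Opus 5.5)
Write $H_\gamma(P_X,Q_X)=\sum_{x\in\cS} P_X(x)\,\bar Q_X(x)^\gamma \log\frac{1}{Q_X(x)}$. Since $P_X\ll Q_X$, every term has $Q_X(x)>0$, and the sum is finite because $H(P_X,Q_X)<\infty$. The plan is to sandwich $H_\gamma$ between constant multiples of $M^\gamma$, where $M$ is the relevant supremum of $\bar Q_X$ over $\cS$. Taking $\gamma$-th roots then kills the constants, since $c^{1/\gamma}\to 1$ for any fixed $c\in(0,\infty)$.

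\emph{Finite support.} Set $M=\max_{x\in\cS}\bar Q_X(x)$. If $M=0$, then $Q_X(x)=1$ for the single support point, so $H_\gamma=0$ and the limit is $0=M$ trivially. Now take $M>0$.
\begin{itemize}
\item \textbf{Upper bound.} Using $\bar Q_X(x)^\gamma\le M^\gamma$ gives $H_\gamma\le M^\gamma H(P_X,Q_X)$.
\item \textbf{Lower bound.} Pick $x^\ast\in\cS$ attaining $M$. Then $Q_X(x^\ast)=1-M<1$, so $\log\frac{1}{Q_X(x^\ast)}>0$, and keeping only that term gives $H_\gamma\ge P_X(x^\ast)\,M^\gamma\log\frac{1}{Q_X(x^\ast)}$. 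The constant here is strictly positive and finite.
\end{itemize}
Taking $\gamma$-th roots and letting $\gamma\to\infty$ yields the limit $M$.

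\emph{Infinite support.} Note $\sum_{x\in\cS}Q_X(x)\le 1$ with infinitely many positive terms, so $\inf_{x\in\cS}Q_X(x)=0$. Equivalently, $\sup_{x\in\cS}\bar Q_X(x)=1$, but this supremum is not attained.
\begin{itemize}
\item \textbf{Upper bound.} Since $\bar Q_X\le 1$, we have $H_\gamma\le H(P_X,Q_X)$, hence $\limsup_{\gamma\to\infty} H_\gamma^{1/\gamma}\le 1$. (If $H(P_X,Q_X)=0$ the claim is vacuous; this cannot happen with infinite support anyway.)
\item \textbf{Lower bound.} Fix $\epsilon>0$ and choose $x_\epsilon\in\cS$ with $\bar Q_X(x_\epsilon)\ge 1-\epsilon$ and $Q_X(x_\epsilon)<1$. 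Single-term bounding as in the finite case gives $H_\gamma\ge c_\epsilon(1-\epsilon)^\gamma$ with $c_\epsilon=P_X(x_\epsilon)\log\frac{1}{Q_X(x_\epsilon)}>0$. Hence $\liminf_{\gamma\to\infty} H_\gamma^{1/\gamma}\ge 1-\epsilon$ for every $\epsilon>0$.
\end{itemize}
Combining the two bounds gives the limit $1$.

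\emph{Main obstacle.} The argument is elementary, and the only delicate point is the infinite-support lower bound. There the supremum is not attained, so the positive constant $c_\epsilon$ depends on $\epsilon$. This forces a $\liminf$ argument over a family of $\epsilon$, not one fixed bound. We also need the finiteness hypothesis $H(P_X,Q_X)<\infty$ so that the upper-bound constant is finite and its $\gamma$-th root tends to $1$.
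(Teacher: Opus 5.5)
Your proof is correct and is essentially the paper's argument. The paper writes $H_\gamma^{1/\gamma}(P_X,Q_X)$ as the $L^\gamma(\mu)$ norm of $\bar Q_X$ under the finite measure $\mu(x)=P_X(x)\log\frac{1}{Q_X(x)}$ and cites the standard fact that $L^p$ norms converge to the essential supremum. Your sandwich proves that fact directly for this discrete measure: total mass times $M^\gamma$ as the upper bound, a single atom as the lower bound, and an $\epsilon$-argument when the supremum is not attained. Your separate treatment of the degenerate case $M=0$ is also slightly more careful than the paper, which asserts that $\mu$ is supported on all of $\cS$ even though $\mu(x)=0$ whenever $Q_X(x)=1$.
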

	We highlight that, when the support size $|\cS|$ is finite, minimizing the right-hand side of~\eqref{eq:LimitgammaInfinity} over $Q_X$ yields the uniform distribution over $\cS$. A more precise statement of uniformity as $\gamma \to \infty$ will be given in Proposition~\ref{eq:PGmmaStar0Inft}.

	\subsection{Basic Functional Properties}
	We derive some basic functional properties of the focal-entropy in~\eqref{eq:FocalEntropy}. 
	We start with the next proposition (proof in Appendix~\ref{app:focal-finite-iff-cross-finite}), which gives a sufficient and necessary condition for the focal-entropy to be finite.
	\begin{prop}
		\label{prop:focal-finite-iff-cross-finite}
		Fix some $\gamma \ge 0$. Then,  $H_\gamma(P_X, Q_X) < \infty$ if and only if $H(P_X, Q_X) < \infty$.
	\end{prop}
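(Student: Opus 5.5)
One direction follows directly from Proposition~\ref{prop: focal-entropy vs gamma}. If $H(P_X,Q_X)<\infty$, then $P_X \ll Q_X$, and Proposition~\ref{prop: focal-entropy vs gamma} gives $H_\gamma(P_X,Q_X)\le H(P_X,Q_X)<\infty$. Conversely, if $H_\gamma(P_X,Q_X)<\infty$, the definition in~\eqref{eq:FocalEntropy} already forces $P_X\ll Q_X$. So $Q_X(x)>0$ for every $x\in\cS$, and it only remains to show that the series $\sum_{x\in\cS}P_X(x)\log\frac{1}{Q_X(x)}$ converges. Throughout, write $q_x=Q_X(x)$.

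For the converse, I plan to split $\cS$ according to the size of $q_x$. Let $\cS_1=\{x\in\cS: q_x\le 1/2\}$ and $\cS_2=\{x\in\cS: q_x>1/2\}$.
\begin{itemize}
\item On $\cS_2$: since $Q_X$ is a probability mass function, at most one $x$ satisfies $q_x>1/2$. So $\cS_2$ has at most one element, and its contribution $P_X(x)\log\frac{1}{q_x}\le \log 2$ is finite.
\item On $\cS_1$: here $(1-q_x)^\gamma\ge 2^{-\gamma}$, so
\begin{equation*}
\log\frac{1}{q_x}\le 2^{\gamma}\,\sfL_\gamma(q_x).
\end{equation*}
Summing against $P_X$ gives
\begin{equation*}
\sum_{x\in\cS_1}P_X(x)\log\frac{1}{q_x}\le 2^\gamma H_\gamma(P_X,Q_X)<\infty.
\end{equation*}
\end{itemize}
All terms are nonnegative, so adding the two pieces shows $H(P_X,Q_X)<\infty$. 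This argument works for a countably infinite $\cS$ as well, since only nonnegative series are involved.

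There is no serious obstacle. The only subtle point is the factor $(1-q_x)^\gamma$, which vanishes as $q_x\to1$, so $\sfL_\gamma$ cannot be bounded below by a multiple of the log-loss uniformly in $q_x$. The threshold split handles this: at most one atom can have $q_x>1/2$, and that atom contributes a bounded log-loss term. The case $\gamma=0$ is trivial because the two quantities coincide.
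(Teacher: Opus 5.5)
Your proof is correct and follows essentially the same route as the paper: the easy direction uses the monotonicity of $\gamma \mapsto H_\gamma(P_X,Q_X)$, and the converse splits the support by a threshold on $Q_X(x)$. On the small-$Q_X(x)$ part you use $(1-Q_X(x))^\gamma \ge (1-\delta)^\gamma$, and you note that the large-$Q_X(x)$ part is finite with bounded log-loss terms; the paper does this for a general $\delta>0$, and your choice $\delta=1/2$ is the special case that leaves at most one atom and gives the explicit bound $H(P_X,Q_X)\le 2^\gamma H_\gamma(P_X,Q_X)+\log 2$.
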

	For the remainder of the paper, we consider only distributions with \emph{finite} focal-entropy; see \citep{cover1999elements} for an example of infinite entropy. The next proposition (proof in Appendix~\ref{app:convexity-Hgamma}) presents additional functional properties of the focal-entropy.
	\begin{prop}
		\label{prop:convexity-Hgamma}
		For any $\gamma\ge0$ and fixed $P_X$, it holds that the mapping $Q \;\mapsto\; H_\gamma(P_X,Q)	$:
		\begin{enumerate}
			\item is weakly lower semicontinuous\footnote{ 
                A functional $F:\cP(\cX)\to(-\infty,\infty]$ is said to be
				\emph{weakly lower semicontinuous} if for every sequence $(Q_n)_{n \ge 1}$ such that
				$Q_n \to Q$ (weak convergence), we have that $F(Q)\le \liminf_{n\to\infty} F(Q_n).$}; and 
			\item is convex.  Moreover, it is \emph{strictly} convex over the probability space defined on $\cS$ in~\eqref{eq:SetS}.
		\end{enumerate}
	\end{prop}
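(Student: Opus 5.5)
The plan is to write $H_\gamma(P_X,Q)=\sum_{x\in\cS}P_X(x)\,\sfL_\gamma(Q(x))$, with the convention $\sfL_\gamma(0)=+\infty$. This convention is consistent with the definition, because $P_X\not\ll Q$ exactly when $Q(x)=0$ for some $x\in\cS$. Both claims then reduce to properties of the scalar map $p\mapsto\sfL_\gamma(p)$ on $[0,1]$ established in Proposition~\ref{prop:AnalyticalPropFocalLoss}. The only place requiring care is the countably infinite case, where the sum is infinite and $Q$ may place mass outside $\cS$.

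\textbf{Lower semicontinuity.} On a countable set $\cX$, weak convergence $Q_n\to Q$ implies pointwise convergence $Q_n(x)\to Q(x)$ for every $x$, since indicators of singletons are bounded continuous functions on the discrete space. The extended map $\sfL_\gamma:[0,1]\to[0,\infty]$ is continuous, including at $p=0$, where $\sfL_\gamma(p)\to+\infty$ because $(1-p)^\gamma\to1$ and $\log(1/p)\to\infty$. Hence $P_X(x)\sfL_\gamma(Q_n(x))\to P_X(x)\sfL_\gamma(Q(x))$ in $[0,\infty]$ for every $x\in\cS$. All terms are nonnegative, so Fatou's lemma applied to the counting measure weighted by $P_X$ gives
\[
H_\gamma(P_X,Q)=\sum_{x\in\cS}\liminf_{n}P_X(x)\sfL_\gamma(Q_n(x))\le\liminf_n H_\gamma(P_X,Q_n).
\]
The case $P_X\not\ll Q$ is covered automatically. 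If $Q(x_0)=0$ for some $x_0\in\cS$, then the single term at $x_0$ already forces $\liminf_n H_\gamma(P_X,Q_n)=\infty$.

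\textbf{Convexity.} Take $Q_0,Q_1$ and $\lambda\in(0,1)$, and set $Q_\lambda=(1-\lambda)Q_0+\lambda Q_1$. If either $H_\gamma(P_X,Q_0)$ or $H_\gamma(P_X,Q_1)$ is infinite, the inequality is trivial. Otherwise $Q_0(x),Q_1(x)>0$ on $\cS$, so $Q_\lambda(x)>0$ on $\cS$ as well. The extended $\sfL_\gamma$ is convex on $[0,1]$: Proposition~\ref{prop:AnalyticalPropFocalLoss} gives strict convexity on $(0,1)$, and this extends to the closed interval by continuity, with the value at $0$ equal to $+\infty$ and the value at $1$ equal to $0$. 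Hence
\[
\sfL_\gamma(Q_\lambda(x))\le(1-\lambda)\sfL_\gamma(Q_0(x))+\lambda\sfL_\gamma(Q_1(x))
\]
for every $x$. Multiplying by $P_X(x)$ and summing over $\cS$ preserves the inequality, since the sums converge absolutely and all terms are nonnegative.

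\textbf{Strict convexity on $\cS$.} Restrict to $Q_0\neq Q_1$ supported on $\cS$ with finite focal-entropy. Because both sum to one on $\cS$, they must differ at some $x_0\in\cS$. Strict convexity of $\sfL_\gamma$ on $(0,1]$ then makes the inequality strict at $x_0$. If one of the values is $1$, we still have strict convexity on $(0,1]$, because $\sfL_\gamma$ is continuous at $1$ and strictly convex on the interior. Since $P_X(x_0)>0$ and all other terms satisfy the weak inequality, summation yields a strict inequality for the total.

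This is also why strictness is claimed only on $\cS$. Two distributions that agree on $\cS$ but differ outside it give the same focal-entropy, so strict convexity fails on all of $\cP(\cX)$.

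The main obstacle is the limiting behaviour in the infinite-support case: making sure pointwise convergence suffices and that no mass escaping to infinity can lower the limit. Fatou's lemma handles this, because $\liminf$ only needs a lower bound and all summands are nonnegative.
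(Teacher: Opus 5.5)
Your proposal is correct and follows the same route as the paper. Weak lower semicontinuity comes from Fatou's lemma applied to the nonnegative summands, and (strict) convexity comes from summing the termwise (strict) convexity of $p\mapsto\sfL_\gamma(p)$ from Proposition~\ref{prop:AnalyticalPropFocalLoss} with weights $P_X(x)$. You also supply details the paper leaves implicit: pointwise convergence from weak convergence, the convention $\sfL_\gamma(0)=+\infty$ to cover $P_X\not\ll Q$, the endpoint $p=1$, and why strictness holds only over distributions on $\cS$.
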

	\begin{rem}
		Proposition~\ref{prop:convexity-Hgamma} shows that the focal-entropy is weakly lower semicontinuous. We note that, in general, the focal-entropy is not weakly continuous.
        To see this, let $\cX=\mathbb{N}$ and consider the distribution
\begin{equation}
	P_X(k) = 2^{-k}, \quad k=1,2,\dots.	
\end{equation}
Now, for each $n$, define
\begin{equation}
	Q_n(k)=
	\begin{cases}
		P_X(k) & k<n,\\
		\sum \limits_{j=n}^\infty P_X(j) & k=n,\\
		0 & k>n.
	\end{cases}	
\end{equation}
Then, for every fixed $k$, $Q_n(k)\to P_X(k)$ as $n\to\infty$, so $Q_n$ converges weakly to $P_X$. However, the fact that $Q_n(k)=0$ for all $k>n$ with $P_X(k)>0$, leads to
$H_\gamma(P_X,Q_n)=\infty$, whereas
$H_\gamma(P_X,P_X)<\infty$.  This example shows that $H_\gamma$ is not weakly continuous.
	\end{rem}
	In~\citep{mukhoti2020calibrating}, the authors provided an inequality between the focal-entropy and the cross-entropy, which allows certain learning conclusions to be made. In Proposition~\ref{prop:MinFocalCrossRel}, we will provide a novel relationship that bounds the difference between their infimum values.
	\subsection{Optimization of the Focal-Entropy}
	\label{sec:OptFocalEntropy}
	The minimum of the cross-entropy in~\eqref{eq:CrossEntropy} (see~\eqref{eq:Gibbs}) follows from the non-negativity of the relative entropy, or equivalently Gibbs’ inequality. We now show a similar result for the focal-entropy. 
	
	\begin{theorem} 
		\label{thm:MinimizationFocalEntropy}
		Fix some $P_X \in \cP(\cX)$ with support $\cS$ as in~\eqref{eq:SetS} and $\gamma \ge 0$.  Then, there exists  a unique $0<\alpha_\gamma^\star <\infty$ such that 
		\begin{equation} \label{eq:optimizer}
			P^\star_\gamma(x)
			=\begin{cases}
				(\mathsf L_\gamma')^{-1} \left (-\tfrac{\alpha_\gamma^\star}{P_X(x)}\right )&x\in\cS,\\
				0&x\notin\cS,
			\end{cases}
		\end{equation}
		is a valid distribution.
		Moreover,  for all $Q_X \in \cP(\cX)$, it holds~that
		\begin{align}
			H_\gamma ( P_X, Q_X) 
			&\!\ge \!   H_\gamma ( P_X, P^\star_\gamma) \!+\! \alpha_\gamma^\star \! \left(\!1 \!-\!\! \sum_{x \in \cS} \! Q_X(x) \!\right)\\
			&\geq  H_\gamma ( P_X, P^\star_\gamma),
		\end{align}
		and $P^\star_\gamma = \arg \min_{Q_X} H_\gamma(P_X,Q_X)$
		is the unique minimizer for the focal-entropy. 	
	\end{theorem}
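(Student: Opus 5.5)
The argument has two parts: construct $\alpha^\star_\gamma$ by a monotonicity/intermediate-value argument, then prove the lower bound by a pointwise tangent-line (Lagrangian) inequality coming from the convexity in Proposition~\ref{prop:AnalyticalPropFocalLoss}.

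\emph{Existence and uniqueness of $\alpha^\star_\gamma$.} Define
\[
G(\alpha)=\sum_{x\in\cS} g_x(\alpha), \qquad g_x(\alpha)=(\sfL_\gamma')^{-1}\!\left(-\alpha/P_X(x)\right), \qquad \alpha\in(0,\infty).
\]
By Proposition~\ref{prop:AnalyticalPropFocalLossInverse}, each $g_x$ is continuous and strictly decreasing, takes values in $(0,1)$, and satisfies $g_x(0^+)=1$ and $g_x(\infty)=0$.

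\textbf{Finite support.} Here $G$ is continuous and strictly decreasing, with $G(0^+)=|\cS|\ge 1$ and $G(\infty)=0$. Hence there is a unique $\alpha^\star$ with $G(\alpha^\star)=1$ when $|\cS|\ge2$. The case $|\cS|=1$ is degenerate: then $P^\star_\gamma=\delta_x$, the loss is $0$, and we take $\alpha^\star=-\sfL_\gamma'(1)$, which equals $1$ for $\gamma=0$. This case may need separate treatment at the boundary.

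\textbf{Countable support.} The main obstacle is showing that $G(\alpha)<\infty$ for large $\alpha$ and that $G$ is continuous. From \eqref{eq:firstDer}, $-\sfL'_\gamma(q)\ge (1-q)^{\gamma}/q$ (the $\gamma\log(1/q)$ term is nonnegative), and the left side is decreasing in $q$. So if $q=g_x(\alpha)\le 1/2$, then $\alpha/P_X(x)\ge 2^{-\gamma}/q$, i.e.
\[
q\le 2^{-\gamma}P_X(x)/\alpha .
\]
Once $\alpha$ is large enough that $\alpha/P_X(x)$ exceeds $-\sfL'_\gamma(1/2)$, every $g_x(\alpha)\le 1/2$. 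This holds uniformly in $x$ because $P_X(x)\le 1$. Therefore
\[
G(\alpha)\le 2^{-\gamma}/\alpha<\infty .
\]
Dominated convergence then gives continuity of $G$ and $G(\infty)=0$. Because $G$ is monotone, it is finite for all $\alpha>0$ (by the same domination, $g_x(\alpha)\le 2^{-\gamma}P_X(x)/\alpha$ for all but finitely many $x$). Finally, $G(0^+)=\infty>1$ by monotone convergence. The intermediate value theorem plus strict monotonicity yields a unique $\alpha^\star$ with $G(\alpha^\star)=1$.

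\emph{Lower bound.} Fix $Q_X$ with $P_X\ll Q_X$; otherwise $H_\gamma=\infty$ and there is nothing to prove. For $x\in\cS$ let $p^\star=P^\star_\gamma(x)$ and $q=Q_X(x)$. Convexity of $\sfL_\gamma$ (Proposition~\ref{prop:AnalyticalPropFocalLoss}) gives the tangent-line inequality
\[
\sfL_\gamma(q)\ge \sfL_\gamma(p^\star)+\sfL'_\gamma(p^\star)(q-p^\star).
\]
By construction $\sfL'_\gamma(p^\star)=-\alpha^\star/P_X(x)$. Multiplying by $P_X(x)$ and summing over $\cS$ gives
\[
H_\gamma(P_X,Q_X)\ge H_\gamma(P_X,P^\star_\gamma)-\alpha^\star\sum_{x\in\cS}\bigl(Q_X(x)-P^\star_\gamma(x)\bigr).
\]
Since $\sum_{x\in\cS}P^\star_\gamma(x)=1$, this is exactly the claimed bound. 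Two technical points need checking:
\begin{itemize}
\item The sums split legitimately. $H_\gamma(P_X,P^\star_\gamma)$ is finite by Proposition~\ref{prop:focal-finite-iff-cross-finite}, because $P^\star_\gamma(x)\ge c\,P_X(x)$ for small masses. That follows from $(\sfL_\gamma')^{-1}(t)\gtrsim -1/t$, which comes from $-\sfL'_\gamma(q)\le (1+\gamma)/q$ via Proposition~\ref{prop:properties_of_phi_gamma}.
\item The linear terms are absolutely summable, since both $\sum_{x\in\cS} Q_X(x)$ and $\sum_{x\in\cS} P^\star_\gamma(x)$ are at most $1$.
\end{itemize}
The second inequality of the theorem follows because $\sum_{x\in\cS}Q_X(x)\le 1$.

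\emph{Uniqueness of the minimizer.} Suppose $H_\gamma(P_X,Q_X)=H_\gamma(P_X,P^\star_\gamma)$. Then both inequalities above are equalities. Equality in the second forces $Q_X(\cS)=1$. Equality in the first forces the tangent-line inequality to be tight for every $x\in\cS$, and strict convexity of $\sfL_\gamma$ then gives $Q_X(x)=P^\star_\gamma(x)$ for each $x\in\cS$. Hence $Q_X=P^\star_\gamma$.
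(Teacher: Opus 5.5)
Your route is the paper's: the normalization function $G$ (the paper's $F$) is continuous and strictly decreasing, so the intermediate value theorem gives $\alpha^\star_\gamma$. The tangent-line inequality for the convex $\sfL_\gamma$, weighted by $P_X(x)$, gives the bound, and strict convexity gives uniqueness. You add care the paper skips: countable $\cS$, finiteness of $H_\gamma(P_X,P^\star_\gamma)$ before subtracting, and an equality-case argument that also disposes of competitors with $Q_X(\cS)<1$. Your flag on $|\cS|=1$ is justified. For $\gamma>0$ we have $\sfL_\gamma(q)\sim(1-q)^{1+\gamma}$ as $q\to1$, so no $\alpha^\star_\gamma>0$ can satisfy the inequality, and the paper's intermediate-value step tacitly needs $F(0^+)=|\cS|>1$.

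However, the two auxiliary bounds behind that extra care point the wrong way.
\begin{itemize}
\item From $\alpha/P_X(x)=-\sfL_\gamma'(q)\ge(1-q)^\gamma/q\ge 2^{-\gamma}/q$ with $q=g_x(\alpha)$, you get $\alpha q\ge 2^{-\gamma}P_X(x)$. That is a \emph{lower} bound on $g_x(\alpha)$, not an upper bound.
\item Your claimed $G(\alpha)\le 2^{-\gamma}/\alpha$ is in fact false for $\gamma>0$. For $\gamma=1$, $\alpha g_x(\alpha)/P_X(x)=1-q+q\log(1/q)\to 1$ as $\alpha\to\infty$, so $\alpha G(\alpha)\to 1>1/2$.
\item Likewise, $-\sfL_\gamma'(q)\le(1+\gamma)/q$ gives the \emph{upper} bound $(\sfL_\gamma')^{-1}(t)\le(1+\gamma)(-1/t)$, not the lower bound you want for $P^\star_\gamma(x)\ge cP_X(x)$.
\end{itemize}

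The fix is to swap the roles of the two inequalities.
\begin{itemize}
\item $\phi_\gamma\le 1+\gamma$ (Proposition~\ref{prop:properties_of_phi_gamma}) gives $g_x(\alpha)\le(1+\gamma)P_X(x)/\alpha$ for every $x$ and every $\alpha>0$. Hence $G(\alpha)\le(1+\gamma)/\alpha$, which gives finiteness, a summable dominating function on $[\alpha_0,\infty)$, and $G(\infty)=0$ at once. It also removes the ``all but finitely many'' detour, where the monotonicity of $G$ points the wrong way anyway.
\item $(1-q)^\gamma/q\le-\sfL_\gamma'(q)$ gives $P^\star_\gamma(x)\ge 2^{-\gamma}P_X(x)/\alpha^\star_\gamma$ whenever $P^\star_\gamma(x)\le 1/2$, i.e.\ for all but at most one $x$.
\end{itemize}

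That last bound yields $H(P_X,P^\star_\gamma)<\infty$ only if $H(P_X)<\infty$. State this assumption explicitly; it is the paper's standing one, and it is genuinely needed. If $H(P_X)=\infty$, then $H(P_X,Q)\ge H(P_X)$ together with Proposition~\ref{prop:focal-finite-iff-cross-finite} forces $H_\gamma(P_X,Q)=\infty$ for every $Q$, and uniqueness of the minimizer fails.
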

	\begin{proof}
			For $P^\star_\gamma$ in~\eqref{eq:optimizer} to be a valid distribution, the constant $\alpha_\gamma^\star$ must be a positive root of the equation $F(\alpha_\gamma^\star ) = 1$, where
			\begin{equation} \label{eq:normalization}
				F(\alpha) = \sum_{x \in \cS}(\sfL^\prime_\gamma)^{-1} \left(-\tfrac{\alpha}{P_X(x)}\right), \quad \alpha >0.	
			\end{equation}
			From Proposition~\ref{prop:AnalyticalPropFocalLossInverse}, each summand in~\eqref{eq:normalization} is continuous and strictly decreasing in $\alpha>0$ having values on $(0,1)$. Thus, $\alpha \mapsto F(\alpha)$ is continuous and strictly decreasing with $F(0^+)\!=\!|\cS|$ and $F(+\infty)\!=\!0$ (see Proposition~\ref{prop:AnalyticalPropFocalLossInverse}).  
			Then, by the intermediate value theorem, there is a unique positive root $\alpha^\star_\gamma$, yielding a valid~$P^\star_\gamma$.  
			
			To show the second part of the theorem, recall from Proposition~\ref{prop:AnalyticalPropFocalLoss} that $p \mapsto \sfL_\gamma\left(p\right)$ is strictly convex and so it can be lower bounded by a tangent line as follows,
			\begin{align}
				&  \sfL_\gamma\left(Q(X)\right)  \\
				&\ge     \sfL_\gamma\left(P_\gamma^\star\left(X\right) \right)  \!+\! \sfL_\gamma' \left( P_\gamma^\star\left(X\right) \right) \left( Q(X) - P_\gamma^\star\left(X\right)  \right)  \\
				& =  \sfL_\gamma\left(P_\gamma^\star \left(X\right) \right)  - \tfrac{\alpha_\gamma^\star}{P_X(X)} \left( Q(X) - P_\gamma^\star\left(X\right) \right ). 
			\end{align}
			Consequently, for any choice of $Q$, we arrive at
			\begin{align}
				&H_\gamma ( P_X, Q) \\
				&=   \bbE \left [ \sfL_\gamma\left(Q\left(X\right)\right) \right ] \\
				&\ge  \bbE[ \sfL_\gamma\left(P_\gamma^\star\left(X\right) \right)] \!-\! \bbE \left[ \tfrac{\alpha_\gamma^\star}{P_X(X)} \left( Q(X) \!-\! P_\gamma^\star\left(X\right) \right ) \right]\\
				&=  \bbE[ \sfL_\gamma\left(P_\gamma^\star\left(X\right) \right)] - \alpha_\gamma^\star  \left( \sum_{x \in \cS} Q(x)  -1 \right)  \\
				& \geq   \bbE[ \sfL_\gamma\left(P_\gamma^\star\left(X\right) \right)]\label{eq:EqualityPropositionSupportS} \\
				& = H_\gamma ( P_X, P^\star_\gamma),
			\end{align}
            where all the expectations are taken with respect to $X \sim P_X$.
			The fact that $P^\star_\gamma$ in~\eqref{eq:optimizer} is the unique minimizer for the focal-entropy is due to the convexity result in Proposition~\ref{prop:convexity-Hgamma}. 
		\end{proof}

	\begin{figure}
		\raggedright
		\begin{tikzpicture}

\definecolor{darkgray176}{RGB}{176,176,176}
\definecolor{darkslateblue48103141}{RGB}{48,103,141}
\definecolor{gold25323136}{RGB}{253,231,36}
\definecolor{indigo68184}{RGB}{68,1,84}
\definecolor{lightgray204}{RGB}{204,204,204}
\definecolor{mediumseagreen53183120}{RGB}{53,183,120}

\tikzset{barborder/.style={draw=black, line width=0.3pt}}

\begin{axis}[
width=8.4cm,
height=4.3cm,
legend cell align={left},
legend style={
  at={(0,1)}, anchor=north west,
  fill=none, draw=none,
  font=\footnotesize, inner sep=1pt, row sep=-1pt,
  legend image post style={xscale=0.6,yscale=0.6}
},
tick align=outside,
tick pos=left,
x grid style={dashed, gray!50},
xmajorgrids,
xmin=-0.44, xmax=2.64,
xtick style={color=black},
xtick={0,1,2},
xticklabels={$x_1$,$x_2$,$x_3$},
y grid style={dashed, gray!50},
ymajorgrids,
ymin=0, ymax=0.485235539718907,
ytick style={color=black}
]
\draw[barborder,fill=mediumseagreen53183120,fill opacity=1] (axis cs:-0.3,0) rectangle (axis cs:-0.1,0.1820587756164);
\addlegendimage{area legend,draw=black,fill=mediumseagreen53183120}
\addlegendentry{$P_X$}

\draw[barborder,fill=mediumseagreen53183120,fill opacity=1] (axis cs:0.7,0) rectangle (axis cs:0.9,0.462129085446578);
\draw[barborder,fill=mediumseagreen53183120,fill opacity=1] (axis cs:1.7,0) rectangle (axis cs:1.9,0.355812138937022);

\draw[barborder,fill=gold25323136,fill opacity=1] (axis cs:-0.1,0) rectangle (axis cs:0.1,0.192363988704074);
\addlegendimage{area legend,draw=black,fill=gold25323136}
\addlegendentry{$ P_{\gamma=0.5}^\star$}

\draw[barborder,fill=gold25323136,fill opacity=1] (axis cs:0.9,0) rectangle (axis cs:1.1,0.447518135736118);
\draw[barborder,fill=gold25323136,fill opacity=1] (axis cs:1.9,0) rectangle (axis cs:2.1,0.360117875560263);

\draw[barborder,fill=indigo68184,fill opacity=1] (axis cs:0.1,0) rectangle (axis cs:0.3,0.205600699256593);
\addlegendimage{area legend,draw=black,fill=indigo68184}
\addlegendentry{$P_{\gamma=1}^\star $}

\draw[barborder,fill=indigo68184,fill opacity=1] (axis cs:1.1,0) rectangle (axis cs:1.3,0.433150548776666);
\draw[barborder,fill=indigo68184,fill opacity=1] (axis cs:2.1,0) rectangle (axis cs:2.3,0.361248751966286);

\draw[barborder,fill=darkgray176,fill opacity=1] (axis cs:0.3,0) rectangle (axis cs:0.5,0.230977415445977);
\addlegendimage{area legend,draw=black,fill=darkgray176}
\addlegendentry{$P_{\gamma=2}^\star $}

\draw[barborder,fill=darkgray176,fill opacity=1] (axis cs:1.3,0) rectangle (axis cs:1.5,0.410466794963668);
\draw[barborder,fill=darkgray176,fill opacity=1] (axis cs:2.3,0) rectangle (axis cs:2.5,0.3585557895899);
\end{axis}

\end{tikzpicture}
		\vspace{-0.4cm}
		\caption{Example of how a $P_X$ with $|\cS| =3$ is transformed into $P^\star_\gamma$ in~\eqref{eq:optimizer} for different values of $\gamma$.}
		\vspace{-0.4cm}
		\label{fig:P_gamma_start}
	\end{figure}
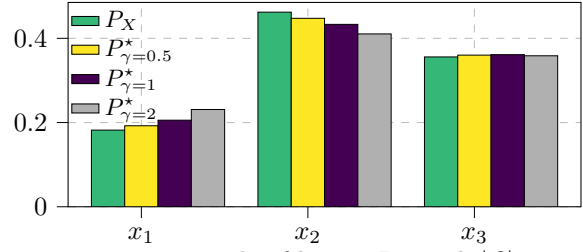
	Figure~\ref{fig:P_gamma_start} illustrates how a distribution $P_X$ with $|\cS|=3$ is transformed into $P^\star_\gamma$ in~\eqref{eq:optimizer} for different values of $\gamma$. The smallest probability (at $x=x_1$) is amplified, while the largest (at $x=x_2$) is reduced. Intuitively, the transformation shifts the mass from high- to low-probability outcomes, mitigating imbalance. Section~\ref{sec:prop_optimizer} gives a complete characterization of which probabilities are amplified or suppressed.

	Some immediate consequences of Theorem~\ref{thm:MinimizationFocalEntropy} (with Proposition~\ref{prop:AnalyticalPropFocalLossInverse}) are provided in the next corollary (proof in Appendix~\ref{app:ConseqTheorem1}).
	\begin{corollary}
		\label{cor:ConseqTheorem1}
		The distribution $P^\star_\gamma$ in Theorem~\ref{thm:MinimizationFocalEntropy} satisfies the following properties:
		\begin{enumerate}
			\item $P^\star_\gamma \ll P_X$;
			\item For any $x_1,x_2 \in \cS$
			\begin{equation}
				P_X(x_1) \ge P_X(x_2)
				\ \Longrightarrow \ 
				P^\star_\gamma(x_1) \ge P^\star_\gamma(x_2);	
			\end{equation}
			\item $P_\gamma^\star = P_X$ if and only if $\gamma=0$ or $P_X$ is a uniform distribution.
		\end{enumerate}
	\end{corollary}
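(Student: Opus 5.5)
All three properties will be read off the explicit form~\eqref{eq:optimizer} of $P^\star_\gamma$ in Theorem~\ref{thm:MinimizationFocalEntropy}, combined with the monotonicity of $(\sfL_\gamma')^{-1}$ from Proposition~\ref{prop:AnalyticalPropFocalLossInverse}.

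\textbf{Item 1.} This is immediate. By~\eqref{eq:optimizer}, $P^\star_\gamma(x)=0$ for every $x\notin\cS$, which is exactly absolute continuity with respect to $P_X$. In fact, item~3 of Proposition~\ref{prop:AnalyticalPropFocalLossInverse}, together with the range $(0,1)$ of the inverse, shows that $P^\star_\gamma(x)>0$ on $\cS$. So the supports coincide.

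\textbf{Item 2.} Take $x_1,x_2\in\cS$ with $P_X(x_1)\ge P_X(x_2)$. Since $\alpha^\star_\gamma>0$, we get $-\alpha^\star_\gamma/P_X(x_1)\ge -\alpha^\star_\gamma/P_X(x_2)$. Because $t\mapsto(\sfL_\gamma')^{-1}(t)$ is strictly increasing, $P^\star_\gamma(x_1)\ge P^\star_\gamma(x_2)$. The same argument with strict inequalities gives strict order preservation.

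\textbf{Item 3, direction ``if''.} For $\gamma=0$ we have $(\sfL_0')^{-1}(t)=-1/t$, so $P^\star_0(x)=P_X(x)/\alpha^\star_0$ on $\cS$. Normalization forces $\alpha^\star_0=1$, hence $P^\star_0=P_X$; this is also just Gibbs' inequality~\eqref{eq:Gibbs}. If $P_X$ is uniform on $\cS$, then $\cS$ must be finite, say $|\cS|=n$. All arguments $-\alpha/P_X(x)$ are then equal, so $P^\star_\gamma$ is constant on $\cS$ and vanishes off it, which makes it the uniform distribution, i.e.\ $P_X$.

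\textbf{Item 3, direction ``only if''.} This is the main step. Suppose $P^\star_\gamma=P_X$. Then for all $x\in\cS$,
\begin{equation}
\sfL_\gamma'(P_X(x))=-\frac{\alpha^\star_\gamma}{P_X(x)},
\end{equation}
that is, $\phi_\gamma(P_X(x))=\alpha^\star_\gamma$ is constant on $\cS$, where $\phi_\gamma$ is as in~\eqref{eq:PhiFunc}. Explicitly,
\begin{equation}
\phi_\gamma(p)=(1-p)^{\gamma-1}\bigl(\gamma p\log\tfrac1p+1-p\bigr).
\end{equation}
For $\gamma=0$ this gives $\phi_0\equiv1$, consistent with the first case. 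For $\gamma>0$, we must show that $\phi_\gamma(p)=\phi_\gamma(q)$ for all values $p,q$ taken by $P_X$ forces all these values to be equal.

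Unimodality (item~2 of Proposition~\ref{prop:properties_of_phi_gamma}) only guarantees that each level set has at most two points. So a priori $P_X$ could take two distinct values $p<q$ with $\phi_\gamma(p)=\phi_\gamma(q)$, and these would have to satisfy $m p+kq=1$ with multiplicities $m,k\ge1$.

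Rather than exclude this configuration through $\phi_\gamma$, I would argue via uniqueness and the structure of the first-order condition. The cleanest route avoids level sets entirely. If $P_X$ is not uniform, pick $x_1,x_2\in\cS$ with $P_X(x_1)>P_X(x_2)$. Then show directly that the perturbation $Q=P_X+\epsilon(\delta_{x_2}-\delta_{x_1})$ strictly decreases $H_\gamma(P_X,\cdot)$ for small $\epsilon>0$. The directional derivative at $\epsilon=0$ is
\begin{equation}
P_X(x_2)\sfL_\gamma'(P_X(x_2))-P_X(x_1)\sfL_\gamma'(P_X(x_1))=\phi_\gamma(P_X(x_1))-\phi_\gamma(P_X(x_2)),
\end{equation}
so this reduces to the same question: whether $\phi_\gamma(P_X(x_1))<\phi_\gamma(P_X(x_2))$.

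It therefore suffices to show that $\phi_\gamma$ is strictly decreasing on the relevant range, or to handle the two-level case by hand. This is exactly where I expect the difficulty. I would try to exploit the constraint $mp+kq=1$ with $q>p$: this forces $q\le 1-p$, and when $p\le1/3$ item~5 of Proposition~\ref{prop:properties_of_phi_gamma} compares $\phi_\gamma(p)$ with $\phi_\gamma((1-p)/2)$, which should rule out equality of the two levels. The remaining case, with both values in the decreasing region beyond $p^+_\gamma$, is handled by strict monotonicity there. This case analysis is the step I expect to be the main obstacle.
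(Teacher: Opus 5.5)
Items 1 and 2 and the ``if'' half of item 3 are correct, and items 1--2 are argued exactly as in the paper. The gap is the ``only if'' half, where you expected trouble, and it cannot be closed. Your item-5 idea only handles $q\ge\tfrac{1-p}{2}$ (e.g.\ $m=1$, $k=2$). For $k\ge 3$ one has $q=\tfrac{1-mp}{k}<\tfrac{1-p}{2}$, and on the decreasing branch this gives $\phi_\gamma(q)>\phi_\gamma\bigl(\tfrac{1-p}{2}\bigr)$, so comparing with $\phi_\gamma\bigl(\tfrac{1-p}{2}\bigr)$ says nothing.

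In fact the two-level configuration really occurs, so the claim is false for $|\cS|=4$. Fix $\gamma\in(0,1]$ and set $D(q)=\phi_\gamma(1-3q)-\phi_\gamma(q)$ on $[\tfrac14,\tfrac13)$.
\begin{itemize}
\item At the left end, $D(\tfrac14)=0$ and $D'(\tfrac14)=-4\phi_\gamma'(\tfrac14)>0$, because $\tfrac14>\rmw$ lies where $\phi_\gamma$ is strictly decreasing. So $D>0$ just to the right of $\tfrac14$.
\item At the right end, $D(q)\to 1-\phi_\gamma(\tfrac13)$ as $q\uparrow\tfrac13$. Here $\phi_\gamma(\tfrac13)=(\tfrac23)^\gamma\bigl(1+\tfrac{\gamma}{2}\log 3\bigr)>1$, since its logarithm is concave in $\gamma$, vanishes at $0$, and is about $0.032$ at $\gamma=1$. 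So $D<0$ near $\tfrac13$.
\end{itemize}
Hence $D(q^\ast)=0$ for some $q^\ast\in(\tfrac14,\tfrac13)$. The distribution $P_X=(1-3q^\ast,q^\ast,q^\ast,q^\ast)$ is non-uniform, yet $\phi_\gamma$ takes a single value $c$ on its values. Then $(\sfL_\gamma')^{-1}(-c/P_X(x))=P_X(x)$ for every $x$, so $F(c)=1$, hence $c=\alpha^\star_\gamma$ and $P^\star_\gamma=P_X$.

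Concretely, take $\gamma=1$, where $\phi_1(p)=1-p+p\log\tfrac1p$. One finds $D(0.330)\approx 1.9\times10^{-4}$ and $D(0.331)\approx -7.2\times10^{-3}$. So $P_X\approx(0.0099,\,0.33,\,0.33,\,0.33)$ is a non-uniform fixed point of $P_X\mapsto P^\star_1$.

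The paper's own proof has the same hole. It writes the fixed-point condition as $\alpha^\star_\gamma=-P_X(x)\sfL_\gamma'(P_X(x))=\phi_\gamma(P_X(x))$ and then invokes strict monotonicity of $\sfL_\gamma'$. That says nothing about injectivity of the unimodal map $\phi_\gamma$. Your remark that its level sets can contain two points is the correct diagnosis. The example above is exactly the boundary case $p_{\min}=p_{\gamma,a}$, $p_{\max}=p_{\gamma,b}$ of the over-suppression regime.

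What your argument does establish is a restricted version of the ``only if'' direction:
\begin{itemize}
\item For $|\cS|=2$, using $\phi_\gamma<1$ on $[\tfrac12,1)$ as in the proof of Proposition~\ref{prop:binary_regime}.
\item For $|\cS|=3$, where the only two-level patterns are $(p,q,q)$ and $(p,p,q)$. The first is excluded by item~5, which its proof shows is strict for $p<\tfrac13$ and $\gamma>0$. The second has $q=1-2p>\tfrac{1-p}{2}$ and is excluded by item~5 plus strict decrease on $[\rmw,1)$.
\item For any finite $\cS$ whenever $\gamma>\kappa(p_{\min})$, since then $\phi_\gamma$ is injective on $[p_{\min},1)$.
\end{itemize}
Item~3 needs a hypothesis of this type to be true.
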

	\begin{rem}
		It is useful to think of the mapping $P_X \mapsto P_\gamma^\star$ as an operator. The inverse of this operator is  well-defined and has a closed-form expression given~by 
		\begin{align}
			P_X(x) &=  \frac{\beta}{\sfL_\gamma'( P_\gamma^\star(x) ) }, \ x \in \cS,  \label{eq:Inverse}
			\\ \beta = & \left(    \sum_{v \in \cS} \frac{1}{\sfL_\gamma'( P_\gamma^\star(v) ) }   \right)^{-1}.  
		\end{align}
		In some cases, this inverse operator is easier to handle due to its closed form. This view relates to~\citep[Theorem~11]{charoenphakdee2021focal}, which derived~\eqref{eq:Inverse} for finite alphabets via Lagrangian duality. In contrast, Theorem~\ref{thm:MinimizationFocalEntropy} extends the result to countably infinite alphabets and gives a more geometric proof based on the inverse of the focal-loss derivative. Our setting is also more natural in practice: since the optimizer $P_\gamma^\star$ is usually unknown, one typically works with the forward mapping $P_X \mapsto P_\gamma^\star$.
    \end{rem}
The minimizer in~\eqref{eq:optimizer} depends on $\alpha_\gamma^\star$ and hence, finding bounds on $\alpha_\gamma^\star$ helps to reduce the search space. The next proposition (proof in Appendix~\ref{app:BoundsAlpha}) provides bounds on $\alpha_\gamma^\star$, which will be leveraged for theoretical derivations in the remainder of the paper.
	\begin{prop}
		\label{prop:BoundsAlpha}
		For all $\gamma \ge 0$, it holds that
		\begin{equation}
			\alpha_\gamma^\star \le \max_{t \in (0,1)} \phi_\gamma(t) \le 1+\gamma. \label{eq:alpha_bound_max_phi}  
		\end{equation}
		Moreover, suppose that  $ |\cS| =N<\infty$.
		Then: 
        {
        \begin{itemize} 
        \item For all $\gamma \ge 0$, it holds that
		\begin{equation}
			\label{eq:BoundsOptAlpha}
			p_{\min} \; c_{N,\gamma}  \le \alpha_\gamma^\star \le p_{\max} \; c_{N,\gamma},
		\end{equation}
		where $c_{N,\gamma} = - \sfL^{\prime}_\gamma(1/N)>0$.
		\item  For all $\gamma > \kappa(p_{\min})$, with $\kappa(\cdot)$ being defined in~\eqref{eq:kappaP}, it holds that
		\begin{equation}
			\phi_\gamma(p_{\max}) \le \alpha^\star_\gamma \le  \phi_\gamma(p_{\min}).
			\label{eq:BoundsOptAlpha2}
		\end{equation}
		\item  For all $\gamma < \kappa(p_{\max})$, with $\kappa(\cdot)$ being defined in~\eqref{eq:kappaP}, it holds that
		\begin{equation}
			\phi_\gamma(p_{\min}) < \alpha^\star_\gamma <  \phi_\gamma(p_{\max}).
			\label{eq:BoundsOptAlpha3}
		\end{equation}
	\end{itemize}}
	\end{prop}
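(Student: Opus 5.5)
The whole argument rests on one identity. By definition of $P^\star_\gamma$ in Theorem~\ref{thm:MinimizationFocalEntropy}, for every $x\in\cS$ we have $\sfL_\gamma'(P^\star_\gamma(x)) = -\alpha^\star_\gamma/P_X(x)$. Multiplying by $-P^\star_\gamma(x)$ gives
\begin{equation}
\alpha^\star_\gamma \, \frac{P^\star_\gamma(x)}{P_X(x)} = \phi_\gamma\big(P^\star_\gamma(x)\big), \qquad x\in\cS .
\end{equation}
Summing the identity $\alpha^\star_\gamma P^\star_\gamma(x) = P_X(x)\,\phi_\gamma(P^\star_\gamma(x))$ over $\cS$ and using $\sum_{x\in\cS} P^\star_\gamma(x)=1$ yields
\begin{equation}
\alpha^\star_\gamma=\sum_{x\in\cS}P_X(x)\,\phi_\gamma\big(P^\star_\gamma(x)\big),
\end{equation}
so $\alpha^\star_\gamma$ is a $P_X$-average of values of $\phi_\gamma$. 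The bound \eqref{eq:alpha_bound_max_phi} follows at once: the average is at most $\max_t\phi_\gamma(t)$, which is at most $1+\gamma$ by Proposition~\ref{prop:properties_of_phi_gamma}, item~3. If $|\cS|=1$, then $P^\star_\gamma(x)=1$ lies at the boundary of $(0,1)$; this degenerate case needs separate handling, or it is excluded from the outset.

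For \eqref{eq:BoundsOptAlpha} with $|\cS|=N$, I use monotonicity of $F$ from the proof of Theorem~\ref{thm:MinimizationFocalEntropy}. Set $\alpha=p_{\max}c_{N,\gamma}$. For every $x$, $-\alpha/P_X(x)\le -c_{N,\gamma}=\sfL_\gamma'(1/N)$. Since $(\sfL_\gamma')^{-1}$ is increasing (Proposition~\ref{prop:AnalyticalPropFocalLossInverse}), each summand is at most $1/N$, so $F(\alpha)\le 1$. Because $F$ is decreasing and $F(\alpha^\star_\gamma)=1$, this gives $\alpha^\star_\gamma\le p_{\max}c_{N,\gamma}$. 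The lower bound $p_{\min}c_{N,\gamma}\le\alpha^\star_\gamma$ follows symmetrically. Equivalently, some $x$ has $P^\star_\gamma(x)\ge 1/N$ and some has $P^\star_\gamma(x)\le 1/N$, and we evaluate the identity at those points.

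The $\kappa$-dependent bounds need two ingredients. The first is the ordering of Corollary~\ref{cor:ConseqTheorem1}, item~2: the largest $P_X$-mass $p_{\max}$ carries the largest $P^\star_\gamma$-mass $p^\star_{\max}$, and likewise for the smallest. From the first identity, $\alpha^\star_\gamma p^\star_{\max}/p_{\max}=\phi_\gamma(p^\star_{\max})$, and since $p^\star_{\max}\ge 1/N\ge p_{\max}$... wait — the inequality runs the other way: $p^\star_{\max}\le p_{\max}$ would be needed. So I instead argue from the ratio $r(x)=P^\star_\gamma(x)/P_X(x)$. The sum identity $\sum_x P_X(x)\,r(x)=1$ forces some $x$ with $r(x)\ge1$ and some with $r(x)\le1$.

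The second ingredient is monotonicity of $\phi_\gamma$. If $\gamma>\kappa(p_{\min})$, then, because $\kappa$ is decreasing, $\gamma>\kappa(p)$ for all $p\ge p_{\min}$, and Proposition~\ref{prop:properties_of_phi_gamma} makes $\phi_\gamma$ strictly decreasing on $[p_{\min},1)$. I then plan to show $p_{\min}\le p^\star_{\min}$ and $p^\star_{\max}\le p_{\max}$, i.e.\ the extremes contract, and this is the main obstacle. I would prove it by contradiction. Suppose $p^\star_{\min}<p_{\min}$. Then $r<1$ at the minimum, so $\alpha^\star_\gamma<\phi_\gamma(p^\star_{\min})$. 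I would combine this with an $x$ having $r(x)\ge1$ and exploit the monotonicity of $\phi_\gamma$, noting that $p\mapsto\phi_\gamma(p)/p=-\sfL'_\gamma(p)$ is strictly decreasing by convexity (Proposition~\ref{prop:AnalyticalPropFocalLoss}). I would try a direct comparison via the map $p\mapsto\phi_\gamma(p)$ evaluated at $P_X$ versus $P^\star_\gamma$. If that works, it gives $\alpha^\star_\gamma=\phi_\gamma(p^\star_{\max})\,p_{\max}/p^\star_{\max}\ge\phi_\gamma(p_{\max})$, using both $p^\star_{\max}\le p_{\max}$ and the decrease of $\phi_\gamma$, and symmetrically $\alpha^\star_\gamma\le\phi_\gamma(p_{\min})$, which is \eqref{eq:BoundsOptAlpha2}. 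The case $\gamma<\kappa(p_{\max})$ makes $\phi_\gamma$ strictly increasing on $(0,p_{\max}]$. There I would run the mirror argument with strict inequalities, giving \eqref{eq:BoundsOptAlpha3}.
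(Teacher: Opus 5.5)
Your arguments for \eqref{eq:alpha_bound_max_phi} and \eqref{eq:BoundsOptAlpha} are fine. The averaging identity $\alpha^\star_\gamma=\sum_{x\in\cS}P_X(x)\,\phi_\gamma(P^\star_\gamma(x))$ is a neat alternative to the paper's evaluation of $F$ at $\max_t\phi_\gamma(t)$, and your proof of \eqref{eq:BoundsOptAlpha} is essentially the paper's.

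However, \eqref{eq:BoundsOptAlpha2} and \eqref{eq:BoundsOptAlpha3} are not proved. The ``extremes contract'' claim you reduce them to is not an intermediate step but a restatement. Since $\alpha^\star_\gamma=-p_{\max}\sfL'_\gamma(p^\star_{\max})$ and $-\sfL'_\gamma$ is strictly decreasing, $\alpha^\star_\gamma\ge\phi_\gamma(p_{\max})$ holds if and only if $p^\star_{\max}\le p_{\max}$. Likewise $\alpha^\star_\gamma\le\phi_\gamma(p_{\min})$ if and only if $p^\star_{\min}\ge p_{\min}$. You leave this at ``if that works''. The sketched contradiction also misfires in two ways. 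First, from $\alpha^\star_\gamma r(x)=\phi_\gamma(P^\star_\gamma(x))$, the case $r<1$ gives $\phi_\gamma(p^\star_{\min})<\alpha^\star_\gamma$, the opposite of what you wrote. Second, the hypothesis $\gamma>\kappa(p_{\min})$ only makes $\phi_\gamma$ monotone on $[p_{\min},1)$. In the case you are refuting, $p^\star_{\min}<p_{\min}$ lies outside that interval, so monotonicity of $\phi_\gamma$ at the points $P^\star_\gamma(x)$ is not available.

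The missing idea is to compare $\alpha^\star_\gamma$ with $\phi_\gamma$ at $P_X(x)$ rather than at $P^\star_\gamma(x)$. Since $\sfL'_\gamma$ is strictly increasing and $\sfL'_\gamma(P^\star_\gamma(x))=-\alpha^\star_\gamma/P_X(x)$,
\[
P^\star_\gamma(x)\ge P_X(x)\iff -\frac{\alpha^\star_\gamma}{P_X(x)}\ge\sfL'_\gamma(P_X(x))\iff\alpha^\star_\gamma\le\phi_\gamma(P_X(x)),
\]
and the same holds with strict inequalities. Your own observation that $\sum_xP_X(x)r(x)=1$ forces some $x_1$ with $r(x_1)\ge1$ and some $x_2$ with $r(x_2)\le1$. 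This yields, for every $\gamma\ge0$,
\[
\min_{x\in\cS}\phi_\gamma(P_X(x))\le\phi_\gamma(P_X(x_2))\le\alpha^\star_\gamma\le\phi_\gamma(P_X(x_1))\le\max_{x\in\cS}\phi_\gamma(P_X(x)).
\]
The $\kappa$ hypotheses then serve only to identify these extremes: $\phi_\gamma$ is decreasing on $[p_{\min},p_{\max}]$ when $\gamma>\kappa(p_{\min})$, and increasing there when $\gamma<\kappa(p_{\max})$. This is precisely the paper's argument, phrased there (in the first case) as $F(\phi_\gamma(p_{\max}))\ge1\ge F(\phi_\gamma(p_{\min}))$.

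For the strict inequalities in \eqref{eq:BoundsOptAlpha3} you also need $r\not\equiv1$, i.e.\ $\gamma>0$ and $P_X$ non-uniform. Then $x_1,x_2$ can be chosen with $r(x_1)>1>r(x_2)$. Otherwise $\phi_\gamma(p_{\min})=\alpha^\star_\gamma=\phi_\gamma(p_{\max})$, a caveat the statement itself omits.
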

	\begin{rem}       
		The lower and upper bounds in~\eqref{eq:BoundsOptAlpha},~\eqref{eq:BoundsOptAlpha2}, and~\eqref{eq:BoundsOptAlpha3} match when $P_X$ is a uniform distribution.
	\end{rem}
    \begin{rem}
    \begin{figure}
	\centering
	\begin{tikzpicture}
\definecolor{darkgray176}{RGB}{176,176,176}
\definecolor{darkslateblue48103141}{RGB}{48,103,141}
\definecolor{gold25323136}{RGB}{253,231,36}
\definecolor{indigo68184}{RGB}{68,1,84}
\definecolor{lightgray204}{RGB}{204,204,204}
\definecolor{mediumseagreen53183120}{RGB}{53,183,120}

\begin{axis}[
width = 9cm,
height = 4.5cm,
legend cell align={left},
legend columns=2,
legend style={
    fill opacity=0.8, 
    draw opacity=1, 
    text opacity=1, 
    draw=lightgray204,
    at={(0.99,0.99)},
    anchor=north east,
    font=\small
},
legend entries={Initial PMF, 1st Recursion, 2nd Recursion, 3rd Recursion},
legend image code/.code={
    \draw[#1] (0cm,-0.1cm) rectangle (0.3cm,0.1cm);
},
tick align=outside,
tick pos=left,
x grid style={dashed, gray!50}, 
xmajorgrids,                    
xlabel={Support of $X$},
xmin=-0.276, xmax=3.816,
xtick style={color=black},
xtick={0.27,1.27,2.27,3.27},
xticklabels={0,1,2,3},
y grid style={dashed, gray!50}, 
ymajorgrids,
ymin=0, ymax=0.52511729,
ytick style={color=black}
]

\addplot[
    ybar,
    bar width=0.18,
    fill=indigo68184,
    draw=black,
    area legend
] coordinates {
    (0,0.48511729)
    (1,0.24276922)
    (2,0.22591902)
    (3,0.04619447)
};

\addplot[
    ybar,
    bar width=0.18,
    fill=darkslateblue48103141,
    draw=black,
    area legend
] coordinates {
    (0.18,0.441756579555843)
    (1.18,0.261915248957649)
    (2.18,0.245980274388615)
    (3.18,0.0503478970978921)
};

\addplot[
    ybar,
    bar width=0.18,
    fill=mediumseagreen53183120,
    draw=black,
    area legend
] coordinates {
    (0.36,0.410292024188029)
    (1.36,0.275033035196676)
    (2.36,0.260648104474101)
    (3.36,0.0540268361411944)
};

\addplot[
    ybar,
    bar width=0.18,
    fill=gold25323136,
    draw=black,
    area legend
] coordinates {
    (0.54,0.386988832865333)
    (1.54,0.284036599989397)
    (2.54,0.271449668947819)
    (3.54,0.057524898197451)
};

\node[
    draw=lightgray204,
    fill=white,
    fill opacity=0.8,
    text opacity=1,
    inner sep=3pt,
    anchor=north east,
    font=\small
] at (axis cs:3.73,0.25) {$\gamma = 1$};
\end{axis}
\end{tikzpicture}
	\vspace{-0.4cm}
	\caption{Initial probability mass function: $P_X = \begin{bmatrix}0.485 & 0.243 & 0.226 & 0.046 \end{bmatrix}$.}
	\vspace{-0.4cm}
	\label{fig:Idempotent}
\end{figure} 
It is well known that the minimizer of the cross-entropy is \emph{idempotent}: if the minimization procedure is initialized at a distribution $P_0^\star$, the optimizer coincides with the initialization and no further change occurs. In contrast, the minimizer of the focal-entropy does \emph{not} exhibit this property. As illustrated in Figure~\ref{fig:Idempotent}, the optimizer $P_\gamma^\star$ in~\eqref{eq:optimizer} is, in general, \emph{not idempotent}.
The results shown in Figure~\ref{fig:Idempotent} were obtained as follows. Starting from the initial probability mass function
\begin{equation}
P_X=
\begin{bmatrix}
0.485 & 0.243 & 0.226 & 0.046
\end{bmatrix},
\end{equation}
we numerically determined (using the bounds in Proposition~\ref{prop:BoundsAlpha}) the value of $\alpha_\gamma^\star$ satisfying the normalization condition
$F(\alpha_\gamma^\star)=1$, where $F(\cdot)$ is defined in~\eqref{eq:normalization}. Using this value, we computed the corresponding optimizer $P_\gamma^\star$ via~\eqref{eq:optimizer} (Recursion~1 in Figure~\ref{fig:Idempotent}), which differs from the initial distribution $P_X$. Repeating this procedure by reapplying the optimization to the newly obtained distribution (Recursions~2 and~3 in Figure~\ref{fig:Idempotent}) reveals that successive iterates continue to change. Hence, unlike the cross-entropy case, focal-entropy minimization induces a non-trivial transformation of the distribution under repeated application.  

An interesting direction for future work is to determine whether this recursive procedure converges to a limiting distribution.
    \end{rem}
	\subsection{ Two-Point Support}
	We consider distributions with $|\cS| = 2$ for which we aim to study the structure of the optimizer $P_\gamma^\star$ in~\eqref{eq:optimizer}. Without loss of generality, we assume that $\mathcal{X}=\{0,1\}$. The next proposition (proof in Appendix~\ref{app:binary-focal-minimizer}) provides bounds on $P^\star_\gamma$.
	\begin{prop}\label{prop:binary-focal-minimizer}
		Let $P_X\in\cP(\{0,1\})$ with $P_X(1)=p\in \left (0,\tfrac12 \right]$, { without loss of generality.
        Then, for any $\gamma>0$, the unique minimizer in~\eqref{eq:optimizer}
		over $\cP(\{0,1\})$ satisfies
		\begin{equation} \label{eq:BoundsonOptimlDistrBinaryCase}
			\widetilde{Q}_\gamma(1) \; \le \; P^\star_\gamma(1)\;\le \; \widetilde{Q}_{\gamma+1} (1),
		\end{equation}
        where
		\begin{equation} \label{eq:powerLaw prob}
			\widetilde{Q}_\gamma(1) = q_\gamma := \frac{p^{\tfrac{1}{\gamma}}}{\,p^{\tfrac{1}{\gamma}} + (1-p)^{\tfrac{1}{\gamma}}\,},
		\end{equation}
        and $\widetilde{Q}_\gamma(0) = 1-\widetilde{Q}_\gamma(1)$.
		
		Furthermore, for any $\gamma > 0$,} it holds that
		\begin{equation}
			\widetilde{Q}_{\gamma+1} (1) - \widetilde{Q}_\gamma(1) \;\le\;\frac{\left|\log \left( \tfrac{p}{1-p} \right )\right|}{4\,\gamma^2}.
			\label{eq:DiffBoundsBinary}
		\end{equation}
	\end{prop}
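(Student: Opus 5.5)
The plan is to reduce the minimizer to a scalar first-order condition and compare it with the power-law candidates $q_\gamma$ and $q_{\gamma+1}$ via a monotone function. Write $q=P^\star_\gamma(1)$. By Theorem~\ref{thm:MinimizationFocalEntropy}, both coordinates satisfy $P_X(x)\,\sfL_\gamma'(P^\star_\gamma(x))=-\alpha_\gamma^\star$. Hence $q$ is the unique zero on $(0,1)$ of
\[
\Phi(s)=p\,\sfL_\gamma'(s)-(1-p)\,\sfL_\gamma'(1-s).
\]
By the strict convexity in Proposition~\ref{prop:AnalyticalPropFocalLoss}, $\sfL_\gamma'$ is strictly increasing, so $\Phi$ is strictly increasing. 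It therefore suffices to show $\Phi(q_\gamma)\le 0\le \Phi(q_{\gamma+1})$.

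By Corollary~\ref{cor:ConseqTheorem1}, $q\le 1/2$; likewise $q_\gamma,q_{\gamma+1}\le 1/2$ since $p\le 1/2$. So I work with $s\in(0,1/2]$. Set
\[
g(s)=\gamma\log\tfrac1s+\tfrac{1-s}{s}, \qquad r=\tfrac{s}{1-s}\le 1 .
\]
Using~\eqref{eq:firstDer}, the condition $\Phi(s)\gtrless 0$ becomes
\[
\frac{p}{1-p}\ \lessgtr\ r^{\gamma-1}R(s), \qquad R(s)=\frac{g(1-s)}{g(s)} .
\]
At $s=q_\gamma$ we have $r^\gamma=p/(1-p)$, so $\Phi(q_\gamma)\le0$ follows from $R(s)\le r$. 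At $s=q_{\gamma+1}$ we have $r^{\gamma+1}=p/(1-p)$, so $\Phi(q_{\gamma+1})\ge 0$ follows from $R(s)\ge r^2$. Thus the whole first claim reduces to the two-sided inequality $r^2\le R(s)\le r$ for $s\in(0,1/2]$. (The sign bookkeeping of which inequality gives which side should be double-checked, but the structure is this.)

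Clearing denominators, $R\le r$ is equivalent to
\[
\gamma(1-s)\log\tfrac{1}{1-s}+s\ \le\ \gamma s\log\tfrac1s+(1-s).
\]
This holds since $1-2s\ge0$ and $h(x)=x\log(1/x)$ satisfies $h(s)\ge h(1-s)$ for $s\le1/2$. Similarly, $R\ge r^2$ is equivalent (the $s(1-s)$ terms cancel) to
\[
(1-s)^2\log\tfrac{1}{1-s}\ \ge\ s^2\log\tfrac1s \qquad \text{for } s\le 1/2 .
\]
These two elementary one-variable inequalities are the main obstacle. I would prove each by setting $D(s)$ equal to the difference, noting $D(1/2)=0$ and the behaviour as $s\to0^+$, and then showing $D$ has a single sign change of its derivative on $(0,1/2)$. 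For $h$, I would use concavity of $h$ together with a comparison of $h'(s)=\log(1/s)-1$ and $-h'(1-s)$.

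For~\eqref{eq:DiffBoundsBinary}, write $\ell=\log\frac{p}{1-p}$ and $\sigma(t)=1/(1+\rme^{-t})$, so that $q_u=\sigma(\ell/u)$. Then
\[
\frac{d q_u}{du}=-\sigma'(\ell/u)\,\frac{\ell}{u^2}, \qquad 0\le\sigma'\le\tfrac14,
\]
so $|dq_u/du|\le |\ell|/(4u^2)$. Integrating over $u\in[\gamma,\gamma+1]$ gives a bound $|\ell|/(4\gamma(\gamma+1))\le |\ell|/(4\gamma^2)$.
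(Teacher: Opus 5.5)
Your proposal is correct and is essentially the paper's proof: your $\Phi$ is the paper's $D(q)$, evaluating it at $q_\gamma$ and $q_{\gamma+1}$ reduces to the same two elementary inequalities, and the difference bound comes from $|\rmd q_u/\rmd u|\le|\log\frac{p}{1-p}|/(4u^2)$. The only difference there is that you integrate this bound, which even gives the slightly sharper $|\log\frac{p}{1-p}|/(4\gamma(\gamma+1))$, whereas the paper applies the mean value theorem. Your sign bookkeeping is right, and your derivative-sign plan for the two inequalities (which the paper simply asserts) goes through: the derivatives of the two differences are $\log\frac{1}{s(1-s)}-2$ and $1-2\left(s\log\frac1s+(1-s)\log\frac1{1-s}\right)$, each strictly decreasing on $(0,\tfrac12)$ with exactly one sign change, and both differences vanish at $0^+$ and at $\tfrac12$.
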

	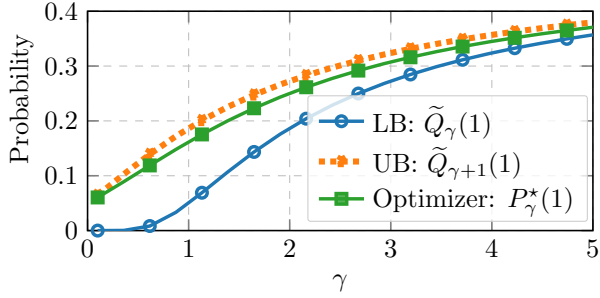
\begin{figure}
		\centering
\begin{tikzpicture}

\definecolor{steelblue31119180}{RGB}{31,119,180}
\definecolor{darkorange25512714}{RGB}{255,127,14}
\definecolor{forestgreen4416044}{RGB}{44,160,44}

\begin{axis}[
  width=8.3cm,
  height=4.5cm,
  xlabel={$\gamma$}, xlabel style={font=\normalsize},
  ylabel={Probability},
  y label style={font=\normalsize,at={(axis description cs:0,0.6)},anchor=south,yshift=17pt},
  tick label style={font=\normalsize},
  axis line style={semithick},
  tick style={semithick},
  xmin=0, xmax=5,
  ymin=0, ymax=0.4,
  xmajorgrids, ymajorgrids,
  grid style={dashed, gray!50},
  legend cell align={left},
  legend style={
    font=\normalsize,
    fill=white,
    fill opacity=0.8,
    text opacity=1,
    draw=gray!40,
    line width=0.4pt,
    rounded corners=1pt,
    at={(0.99,0.04)},
    anchor=south east
  }
]

\addplot [
  line width=1.3pt,
  steelblue31119180,
  mark=o,
  mark size=2,
  mark repeat=2
]
table {%
0.1 1.63103766612776e-13
0.357894736842105 0.000267236590634087
0.615789473684211 0.00831315998046325
0.873684210526316 0.033241777249815
1.13157894736842 0.0690058086523647
1.38947368421053 0.107253850821951
1.64736842105263 0.143396807790836
1.90526315789474 0.175748200483793
2.16315789473684 0.204049210270176
2.42105263157895 0.228609165847452
2.67894736842105 0.249908010631649
2.93684210526316 0.268433139691788
3.19473684210526 0.284621192159074
3.45263157894737 0.298843713608773
3.71052631578947 0.311409766374296
3.96842105263158 0.322574081376681
4.22631578947368 0.332546136246905
4.48421052631579 0.34149845422828
4.7421052631579 0.349573632191429
5 0.356890086257402
};
\addlegendentry{LB: $\widetilde{Q}_\gamma(1)$}

\addplot [
  line width=2.3pt,
  darkorange25512714,
  dotted,
  mark=triangle*,
  mark size=2,
  mark repeat=2
]
table {%
0.1 0.0643585401345818
0.357894736842105 0.102625642110769
0.615789473684211 0.139159171210372
0.873684210526316 0.172006898154139
1.13157894736842 0.200793596340831
1.38947368421053 0.225787131842278
1.64736842105263 0.247458448327724
1.90526315789474 0.26629849586888
2.16315789473684 0.282751437239597
2.42105263157895 0.297196841931213
2.67894736842105 0.309951036085814
2.93684210526316 0.321274911017297
3.19473684210526 0.331383016457072
3.45263157894737 0.340451996324585
3.71052631578947 0.348627778613377
3.96842105263158 0.356031471350471
4.22631578947368 0.362764109265354
4.48421052631579 0.368910444095333
4.7421052631579 0.374541963610539
5 0.379719298141306
};
\addlegendentry{UB: $\widetilde{Q}_{\gamma+1}(1)$}

\addplot [
  line width=1.3pt,
  forestgreen4416044,
  mark=square*,
  mark size=2,
  mark repeat=2
]
table {%
0.1 0.0603921952542986
0.357894736842105 0.0890827673049415
0.615789473684211 0.118817124615606
0.873684210526316 0.147921726320151
1.13157894736842 0.175307945852179
1.38947368421053 0.200416317589615
1.64736842105263 0.223072116342939
1.90526315789474 0.24333503425314
2.16315789473684 0.261385005240009
2.42105263157895 0.277449537058942
2.67894736842105 0.291762871167587
2.93684210526316 0.304545588025734
3.19473684210526 0.31599604554106
3.45263157894737 0.326288147460224
3.71052631578947 0.335572228507162
3.96842105263158 0.343977288928501
4.22631578947368 0.351613655358339
4.48421052631579 0.35857561489526
4.7421052631579 0.364943821593071
5 0.370787405133542
};
\addlegendentry{Optimizer: $P^\star_\gamma(1)$}

\end{axis}
\end{tikzpicture}
		\vspace{-0.6cm}
		\caption{True Probability: $0.05$}
		\label{fig:binary case}
	\end{figure}
	Figure~\ref{fig:binary case} illustrates the performance of the bounds in Proposition~\ref{prop:binary-focal-minimizer} in an imbalanced setting, i.e., $p=0.05$. We observe that the upper bound is tight across the entire range of $\gamma$, whereas the lower bound becomes tight for larger values of $\gamma$.

	\section{Properties of the Optimizer}
	\label{sec:prop_optimizer}
	In this section, we explore properties of the optimizer $P_\gamma^\star$ in~\eqref{eq:optimizer}. These are critical since they can shed light on important characteristics of the focal-loss, such as how it helps with class imbalance. 
	
	\subsection{Large $\gamma$ Asymptotics of $P_\gamma^\star$}
	Our goal here is to understand how $P^\star_\gamma$ in~\eqref{eq:optimizer} is affected as the value of $\gamma$ increases. The next proposition (proof in  Appendix~\ref{app:PGmmaStar0Inft}) answers this question.
	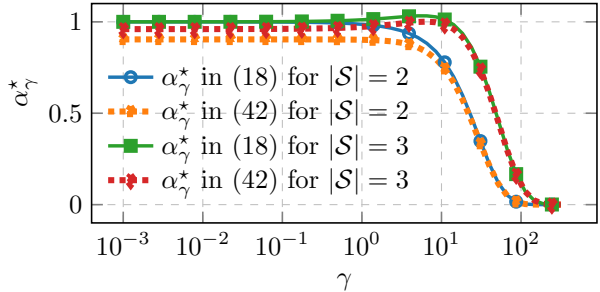
\begin{figure}
		\raggedright
\begin{tikzpicture}

\definecolor{steelblue31119180}{RGB}{31,119,180}
\definecolor{darkorange25512714}{RGB}{255,127,14}
\definecolor{forestgreen4416044}{RGB}{44,160,44}
\definecolor{crimson2143940}{RGB}{214,39,40}

\begin{axis}[
width=8.3cm,
height=4.5cm,
  xlabel={$\gamma$}, xlabel style={font=\normalsize},
  ylabel={$\alpha^\star_\gamma$}, 
  y label style={font=\normalsize,at={(axis description cs:0,0.6)},anchor=south,yshift=17pt},
  tick label style={font=\normalsize},
  axis line style={semithick},
  tick style={semithick},
  xmode=log,
  log basis x={10},
  xmin=4.0e-05, xmax=90,      
  ymin=-0.1, ymax=1.1,         
  log ticks with fixed point,
  xtick={1e-06,1e-05,0.0001,0.001,0.01,0.1,1,10,100,1000,10000},
  xticklabels={
    \(\displaystyle {10^{-5}}\),
    \(\displaystyle {10^{-4}}\),
    \(\displaystyle {10^{-3}}\),
    \(\displaystyle {10^{-2}}\),
    \(\displaystyle {10^{-1}}\),
    \(\displaystyle {10^{0}}\),
    \(\displaystyle {10^{1}}\),
    \(\displaystyle {10^{2}}\)
  },
  xmajorgrids, ymajorgrids,
  grid style={dashed, gray!50},
  legend cell align={left},
  legend style={
    font=\normalsize,
    fill=none,
    draw=none,
    at={(0.01,0.09)},
    anchor=south west
  }
]

\addplot [line width=1.3pt, steelblue31119180, mark=o, mark size=2, mark repeat=4]
table {%
0.0001 0.999987655559835
0.000129492584220526 0.999984014385518
0.000167683293681101 0.999979299011102
0.000217137430293752 0.999973192405378
0.000281176869797423 0.999965283891925
0.000364103194931067 0.999955041423163
0.000471486636345739 0.9999417756203
0.000610540229658533 0.999924593090327
0.00079060432109077 0.999902335780462
0.00102377396633958 0.999873502134051
0.00132571136559011 0.999836144457731
0.00171669790660786 0.999787735119753
0.0022229964825262 0.999724991801486
0.00287861559235457 0.999643648717665
0.00372759372031494 0.99953815619665
0.00482695743767787 0.999401284756459
0.00625055192527398 0.999223600892492
0.00809400121608312 0.998992769232245
0.0104811313415469 0.998692617401503
0.0135722878297165 0.998301873398759
0.0175751062485479 0.997792446038147
0.0227584592607479 0.99712706106493
0.0294705170255181 0.996255979416674
0.0381621340794936 0.99511239659887
0.0494171336132383 0.993605934567768
0.0639915233634927 0.991613364338718
0.0828642772854684 0.988965305570673
0.107303094052616 0.985427095180057
0.138949549437314 0.980671245280064
0.179929362329155 0.974237856321452
0.232995181051537 0.965477984845165
0.30171148105293 0.953473526890775
0.390693993705462 0.936926995746944
0.505919748843582 0.914020130632252
0.655128556859551 0.88226258432951
0.848342898244072 0.838409579870131
1.09854114198756 0.778633876759613
1.42252931348537 0.699260651932491
1.84206996932672 0.598378859838704
2.38534400643142 0.478284324563901
3.08884359647748 0.347798138778671
3.99982339560893 0.222342178733015
5.17947467923121 0.11956137326024
6.70703561118431 0.0510966479828313
8.68511373751352 0.0161351845276841
11.2465782211982 0.00342793421941688
14.5634847750124 0.000434256907738118
18.8586327877265 2.80740591734775e-05
24.4205309454865 7.57530415385146e-07
31.6227766016838 6.57973951997625e-09
};
\addlegendentry{$\alpha^\star_\gamma$ in~\eqref{eq:optimizer} for $|\cS|=2$}

\addplot [line width=2.3pt, darkorange25512714, dotted, mark=triangle*, mark size=2, mark repeat=4]
table {%
0.0001 0.90393391250239
0.000129492584220526 0.903933911032772
0.000167683293681101 0.903933908568527
0.000217137430293752 0.903933904436526
0.000281176869797423 0.903933897508124
0.000364103194931067 0.903933885890947
0.000471486636345739 0.903933866412173
0.000610540229658533 0.903933833752329
0.00079060432109077 0.903933778993316
0.00102377396633958 0.903933687184822
0.00132571136559011 0.903933533266019
0.00171669790660786 0.903933275232216
0.0022229964825262 0.903932842687908
0.00287861559235457 0.903932117676983
0.00372759372031494 0.903930902592044
0.00482695743767787 0.903928866480488
0.00625055192527398 0.903925455261989
0.00809400121608312 0.903919741726888
0.0104811313415469 0.903910175196739
0.0135722878297165 0.903894164317933
0.0175751062485479 0.903867383033976
0.0227584592607479 0.903822618814572
0.0294705170255181 0.903747867183296
0.0381621340794936 0.903623192151832
0.0494171336132383 0.903415580878664
0.0639915233634927 0.903070570343924
0.0828642772854684 0.902498747653924
0.107303094052616 0.90155425554061
0.138949549437314 0.900001141017844
0.179929362329155 0.897461878405383
0.232995181051537 0.893341149527941
0.30171148105293 0.886718231491837
0.390693993705462 0.876205874009948
0.505919748843582 0.859787479895388
0.655128556859551 0.834675684076484
0.848342898244072 0.797292656690935
1.09854114198756 0.743555408809005
1.42252931348537 0.669725419523179
1.84206996932672 0.574047900660253
2.38534400643142 0.459069463479042
3.08884359647748 0.333713099626598
3.99982339560893 0.213154856233075
5.17947467923121 0.114494653976707
6.70703561118431 0.0488749691926097
8.68511373751352 0.0154169294204775
11.2465782211982 0.00327221249622666
14.5634847750124 0.000414194258878697
18.8586327877265 2.67591577631462e-05
24.4205309454865 7.216597972549e-07
31.6227766016838 6.26516568457393e-09
};
\addlegendentry{$\alpha_\gamma^\star$ in~\eqref{eq: taylorExpansio of Alpha} for $|\cS|=2$}

\addplot [line width=1.3pt, forestgreen4416044, mark=square*, mark size=2, mark repeat=4]
table {%
0.0001 1.00001275510886
0.000129492584220526 1.00001651636785
0.000167683293681101 1.00002138654736
0.000217137430293752 1.00002769244338
0.000281176869797423 1.00003585705917
0.000364103194931067 1.00004642787326
0.000471486636345739 1.00006011334199
0.000610540229658533 1.00007783006322
0.00079060432109077 1.00010076360968
0.00102377396633958 1.00013044694469
0.00132571136559011 1.00016886135245
0.00171669790660786 1.00021856608282
0.0022229964825262 1.00028286452425
0.00287861559235457 1.00036601645706
0.00372759372031494 1.00047350808427
0.00482695743767787 1.00061239355955
0.00625055192527398 1.00079172370033
0.00809400121608312 1.00102307849056
0.0104811313415469 1.00132121891556
0.0135722878297165 1.00170486824709
0.0175751062485479 1.00219761949893
0.0227584592607479 1.0028289379793
0.0294705170255181 1.00363517492178
0.0381621340794936 1.00466041322716
0.0494171336132383 1.00595680161905
0.0639915233634927 1.00758375767191
0.0828642772854684 1.00960497130791
0.107303094052616 1.01208143334452
0.138949549437314 1.01505764153035
0.179929362329155 1.01853658847858
0.232995181051537 1.02243705041246
0.30171148105293 1.02652418012758
0.390693993705462 1.03030197633734
0.505919748843582 1.03285521577936
0.655128556859551 1.03263191290216
0.848342898244072 1.027171462159
1.09854114198756 1.01281984586194
1.42252931348537 0.984549624346528
1.84206996932672 0.936134201544064
2.38534400643142 0.861087119646527
3.08884359647748 0.754821899361104
3.99982339560893 0.618094166608444
5.17947467923121 0.460616373515677
6.70703561118431 0.302029927932836
8.68511373751352 0.166865486315601
11.2465782211982 0.0734760961281609
14.5634847750124 0.0240100755459026
18.8586327877265 0.0053119107465137
24.4205309454865 0.000706922362496798
31.6227766016838 4.85859222863928e-05
};
\addlegendentry{$\alpha^\star_\gamma$ in~\eqref{eq:optimizer} for $|\cS|=3$}

\addplot [line width=2.3pt, crimson2143940, dotted, mark=diamond*, mark size=2, mark repeat=4]
table {%
0.0001 0.960138024501967
0.000129492584220526 0.960142096669894
0.000167683293681101 0.960147369476659
0.000217137430293752 0.960154196785676
0.000281176869797423 0.960163036664105
0.000364103194931067 0.960174482007153
0.000471486636345739 0.960189300121144
0.000610540229658533 0.960208483858071
0.00079060432109077 0.960233317625559
0.00102377396633958 0.960265462520449
0.00132571136559011 0.960307065993247
0.00171669790660786 0.960360902888231
0.0022229964825262 0.960430556459579
0.00287861559235457 0.960520650060026
0.00372759372031494 0.960637142616984
0.00482695743767787 0.960787703648886
0.00625055192527398 0.960982186167395
0.00809400121608312 0.961233217800123
0.0104811313415469 0.961556930773777
0.0135722878297165 0.961973848038578
0.0175751062485479 0.962509932284009
0.0227584592607479 0.963197780840223
0.0294705170255181 0.964077902258541
0.0381621340794936 0.965199922810238
0.0494171336132383 0.966623415864636
0.0639915233634927 0.968417780914472
0.0828642772854684 0.970660155979779
0.107303094052616 0.973429631245214
0.138949549437314 0.976794913346389
0.179929362329155 0.980790916015897
0.232995181051537 0.985377397810931
0.30171148105293 0.990369770777711
0.390693993705462 0.995329089366505
0.505919748843582 0.999396647091993
0.655128556859551 1.0010624505606
0.848342898244072 0.997873787341266
1.09854114198756 0.986132785139287
1.42252931348537 0.960715706673426
1.84206996932672 0.915277602794585
2.38534400643142 0.843246576340221
3.08884359647748 0.740022705028775
3.99982339560893 0.606391982677119
5.17947467923121 0.452039925655465
6.70703561118431 0.296424420254511
8.68511373751352 0.163753203647585
11.2465782211982 0.0720934046465461
14.5634847750124 0.023553663944442
18.8586327877265 0.00520994763076953
24.4205309454865 0.000693234831858312
31.6227766016838 4.76382582545419e-05
};
\addlegendentry{$\alpha_\gamma^\star$ in~\eqref{eq: taylorExpansio of Alpha} for $|\cS|=3$}

\end{axis}
\end{tikzpicture}
		\vspace{-0.4cm}
		\caption{Comparison of $\alpha^\star_\gamma$ in~\eqref{eq:optimizer} with its approximation in~\eqref{eq: taylorExpansio of Alpha}. }
		\label{fig:alpha_vs_gamma}
		\vspace{-0.3cm}
	\end{figure}
	\begin{prop}
		\label{eq:PGmmaStar0Inft}
		Suppose that $|\cS| < \infty$. Then, as $\gamma \to \infty$, we have that
		\begin{align}
			\alpha_\gamma^\star &=  -  \sfL'_\gamma \left( \frac{1}{|\cS|} \right) \mathsf{HM}(P_X) +  O \left( \gamma \, \left (1\!-\!\frac{1}{|\cS|}\right )^{\gamma} \right),  \label{eq: taylorExpansio of Alpha}
			\\
			P_\gamma^\star(x) &=  \frac{1}{|\cS|} + O \left( \frac{1}{\gamma}\right ),
		\end{align}
		where  $\mathsf{HM}(P_X) =\frac{|\cS|}{ \sum_{x \in \cS} \frac{1}{P_X(x)}}$ is the \emph{harmonic mean} of~$P_X$.
	\end{prop}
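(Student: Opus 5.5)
The plan is to get both expansions from the optimality condition in Theorem~\ref{thm:MinimizationFocalEntropy} together with the two-sided bound~\eqref{eq:BoundsOptAlpha} of Proposition~\ref{prop:BoundsAlpha}. Write $N=|\cS|$ and $g_\gamma(q) = -\sfL_\gamma'(q) = (1-q)^{\gamma-1} h_\gamma(q)$, where $h_\gamma(q)=\gamma\log\frac1q + \frac{1-q}{q}$. By~\eqref{eq:optimizer}, $P^\star_\gamma$ is characterized by
\begin{equation*}
g_\gamma(P^\star_\gamma(x)) = \frac{\alpha^\star_\gamma}{P_X(x)}, \qquad x\in\cS .
\end{equation*}
Note that $c_{N,\gamma}=g_\gamma(1/N)$.

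\emph{Expansion of $\alpha^\star_\gamma$.} The stated error term $O(\gamma(1-1/N)^\gamma)$ has the same order as $c_{N,\gamma} = (1-\tfrac1N)^{\gamma-1}(\gamma\log N + N-1)$, which is $\Theta(\gamma(1-1/N)^{\gamma})$. So it suffices to show that $\alpha^\star_\gamma$ and $c_{N,\gamma}\,\mathsf{HM}(P_X)$ are each $O(c_{N,\gamma})$.
\begin{itemize}
\item For $\alpha^\star_\gamma$, this follows from~\eqref{eq:BoundsOptAlpha}: $p_{\min} c_{N,\gamma}\le \alpha^\star_\gamma\le p_{\max}c_{N,\gamma}$.
\item For the main term, $\mathsf{HM}(P_X)\le 1$.
\end{itemize}
Hence the difference is bounded by $2c_{N,\gamma} = O(\gamma(1-1/N)^\gamma)$. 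A sharper statement, with relative error $o(1)$, could be obtained by linearizing $g_\gamma$ around $1/N$ using the second part below, but the proposition as stated does not require it.

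\emph{Expansion of $P^\star_\gamma$.} Combining the characterization with~\eqref{eq:BoundsOptAlpha} gives, for every $x\in\cS$,
\begin{equation*}
\frac{1}{r}\le \frac{g_\gamma(P^\star_\gamma(x))}{g_\gamma(1/N)} \le r, \qquad r = \frac{p_{\max}}{p_{\min}},
\end{equation*}
and $r$ does not depend on $\gamma$. The remaining task is to turn this bounded ratio into $|q-1/N| = O(1/\gamma)$, treating the two sides of $1/N$ separately.
\begin{itemize}
\item If $q = 1/N + t$ with $t>0$: since $h_\gamma$ is decreasing, $h_\gamma(q)\le h_\gamma(1/N)$. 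Therefore
\begin{equation*}
\frac{g_\gamma(q)}{g_\gamma(1/N)} \le \left(1-\frac{t}{1-1/N}\right)^{\gamma-1}\le \exp\!\left(-\frac{(\gamma-1)t}{1-1/N}\right).
\end{equation*}
Requiring this to be at least $1/r$ forces $t\le (1-\frac1N)\log r/(\gamma-1)$.
\item If $q=1/N-t$ with $t>0$: now $h_\gamma(q)\ge h_\gamma(1/N)$. Therefore
\begin{equation*}
\frac{g_\gamma(q)}{g_\gamma(1/N)}\ge \left(1+\frac{t}{1-1/N}\right)^{\gamma-1}.
\end{equation*}
Requiring this to be at most $r$ gives $\log(1+t/(1-1/N))\le \log r/(\gamma-1)$. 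Since $t<1$, this again yields $t = O(1/\gamma)$.
\end{itemize}
Thus $P^\star_\gamma(x) = 1/N + O(1/\gamma)$ uniformly over the finitely many $x\in\cS$.

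The main point needing care is the monotonicity of $h_\gamma$: both terms of $h_\gamma$ are decreasing in $q$, so the prefactor never works against the exponential factor. With that, the inequalities above are elementary, and the only remaining bookkeeping is checking that the constants depend only on $P_X$ and $N$, not on $\gamma$.
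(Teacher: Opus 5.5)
Your proof is correct, and it takes a different route from the paper. The paper Taylor-expands $(\sfL_\gamma')^{-1}$ to first order around $\sfL_\gamma'(1/|\cS|)$. It then sums over $\cS$ to solve for $\alpha_\gamma^\star$ (this is where $\mathsf{HM}(P_X)$ comes from), asserts that the Lagrange remainder $R_{x,\infty}$ is $O(1/\gamma)$ ``by combining the bounds'' in \eqref{eq:BoundsOptAlpha}, and substitutes back.

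You instead make two observations. First, $c_{N,\gamma}=\Theta(\gamma(1-1/N)^\gamma)$, so the $\alpha$-statement is nothing more than the upper bound in \eqref{eq:BoundsOptAlpha}. Second, for $P_\gamma^\star$ you invert the $\gamma$-free bound $1/r\le g_\gamma(P_\gamma^\star(x))/g_\gamma(1/N)\le r$ directly, using the factorization $g_\gamma=(1-q)^{\gamma-1}h_\gamma$ with $h_\gamma$ decreasing.

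Your argument is elementary and complete. It also supplies exactly what the paper's remainder claim leaves implicit. Controlling $\sfL_\gamma'''/(\sfL_\gamma'')^3$ at the intermediate point $u$, which lies between $1/N$ and $P^\star_\gamma(x)$, requires knowing $u-1/N=O(1/\gamma)$; otherwise factors like $\bigl((1-1/N)/(1-u)\bigr)^{2\gamma}$ blow up. Your approach further shows that the paper's expansion of $\alpha_\gamma^\star$ carries no information beyond $\alpha_\gamma^\star=O(c_{N,\gamma})$. The paper's own correction term $\frac{\mathsf{HM}(P_X)}{|\cS|}\sfL_\gamma''(1/|\cS|)\sum_x R_{x,\infty}$ is only shown to be $O(\gamma(1-1/|\cS|)^\gamma)$, which is the same order as the main term.

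One correction to your aside: no linearization can upgrade the statement to $\alpha_\gamma^\star=c_{N,\gamma}\mathsf{HM}(P_X)(1+o(1))$, because that is false when $P_X$ is not uniform. Write $P_\gamma^\star(x)=1/N+t_x$, with $t_x=O(1/\gamma)$ from your second part. Then
$\log\frac{\alpha_\gamma^\star}{c_{N,\gamma}P_X(x)}=(\gamma-1)\log\bigl(1-\frac{Nt_x}{N-1}\bigr)+\log\frac{h_\gamma(P^\star_\gamma(x))}{h_\gamma(1/N)}=-\frac{N}{N-1}\gamma t_x+O(1/\gamma)$.
Summing over $x\in\cS$ and using $\sum_x t_x=0$ gives $\alpha_\gamma^\star=c_{N,\gamma}\,\mathsf{GM}(P_X)\,(1+O(1/\gamma))$. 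Here $\mathsf{GM}(P_X)=\bigl(\prod_{x\in\cS}P_X(x)\bigr)^{1/N}$ strictly exceeds $\mathsf{HM}(P_X)$ unless $P_X$ is uniform.

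The reason is that $g_\gamma(1/N+s/\gamma)/g_\gamma(1/N)\to\rme^{-sN/(N-1)}$. On the relevant $1/\gamma$ scale the nonlinearity is therefore as large as the linear term; in the paper's notation, the first-order term and $R_{x,\infty}$ are both of order $1/\gamma$. This is visible in Figure~\ref{fig:alpha_vs_gamma}, where the ratio of the two $|\cS|=2$ curves levels off near $0.95\approx\mathsf{HM}(P_X)/\mathsf{GM}(P_X)$ instead of tending to $1$. None of this affects your proof of the proposition as stated.
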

	
	\begin{figure}
		\raggedright
		\begin{tikzpicture}
\definecolor{darkgray176}{RGB}{176,176,176}
\definecolor{darkorange25512714}{RGB}{255,127,14}
\definecolor{forestgreen4416044}{RGB}{44,160,44}
\definecolor{steelblue31119180}{RGB}{31,119,180}
\definecolor{gray}{RGB}{128,128,128}

\begin{axis}[
width=8.3cm,
height=4.5cm,
  log basis x={10},
  tick align=outside,
  tick pos=left,
  x grid style={dashed, gray!50},
  xmajorgrids,
  xminorgrids,
  log ticks with fixed point,
  minor grid style={dashed, gray!30},
  xlabel={\normalsize $\gamma$},
  xmin=0.00660536773049804, xmax=60.5568102065137,
  xmode=log,
  xtick style={color=black},
  major tick length=0.15cm,
  minor tick length=0.08cm,
  xtick={0.0001,0.001,0.01,0.1,1,10,100,1000},
  xticklabels={
    $10^{-4}$,
    $10^{-3}$,
    $10^{-2}$,
    $10^{-1}$,
    $10^{0}$,
    $10^{1}$,
    $10^{2}$,
    $10^{3}$
  },
  ylabel={\normalsize $\max \ P^\star_\gamma(x)$},
  y label style={at={(axis description cs:0,0.6)},anchor=south,yshift=17pt},
  ymin=0.237, ymax=0.669,
  ytick={0.3,0.5,0.7},
  ytick style={color=black},
  ymajorgrids,
  y grid style={dashed, gray!50},
  legend pos=north west,
  legend style={font=\small, fill=white, fill opacity=0.8, text opacity=1},
  tick label style={font=\normalsize}
]

\addplot [line width=1.2pt, steelblue31119180, dashed, opacity=0.7, forget plot]
table {%
0.00660536773049804 0.5
60.5568102065138 0.5
};
\addplot [line width=1.2pt, darkorange25512714, dashed, opacity=0.7, forget plot]
table {%
0.00660536773049804 0.333333333333333
60.5568102065138 0.333333333333333
};
\addplot [line width=1.2pt, forestgreen4416044, dashed, opacity=0.7, forget plot]
table {%
0.00660536773049804 0.25
60.5568102065138 0.25
};

\addplot [line width=1pt, steelblue31119180, mark=*, mark size=3, mark options={solid}]
table {%
0.01 0.649127146500632
0.0251321062979236 0.647812120133039
0.0631622766970132 0.644541922767985
0.15874010519682 0.636589218581776
0.398947319755006 0.618664736915434
1.00263864473545 0.586319994562927
2.51984209978975 0.548732590410964
6.3328939505899 0.522482300627871
15.9158963939703 0.509432634142286
40 0.503768920898437
};
\addplot [line width=1pt, darkorange25512714, mark=square*, mark size=3, mark options={solid}]
table {%
0.01 0.429786847961168
0.0251321062979236 0.429459691951251
0.0631622766970132 0.428615090173935
0.15874010519682 0.426379618215378
0.398947319755006 0.420400837831494
1.00263864473545 0.406003241416784
2.51984209978975 0.381750745105364
6.3328939505899 0.358322784026768
15.9158963939703 0.344348210444423
40 0.337882975290995
};
\addplot [line width=1pt, forestgreen4416044, mark=triangle*, mark size=3, mark options={solid}]
table {%
0.01 0.349909007183214
0.0251321062979236 0.349766375040872
0.0631622766970132 0.349382950863855
0.15874010519682 0.348279919546712
0.398947319755006 0.344872717657836
1.00263864473545 0.334799646263946
2.51984209978975 0.312935514269611
6.3328939505899 0.28564690983103
15.9158963939703 0.266426926257282
40 0.256898306657604
};

\legend{$|\cS|=2$, $|\cS|=3$, $|\cS|=4$}
\end{axis}
\end{tikzpicture}
		\vspace{-0.4cm}
		\caption{Convergence of $P^\star_\gamma$ to $1/ |\cS|$.}
		\vspace{-0.4cm}
		\label{fig:GammaInftyConvergence}
	\end{figure} 
	{Figure~\ref{fig:alpha_vs_gamma} compares $\alpha^\star_\gamma$ in~\eqref{eq:optimizer} with its approximation in~\eqref{eq: taylorExpansio of Alpha} for the two following initial distributions 
    \begin{align}
	P_X & = \begin{bmatrix}
		0.65 & 0.35
	\end{bmatrix}, \ \text{for } |\cS|=2,
	\\P_X &= \begin{bmatrix}
		0.43 & 0.32 & 0.25
	\end{bmatrix}, \ \text{for } |\cS|=3. 
\end{align}
 Figure~\ref{fig:GammaInftyConvergence} illustrates the convergence of $P_\gamma^\star$ to the uniform distribution for different values of $|\cS|$, again using the same initial distributions above for $|\cS|=2$ and $|\cS|=3$ and}
 \begin{equation}
 P_X  = \begin{bmatrix}
		0.35& 0.25& 0.25& 0.15 
	\end{bmatrix} , \ \text{for } |\cS|=4.
 \end{equation}

	\subsection{The Three Bins Property}
	We wish to understand when $P^\star_\gamma$ `dominates' $P_X$ and vice versa. To this end, we define the sequence:   
	\begin{equation} \label{eq:di def}
		d_i = p_{(i)} - p^\star_{(i)}, \, i \in \bbN,
	\end{equation}
	where $p_{(i)}$ is the $i$-th largest entry of $P_X$ and $p^\star_{(i)}$ is the $i$-th largest entry of $P^\star_\gamma$ in~\eqref{eq:optimizer}. By Corollary~\ref{cor:ConseqTheorem1}, $P_X$ and $P_\gamma^\star$ share the same ordering; thus, $d_i$ is well defined, preserving the correspondence between ordered entries. An immediate consequence of this and the fact that probabilities sum to one is the following lemma.
	\begin{lemma} \label{lem:minimum_num_sign_change}
		If $P_X$ is uniform or $\gamma=0$,  then $d_i=0$ for all $i \in \bbN $. Otherwise, the sequence $\{d_i \}$ has \emph{at least} one sign change. 
	\end{lemma}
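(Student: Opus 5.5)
The plan is to separate the two cases of the statement and handle each with Corollary~\ref{cor:ConseqTheorem1} together with the fact that both $P_X$ and $P^\star_\gamma$ sum to one. Throughout, entries $i$ beyond $|\cS|$ (when $\cS$ is finite) satisfy $p_{(i)}=0$, and also $p^\star_{(i)}=0$, because $P^\star_\gamma \ll P_X$ by item~1 of Corollary~\ref{cor:ConseqTheorem1}. Hence $d_i=0$ there. Moreover, $\sum_i d_i = \sum_{x\in\cS}P_X(x) - \sum_{x\in\cS}P^\star_\gamma(x) = 1-1 = 0$. This series converges absolutely, since both sequences are summable and nonnegative.

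\emph{First case.} If $\gamma=0$ or $P_X$ is uniform, item~3 of Corollary~\ref{cor:ConseqTheorem1} gives $P^\star_\gamma=P_X$. The sorted sequences therefore coincide, and $d_i=0$ for all $i$.

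\emph{Second case.} Otherwise, the same item~3 gives $P^\star_\gamma\neq P_X$. By item~2 of Corollary~\ref{cor:ConseqTheorem1}, the two distributions share a common ordering of $\cS$. This means we can fix one enumeration $x_{(1)},x_{(2)},\dots$ of $\cS$ that sorts both, so that $d_i = P_X(x_{(i)}) - P^\star_\gamma(x_{(i)})$. Since the distributions differ at some point, some $d_j\neq 0$.

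We then argue by contradiction. Suppose there is no sign change, meaning all nonzero $d_i$ share the same sign, say $d_i\ge 0$ for all $i$ with some $d_j>0$. Then $\sum_i d_i \ge d_j>0$, which contradicts $\sum_i d_i=0$. The case where all nonzero $d_i$ are nonpositive is symmetric. Therefore the sequence contains both a strictly positive and a strictly negative term, which is precisely a sign change (ignoring zero entries, as is standard).

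The only mildly delicate point is the well-definedness of the correspondence when $P_X$ has ties. If $P_X(x_1)=P_X(x_2)$, then~\eqref{eq:optimizer} gives $P^\star_\gamma(x_1)=P^\star_\gamma(x_2)$, so any tie-breaking works for both simultaneously. For countably infinite $\cS$, "the $i$-th largest entry" exists because the entries are summable, so each value is attained only finitely often and the entries can be sorted decreasingly. No real obstacle is expected here; the main care point is this indexing argument.
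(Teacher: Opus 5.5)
\textbf{Gap: the step excluding $P^\star_\gamma=P_X$ fails, and the lemma as stated is false.} Your bookkeeping is fine: the common enumeration of $\cS$, the handling of ties and countable support, and $\sum_i d_i=0$. It correctly shows that $P^\star_\gamma\neq P_X$ forces both a strictly positive and a strictly negative $d_i$. The problem is the step ``otherwise, item~3 of Corollary~\ref{cor:ConseqTheorem1} gives $P^\star_\gamma\neq P_X$''. The ``only if'' half of that item is false, and the second clause of the lemma falls with it.

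By~\eqref{eq:optimizer} and uniqueness of the root of $F(\alpha)=1$, $P^\star_\gamma=P_X$ holds if and only if $\phi_\gamma(P_X(x))$ is the same for every $x\in\cS$; that common value is then $\alpha^\star_\gamma$. The paper's proof of item~3 invokes strict monotonicity of $\sfL'_\gamma$, but what is actually needed is injectivity of $\phi_\gamma(p)=-p\,\sfL'_\gamma(p)$. For $\gamma>0$ this function is unimodal (Proposition~\ref{prop:properties_of_phi_gamma}): it rises from $\phi_\gamma(0^+)=1$ to its maximum at $p^+_\gamma$ and then decreases. So every value in $(1,\phi_\gamma(p^+_\gamma))$ is attained twice.

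\textbf{Counterexample.} Take $\gamma=0.2$, so $\phi_{0.2}(p)=(1-p)^{0.2}+0.2\,p(1-p)^{-0.8}\log(1/p)$, and set $h(b)=\phi_{0.2}(1-3b)-\phi_{0.2}(b)$. Numerically, $h(0.305)\approx 6.6\times10^{-4}>0$ and $h(0.31)\approx-1.1\times10^{-3}<0$. Hence some $b^\star\approx0.307$ satisfies $\phi_{0.2}(a^\star)=\phi_{0.2}(b^\star)\approx1.0265$, where $a^\star=1-3b^\star\approx0.079$. Consider the non-uniform $P_X=(b^\star,b^\star,b^\star,a^\star)$ and $\alpha=\phi_{0.2}(b^\star)$. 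Then $(\sfL'_{0.2})^{-1}(-\alpha/P_X(x))=P_X(x)$ for every $x$, so $F(\alpha)=1$ and $\alpha^\star_{0.2}=\alpha$. Therefore $P^\star_{0.2}=P_X$ and $d_i=0$ for all $i$, so there is no sign change even though $\gamma>0$ and $P_X$ is not uniform.

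\textbf{What can be salvaged.} This step cannot be repaired. The paper's one-line derivation has the same flaw, and the lower bound in Corollary~\ref{cor:NumberOfSignChanges} inherits it. Your argument does prove the corrected statement: if $P^\star_\gamma\neq P_X$, then $\{d_i\}$ has at least one sign change. For $\gamma>0$, the non-uniform $P_X$ with $P^\star_\gamma=P_X$ are exactly those taking two values $a<p^+_\gamma<b$ with $\phi_\gamma(a)=\phi_\gamma(b)$. Such $P_X$ are excluded under any of the following added hypotheses, and your proof then goes through unchanged:
\begin{itemize}
\item $\gamma>\kappa(p_{\min})$ or $\gamma<\kappa(p_{\max})$, since then $\phi_\gamma$ is strictly monotone on the values of $P_X$;
\item $\alpha^\star_\gamma<1$, since the level set is then the single point $p_{\gamma,b}$;
\item $|\cS|\le3$, using $\phi_\gamma<1$ on $[1/2,1)$ together with the strict form of item~5 of Proposition~\ref{prop:properties_of_phi_gamma}, which holds for $\gamma>0$ and $u<1/3$.
\end{itemize}
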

	
	Before providing the main result of this section, which can be thought of as a converse to Lemma~\ref{lem:minimum_num_sign_change}, we present the following preparatory lemma.

	\begin{lemma}
		\label{lemma:Roots}
		For any fixed $\gamma>0$, $\phi_\gamma(p) - \alpha_\gamma^\star = 0$ has at least one and at most two roots. In particular, 
		\begin{itemize}
			\item  $ p_{\gamma,a}  \in [0,  p_{\gamma}^+)$  denotes the smallest root,  which exists only if $\alpha_\gamma^\star \geq 1$. For  $\alpha_\gamma^\star < 1$, we set $p_{\gamma,a} =0$. 
			\item  $  p_{\gamma,b}  \in (p_{\gamma}^+,1) $ denotes the largest root, which always exists. 
		\end{itemize}
	\end{lemma}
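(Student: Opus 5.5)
The plan is to combine the shape of $\phi_\gamma$ from Proposition~\ref{prop:properties_of_phi_gamma} with the bounds on $\alpha_\gamma^\star$ from Proposition~\ref{prop:BoundsAlpha}. Throughout, fix $\gamma>0$.

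First I will record the shape of $\phi_\gamma$. It is continuous on $(0,1)$, with $\phi_\gamma(0^+)=1$ and $\phi_\gamma(1)=0$ (item~1). I read item~2 together with the monotonicity of $\kappa$ as follows. Since $\kappa$ is strictly decreasing, the set $\{p:\gamma<\kappa(p)\}$ is an interval $(0,p^+_\gamma)$ and $\{p:\gamma>\kappa(p)\}$ is $(p^+_\gamma,1)$, where $p^+_\gamma$ solves $\kappa(p)=\gamma$. Hence $\phi_\gamma$ is strictly increasing on $(0,p^+_\gamma]$ and strictly decreasing on $[p^+_\gamma,1)$. Here $p^+_\gamma$ is exactly the maximizer from item~4, and $p^+_\gamma>0$.

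A caveat concerns whether $\kappa(0^+)=+\infty$. Near $0$ we have $\kappa(p)\approx \frac{1}{p}\bigl(1-\frac{2}{\log(1/p)}\bigr)\to+\infty$, so for every $\gamma>0$ the crossing exists. It follows that $M_\gamma:=\max\phi_\gamma=\phi_\gamma(p^+_\gamma)>1$, because $\phi_\gamma$ strictly increases from the value $1$ at $0^+$.

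Next I will treat the decreasing branch. On $[p^+_\gamma,1)$, $\phi_\gamma$ maps bijectively and continuously onto $(0,M_\gamma]$. Since $0<\alpha_\gamma^\star\le M_\gamma$ by Theorem~\ref{thm:MinimizationFocalEntropy} and \eqref{eq:alpha_bound_max_phi}, there is exactly one root $p_{\gamma,b}\in[p^+_\gamma,1)$. This root is strict, $p_{\gamma,b}>p^+_\gamma$, unless $\alpha^\star_\gamma=M_\gamma$.

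That equality case is the main obstacle. I would exclude it by showing $\alpha_\gamma^\star<M_\gamma$ strictly, via the normalization $F(\alpha^\star_\gamma)=1$. By the inverse relation \eqref{eq:Inverse}, $\alpha^\star_\gamma=\phi_\gamma(P^\star_\gamma(x))\,P_X(x)/P^\star_\gamma(x)$ for each $x\in\cS$. Choosing $x$ with $P_X(x)\le P^\star_\gamma(x)$, which exists since both sum to one, gives $\alpha^\star_\gamma\le\phi_\gamma(P^\star_\gamma(x))\le M_\gamma$. Equality would force $P_X(x)=P^\star_\gamma(x)$ and $P^\star_\gamma(x)=p^+_\gamma$ for every such $x$.

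I expect to rule this out by the strictness argument in the proof of \eqref{eq:alpha_bound_max_phi}. If it cannot be ruled out, I would weaken the claim to $p_{\gamma,b}\in[p^+_\gamma,1)$, which still yields at least one and at most two roots. In that degenerate case $p_{\gamma,a}=p_{\gamma,b}=p^+_\gamma$ coincide, so the count still holds.

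Finally I will treat the increasing branch. On $(0,p^+_\gamma)$, $\phi_\gamma$ strictly increases from $1$ (not attained) to $M_\gamma$. Consider the two cases for $\alpha^\star_\gamma$:
\begin{itemize}
\item If $\alpha^\star_\gamma<1$, there is no root in $(0,p^+_\gamma)$. Using the convention $\phi_\gamma(0)=1$, we set $p_{\gamma,a}=0$.
\item If $1\le\alpha^\star_\gamma<M_\gamma$, there is a unique root in $[0,p^+_\gamma)$, namely $p_{\gamma,a}=0$ when $\alpha^\star_\gamma=1$ after extending continuously, and it is the smallest root.
\end{itemize}
Strict monotonicity on each branch gives at most one root per branch. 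Hence there is at least one root ($p_{\gamma,b}$) and at most two roots.
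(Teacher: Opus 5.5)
Your argument is the paper's argument: unimodality of $\phi_\gamma$ (Proposition~\ref{prop:properties_of_phi_gamma}), the bound $\alpha^\star_\gamma\le\max\phi_\gamma$ (Proposition~\ref{prop:BoundsAlpha}), and the boundary values $\phi_\gamma(0^+)=1$, $\phi_\gamma(1)=0$. You carry it out more carefully than the paper. You check via $\kappa(0^+)=+\infty$ that the mode $p^+_\gamma$ is interior for every $\gamma>0$. You also flag the case $\alpha^\star_\gamma=M_\gamma$, which the paper's proof passes over.

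Your plan goes wrong in expecting that this equality case can be excluded. It cannot, so your fallback is necessary, not optional.

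To see this, apply your own identity to all $x\in\cS$, not only those with $P_X(x)\le P^\star_\gamma(x)$. If $\alpha^\star_\gamma=M_\gamma$, then $M_\gamma=P_X(x)\,\phi_\gamma(P^\star_\gamma(x))/P^\star_\gamma(x)\le M_\gamma P_X(x)/P^\star_\gamma(x)$. This gives $P^\star_\gamma(x)\le P_X(x)$ for every $x$. Summing over $\cS$ forces equality everywhere, so $P^\star_\gamma(x)=P_X(x)=p^+_\gamma$ for all $x\in\cS$.

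This configuration is attainable. Take $P_X$ uniform on $N\ge 5$ points and $\gamma=\kappa(1/N)=N-\tfrac{2(N-1)}{\log N}$. This $\gamma$ is positive since $1/N<\rmw$; for instance, $N=10$ gives $\gamma\approx 2.18$. Then:
\begin{itemize}
\item $p^+_\gamma=1/N$;
\item $P^\star_\gamma=P_X$ by Corollary~\ref{cor:ConseqTheorem1};
\item $\alpha^\star_\gamma=\phi_\gamma(1/N)=M_\gamma>1$.
\end{itemize}
So $\phi_\gamma(p)=\alpha^\star_\gamma$ has the single root $p^+_\gamma$, which lies in neither $[0,p^+_\gamma)$ nor $(p^+_\gamma,1)$.

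Hence both bullet points of the lemma fail in exactly this degenerate case, although the root count still holds. The strict inclusions hold precisely when $P_X$ is not uniform with masses $p^+_\gamma$. In particular they hold under the non-uniformity hypothesis of Theorem~\ref{thm:Signdi}, which is where the lemma is used. Either add that hypothesis to the statement, or keep your closed-interval fallback with $p_{\gamma,a}=p_{\gamma,b}=p^+_\gamma$ in the degenerate case.
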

	\begin{proof}
		We recall that $\phi_\gamma(p)$ is unimodal (Proposition~\ref{prop:properties_of_phi_gamma}); hence $\phi_\gamma(p) - \alpha_\gamma^\star = 0$ has at most two roots. At least one root exists since
		\begin{equation}
			\alpha_\gamma^\star \leq \max_{p \in (0,1)} \phi_\gamma(p),
		\end{equation}
		as shown in Proposition~\ref{prop:BoundsAlpha}. Moreover, since $\phi_\gamma(0^+)=1$, $\phi_\gamma(1)=0$, and $\phi_\gamma$ is unimodal, if $\alpha_\gamma^\star < 1$ there is only one root, $p_{\gamma,b}$. This completes the proof.
	\end{proof}
	
	The next result (proof in  Appendix~\ref{app:Signdi}) characterizes how the sequence of sign changes in $\{d_i \}$ behaves. 
	\begin{theorem}
		\label{thm:Signdi} Suppose that $P_X$ is \emph{not} uniform and $\gamma > 0$.  Then, the following holds:
		\begin{itemize}
			\item if $p_{(i)} \in (0,p_{\gamma,a}]$, then $d_i \geq 0$;
			\item if $p_{(i)} \in (p_{\gamma,a},p_{\gamma,b})$, then $d_i < 0$;
			\item if $p_{(i)} \in [ p_{\gamma,b},1)$, then $d_i \geq 0$;
		\end{itemize}
		where $0 \!\leq\!p_{\gamma,a}\!\le\!  p^+_\gamma \!\le\! p_{\gamma,b} < 1$ are defined in Lemma~\ref{lemma:Roots}. 
	\end{theorem}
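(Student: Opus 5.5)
The proof reduces the comparison of $P_X(x)$ with $P^\star_\gamma(x)$ to comparing $\phi_\gamma(P_X(x))$ with $\alpha^\star_\gamma$. Lemma~\ref{lemma:Roots} then lets us read off the sign pattern from the shape of $\phi_\gamma$.

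\textbf{Step 1 (from ordered entries to pointwise differences).} By Corollary~\ref{cor:ConseqTheorem1}, $P_X$ and $P^\star_\gamma$ induce the same ordering on $\cS$. So if $x_{(i)}$ is an element carrying the $i$-th largest mass of $P_X$, it also carries the $i$-th largest mass of $P^\star_\gamma$. With ties the values still match. Hence $d_i = P_X(x_{(i)}) - P^\star_\gamma(x_{(i)})$, and it suffices to determine the sign of $P_X(x)-P^\star_\gamma(x)$ for each $x\in\cS$ as a function of $p=P_X(x)$. Since $P_X$ is not uniform, every such $p$ lies in $(0,1)$, which is the domain of $\phi_\gamma$ and of $(\sfL_\gamma')^{-1}$.

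\textbf{Step 2 (pointwise criterion).} By Theorem~\ref{thm:MinimizationFocalEntropy}, $P^\star_\gamma(x)=(\sfL_\gamma')^{-1}(-\alpha^\star_\gamma/p)$. By Proposition~\ref{prop:AnalyticalPropFocalLossInverse}, $(\sfL_\gamma')^{-1}$ is strictly increasing on $(-\infty,0)$, and $p=(\sfL_\gamma')^{-1}(\sfL_\gamma'(p))$. Therefore
\begin{equation*}
P^\star_\gamma(x) > p \iff -\frac{\alpha^\star_\gamma}{p} > \sfL_\gamma'(p) \iff \alpha^\star_\gamma < -p\,\sfL_\gamma'(p)=\phi_\gamma(p),
\end{equation*}
and similarly with equality and with the reverse inequality. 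Thus $d<0$ iff $\phi_\gamma(p)>\alpha^\star_\gamma$, $d=0$ iff $\phi_\gamma(p)=\alpha^\star_\gamma$, and $d>0$ iff $\phi_\gamma(p)<\alpha^\star_\gamma$.

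\textbf{Step 3 (superlevel set of $\phi_\gamma$).} It remains to show $\{p\in(0,1):\phi_\gamma(p)>\alpha^\star_\gamma\}=(p_{\gamma,a},p_{\gamma,b})$. For this I use Proposition~\ref{prop:properties_of_phi_gamma}: $\phi_\gamma(0^+)=1$, $\phi_\gamma(1)=0$, and $\phi_\gamma$ is strictly increasing on $(0,p^+_\gamma]$ and strictly decreasing on $[p^+_\gamma,1)$. Since $\kappa$ is strictly decreasing, the sign of $\gamma-\kappa(p)$ changes at most once, from negative to positive. The increasing part may be empty, with $p^+_\gamma=0^+$. I would split into two cases.
\begin{itemize}
\item If $\alpha^\star_\gamma<1$, then $p_{\gamma,a}=0$. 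On $(0,p^+_\gamma]$ we have $\phi_\gamma>\phi_\gamma(0^+)=1>\alpha^\star_\gamma$. On the decreasing branch, $\phi_\gamma>\alpha^\star_\gamma$ exactly before the unique crossing $p_{\gamma,b}$.
\item If $\alpha^\star_\gamma\ge1$, the increasing branch stays below $\alpha^\star_\gamma$ up to $p_{\gamma,a}$ and above it afterwards, by strict monotonicity. The decreasing branch is above $\alpha^\star_\gamma$ up to $p_{\gamma,b}$ and below it afterwards.
\end{itemize}
At $p=p_{\gamma,a}$ (when positive) and at $p=p_{\gamma,b}$ we have equality, so $d=0$, which is consistent with the closed endpoints in the statement.

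\textbf{Main obstacle.} The delicate part is the boundary bookkeeping in Step~3. This covers the case where $\phi_\gamma$ has no increasing branch, the limiting case $\alpha^\star_\gamma=\max\phi_\gamma$ where the two roots merge at $p^+_\gamma$, and $\alpha^\star_\gamma=1$ where $\phi_\gamma(0^+)=1$ is not attained. All of these rely on the strict monotonicity of each branch, so that $\phi_\gamma$ cannot equal $\alpha^\star_\gamma$ on a nondegenerate interval.
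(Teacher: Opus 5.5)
Your proposal is correct and follows essentially the same route as the paper: strict monotonicity of $(\sfL_\gamma')^{-1}$ reduces the sign of $d_i$ to comparing $\phi_\gamma(p_{(i)})$ with $\alpha^\star_\gamma$, and the three regions are then read off from the unimodal shape of $\phi_\gamma$ and the roots in Lemma~\ref{lemma:Roots}. Your Step~3 spells out the superlevel-set identification that the paper simply asserts. Two of your edge cases are in fact vacuous: since $\kappa(0^+)=+\infty$, the increasing branch is never empty for $\gamma>0$, and the roots cannot merge because $\alpha^\star_\gamma<\max\phi_\gamma$ strictly when $P_X$ is not uniform.
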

	\begin{figure}
		\raggedright
		{
\pgfmathsetmacro{\A}{9.0}
\pgfmathsetmacro{\aa}{1.5}
\pgfmathsetmacro{\bb}{2.7}
\pgfmathsetmacro{\alphastar}{1.2}

\definecolor{acc}{RGB}{32,102,148}
\definecolor{accLight}{RGB}{181,209,226}
\definecolor{band}{RGB}{252,205,170}
\definecolor{grid}{RGB}{210,210,210}

\pgfmathdeclarefunction{phiBump}{1}{%
  \pgfmathparse{ 1 - (#1) + \A * (#1)^\aa * (1-#1)^\bb }%
}

\begin{tikzpicture}
\begin{axis}[
    width= 9cm, height=5cm,
    xmin=0, xmax=1.05,
    ymin=0, ymax=1.50,
    axis lines=left,
    axis line style={black!70, thick},
    axis background/.style={fill=none},
    xlabel={$p$}, 
    ylabel={$\phi_\gamma(p)$},
    y label style={
      at={(axis description cs:0,1)},
      anchor=south,
      xshift = - 10pt,
      yshift = 12pt,
      rotate=-90
    },
    label style={font=\small},
    tick style={draw=grid},
    tick align=outside,
    major tick length=2pt, minor tick length=1pt,
    xtick={1},
    xticklabels={$1$},
    ytick={0,0.5,1},
    yticklabels={$0$,$0.5$,$1$,$1.5$},
    grid=none,
    minor grid style={draw=grid, line width=0.2pt},
    major grid style={draw=grid, line width=0.3pt},
    clip=false,
    every axis plot/.style={line join=round}
  ]

    \addplot[
      name path=phi,
      domain=0:1, samples=800,
      ultra thick, acc
    ] { phiBump(x) };

    \path[name path=axisline] (axis cs:0,0) -- (axis cs:1,0);
    \addplot[accLight!70, opacity=0.55] fill between[of=phi and axisline];

    \addplot[name path=alpha, dashed, thick, acc!60!black]
      coordinates {(0,\alphastar) (0.8,\alphastar)};
    \node[anchor=east, font=\footnotesize, text=acc!60!black]
      at (axis cs:0,\alphastar) {$\alpha_\gamma^{\star}$};

    \path [name intersections={of=phi and alpha, by={Ia,Ib}}];

    \node[
      draw=acc!60!black,
      fill=white,
      rounded corners=2pt,
      inner sep=2pt,
      font=\small,
      anchor=west
    ] at ($(Ib)+(0.06,-0.58)$) {$d_i \ge 0$};
    
    \node[
      draw=acc!60!black,
      fill=white,
      rounded corners=2pt,
      inner sep=2pt,
      font=\small,
      anchor=west
    ] at ($(Ib)+(-0.19,-0.45)$) {$d_i < 0$};

\node[
      draw=acc!60!black,
      fill=white,
      rounded corners=2pt,
      inner sep=2pt,
      font=\small,
      anchor=west
    ] at ($(Ib)+(-0.37,-0.58)$) {$d_i \ge 0$};

    \addplot[band, opacity=0.35]
      fill between[
        of=phi and alpha,
        split,
        every segment no 0/.style={fill=none},
        every segment no 2/.style={fill=none}
      ];

    \path[name path=bandtop] (Ia|-{axis cs:0,\alphastar}) -- (Ib|-{axis cs:0,\alphastar});
    \path[name path=bandbot] (Ia|-{axis cs:0,0})          -- (Ib|-{axis cs:0,0});
    \addplot[band, opacity=0.35] fill between[of=bandtop and bandbot];

    \draw[densely dashed, thick, black!45]
      (Ia |- {axis cs:0,0}) -- (Ia |- {axis cs:0,\alphastar})
      (Ib |- {axis cs:0,0}) -- (Ib |- {axis cs:0,\alphastar});

    \fill[acc!80!black] (Ia) circle (1.3pt);
    \fill[acc!80!black] (Ib) circle (1.3pt);
    \node[below=3pt, font=\normalsize]
      at (Ia |- {axis cs:0,0}) {$p_{\gamma,a}$};
    \node[below=3pt, font=\normalsize]
      at (Ib |- {axis cs:0,0}) {$p_{\gamma,b}$};

    \pgfmathsetmacro{\steps}{2000}
    \pgfmathsetmacro{\pmax}{0}
    \pgfmathsetmacro{\vmax}{phiBump(0)}
    \pgfmathtruncatemacro{\N}{\steps}
    \pgfplotsinvokeforeach{0,...,\N}{%
      \pgfmathsetmacro{\p}{#1/\steps}%
      \pgfmathsetmacro{\val}{phiBump(\p)}%
      \ifdim \val pt > \vmax pt
        \xdef\pmax{\p}%
        \xdef\vmax{\val}%
      \fi
    }

    \draw[densely dashed, thick, acc!60]
      ({axis cs:\pmax,0}) -- ({axis cs:\pmax,\vmax});

    \fill[acc!90!black] ({axis cs:\pmax,\vmax}) circle (1.5pt)
      node[above, yshift=3pt, font=\footnotesize] {$\phi_{\max}$};
    \node[below=1pt, font=\normalsize]
      at ({axis cs:\pmax,0}) {$p^{+}_\gamma$};

  \end{axis}
\end{tikzpicture}}
		\vspace{-0.8cm}
		\caption{Illustration of Theorem~\ref{thm:Signdi}.}
		\label{fig:sign_changes}
		\vspace{-0.4cm}
	\end{figure}
	Figure~\ref{fig:sign_changes} illustrates the result in Theorem~\ref{thm:Signdi}. In words, Theorem~\ref{thm:Signdi} has the following consequences: 
	\begin{itemize}[leftmargin=*]
		\item \textbf{Small-to-moderate probabilities} $\boldsymbol{(p_{\gamma,a},\,p_{\gamma,b})}$.
		In this regime $d_i<0$ and hence, $p^{\star}_{(i)} > p_{(i)}$.
		The focal-loss amplifies these probabilities, which is the principal mechanism by which it mitigates class imbalance, i.e., the probability mass is shifted toward mid–range entries.
		
		\item \textbf{Large probabilities} $\boldsymbol{[p_{\gamma,b},\,1)}$.
		Here $d_i\geq 0$ and hence, $p^{\star}_{(i)} \leq p_{(i)}$.
		The focal-loss downweights high-probability (``easy”) samples. Part of this effect is normalization: since mid-range probabilities are amplified and since $\sum_i p^{\star}_{(i)}=1$, then the largest entries must decrease.
		
		\item \textbf{Very small probabilities} $\boldsymbol{(0,\,p_{\gamma,a}]}$.
		In this regime, which we refer to as {\em over-suppression regime}, we also have $d_i \geq 0$ and hence, $p^{\star}_{(i)} \leq p_{(i)}$.
		Thus, extremely small probabilities are further suppressed, which can, in principle, exacerbate imbalance at the extreme tail. However, as shown in~\eqref{eq:UBPgammaStar}, for sufficiently large values of $\gamma$ the interval $ (0,\,p_{\gamma,a}]$ is empty or negligible. Thus, while in practice this effect is typically muted, one should not ignore this regime especially in extremely imbalanced regimes.  
	\end{itemize}

	An immediate consequence of Theorem~\ref{thm:Signdi} is provided in the next corollary (see also Figure~\ref{fig:sign_changes}).
	\begin{corollary} 
		\label{cor:NumberOfSignChanges}
		Suppose that $P_X$ is \emph{not} uniform and $\gamma > 0$.  Then, 
		the sequence $\{ d_i \},i \in \bbN,$ has at least one and at most two sign changes. 
	\end{corollary}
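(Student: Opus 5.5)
The lower bound is Lemma~\ref{lem:minimum_num_sign_change}, and the upper bound will come from reading Theorem~\ref{thm:Signdi} along the ordered sequence. Since $P_X$ is not uniform and $\gamma>0$, Corollary~\ref{cor:ConseqTheorem1} gives $P^\star_\gamma\neq P_X$, so $\{d_i\}$ is not identically zero. Because $\sum_i d_i = 0$, the sequence must contain both a strictly positive and a strictly negative entry, which yields at least one sign change.

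For the upper bound, I will use that $p_{(1)}\ge p_{(2)}\ge\cdots$ is non-increasing in $i$. The three intervals $[p_{\gamma,b},1)$, $(p_{\gamma,a},p_{\gamma,b})$ and $(0,p_{\gamma,a}]$ of Theorem~\ref{thm:Signdi} are ordered from right to left. Hence the indices split into three consecutive (possibly empty) blocks, in this order:
\begin{itemize}
\item an initial block $i\le i_1$ with $p_{(i)}\ge p_{\gamma,b}$, where $d_i\ge 0$;
\item a middle block $i_1<i\le i_2$ with $p_{(i)}\in(p_{\gamma,a},p_{\gamma,b})$, where $d_i<0$;
\item a final block $i>i_2$ with $p_{(i)}\le p_{\gamma,a}$, where $d_i\ge 0$.
\end{itemize}
Here I use $p_{\gamma,a}\le p_{\gamma,b}$ from Lemma~\ref{lemma:Roots}.

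Counting sign changes means counting changes between strictly positive and strictly negative entries, with zeros ignored. The signs therefore follow the pattern ``$\ge0$, then $<0$, then $\ge0$''. Any nonzero entries in the first block are positive, all entries in the middle block are negative, and any nonzero entries in the last block are positive. The subsequence of nonzero signs is thus of the form $+\cdots+,-\cdots-,+\cdots+$, which has at most two sign changes.

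The only delicate point is the infinite-support case, where $i\in\bbN$ and the last block may be infinite. This causes no difficulty, since monotonicity of $p_{(i)}$ still forces the block structure. Also, for $i$ beyond $|\cS|$ both entries vanish, so $d_i=0$, which does not add sign changes. I expect no real obstacle beyond handling zeros correctly in the definition of a sign change.
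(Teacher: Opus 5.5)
Your proposal is correct and follows the paper's own route. The lower bound is exactly Lemma~\ref{lem:minimum_num_sign_change}: a sequence that is not identically zero and sums to zero must contain both signs. The upper bound is the ``immediate consequence'' of Theorem~\ref{thm:Signdi}: the non-increasing $p_{(i)}$ pass through the bins $[p_{\gamma,b},1)$, $(p_{\gamma,a},p_{\gamma,b})$, $(0,p_{\gamma,a}]$ in that order, giving the sign pattern $\ge 0$, $<0$, $\ge 0$. Your treatment of zero entries, of indices beyond $|\cS|$, and of the step from $P^\star_\gamma\neq P_X$ to $\{d_i\}\not\equiv 0$ (via strict monotonicity of $P_X(x)\mapsto P^\star_\gamma(x)$) fills in details the paper leaves implicit.
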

	
	\subsection{Sign Changes for Small Support Sizes}
	We show that when the support is binary (proof in Appendix~\ref{app:binary_regime}) and possibly ternary, we have $p_{\min} > p_{\gamma,a}$, i.e., the {\em over-suppression regime} never occurs. 
	\begin{prop} \label{prop:binary_regime}
		Let $|\cS|=2$.
		Then, for all $\gamma>0$, it holds that $d_{1} \geq 0$ and $d_{2} < 0$. 
	\end{prop}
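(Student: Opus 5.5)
Write $P_X(1)=p\in(0,\tfrac12]$ and $P^\star_\gamma(1)=q$, so that $p_{(1)}=1-p$, $p_{(2)}=p$ and, by Corollary~\ref{cor:ConseqTheorem1}, $p^\star_{(1)}=1-q$ and $p^\star_{(2)}=q$. Since $d_1+d_2=0$, it is enough to prove $q>p$: this gives $d_2=p-q<0$ and $d_1=-d_2>0$. The uniform case $p=\tfrac12$ gives $d_1=d_2=0$ by Lemma~\ref{lem:minimum_num_sign_change}, so it must be excluded from the strict claim, as in Theorem~\ref{thm:Signdi}. Henceforth I take $p<\tfrac12$.

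\textbf{Reduction to a comparison of $\phi_\gamma$ values.} By \eqref{eq:optimizer}, $P_X(x)\,\sfL_\gamma'(P^\star_\gamma(x))=-\alpha^\star_\gamma$ for both $x\in\{0,1\}$. Hence $q$ is a root of
\begin{equation*}
G(t)=p\,\sfL_\gamma'(t)-(1-p)\,\sfL_\gamma'(1-t), \qquad t\in(0,1).
\end{equation*}
By the strict convexity in Proposition~\ref{prop:AnalyticalPropFocalLoss}, $\sfL_\gamma'$ is strictly increasing. Therefore $t\mapsto G(t)$ is strictly increasing and $q$ is its unique zero. It follows that $q>p$ if and only if $G(p)<0$. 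Using \eqref{eq:PhiFunc}, $G(p)=-\phi_\gamma(p)+\phi_\gamma(1-p)$, so the whole statement reduces to
\begin{equation*}
\phi_\gamma(p)>\phi_\gamma(1-p)\qquad\text{for all } p\in(0,\tfrac12),\ \gamma>0.
\end{equation*}

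\textbf{Case $\gamma\ge1$.} Here I would bypass the inequality and use Proposition~\ref{prop:binary-focal-minimizer}, which gives $q\ge q_\gamma$. Since $s:=p/(1-p)<1$ and $1/\gamma\le1$, we have $s^{1/\gamma}\ge s$, hence $q_\gamma=s^{1/\gamma}/(1+s^{1/\gamma})\ge p$. The inequality is strict when $\gamma>1$, which settles that range. At $\gamma=1$ only $q\ge p$ follows, so that value should also be handled by the direct argument below.

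\textbf{Case $0<\gamma\le 1$ (main obstacle).} Here $q_\gamma\le p$, so the lower bound is useless and I must prove the $\phi_\gamma$ inequality directly. From \eqref{eq:firstDer},
\begin{equation*}
\phi_\gamma(p)=(1-p)^{\gamma-1}\bigl(\gamma p\log\tfrac1p+1-p\bigr).
\end{equation*}
After dividing by $(1-p)^\gamma$, the claim becomes
\begin{equation*}
D(\gamma):=1+\gamma s\log\tfrac1p-s^{\gamma}\Bigl(1+\tfrac{\gamma}{s}\log\tfrac{1}{1-p}\Bigr)>0 .
\end{equation*}
We have $D(0)=0$. I would then show $D'(0)>0$ and that $\gamma\mapsto D(\gamma)$ is concave on $[0,1]$. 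Concavity holds because $-s^\gamma$ and $-\gamma s^\gamma$ are concave for $s<1$, while the remaining term is linear in $\gamma$. It remains to show $D(1)\ge0$, which is a one-variable inequality in $p$. Concavity together with $D(0)=0$ and $D(1)\ge 0$ then gives $D>0$ on $(0,1]$. The derivative at zero is
\begin{equation*}
D'(0)=s\log\tfrac1p+\log\tfrac1s-\tfrac1s\log\tfrac1{1-p}.
\end{equation*}
Its positivity for $p<\tfrac12$ is an elementary one-variable inequality; it vanishes at $p=\tfrac12$, and I would verify it by differentiating in $p$. The endpoint $D(1)=1+s\log\tfrac1p-s-\log\tfrac1{1-p}$ is handled the same way. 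Confirming these two scalar inequalities, especially near $p\to\tfrac12$ where they degenerate to equality, is the delicate part. If the concavity route breaks, my fallback is to use Proposition~\ref{prop:properties_of_phi_gamma}: $\kappa(p)\le0$ on $[\rmw,1)$ makes $\phi_\gamma$ decreasing there, which immediately gives $\phi_\gamma(p)>\phi_\gamma(1-p)$ for $p\ge\rmw$. Only $p<\rmw$ would then need the explicit estimate, and there the bound $\log\tfrac1{1-p}\le s$ is sharp enough.
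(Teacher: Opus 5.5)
Your reduction is correct. By strict monotonicity of $G$, $q>p$ is equivalent to $\phi_\gamma(p)>\phi_\gamma(1-p)$ for $p\in(0,\tfrac12)$. The case $\gamma>1$ via $q\ge q_\gamma>p$ from Proposition~\ref{prop:binary-focal-minimizer} is fine. You are also right that $p=\tfrac12$ must be excluded; the paper assumes non-uniform $P_X$ implicitly, through Corollary~\ref{cor:NumberOfSignChanges}.

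The argument for $0<\gamma\le1$, however, rests on a false claim: $-\gamma s^\gamma$ is not concave. With $\ell=\log s<0$ one has $(\gamma s^\gamma)''=\ell s^\gamma(2+\gamma\ell)<0$ for $\gamma<2/|\ell|$, so $-\gamma s^\gamma$ is \emph{convex} there. Writing $b=\frac1s\log\frac1{1-p}$, you get $D''(\gamma)=|\ell|\,s^\gamma\bigl(\ell+b(2+\gamma\ell)\bigr)$. For $p=0.4$ this is positive on all of $[0,1]$, so $D$ is strictly convex there; numerically $D(\tfrac12)\approx0.176$ lies below the chord value $\tfrac12 D(1)\approx0.217$. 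The step combining concavity with $D(0)=0$ and $D(1)\ge0$ is therefore unavailable. Even if it were, it would give only $D(1)\ge0$, not the strict inequality you need at $\gamma=1$. A repair in this spirit exists: $D''$ has the sign of a decreasing affine function of $\gamma$, so $D$ is convex-then-concave, and $D(0)=0$, $D'(0)>0$, $D(1)>0$ suffice. You would still owe both scalar inequalities.

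Your fallback does close the proof, and for every $\gamma>0$ at once, so neither the split at $\gamma=1$ nor Proposition~\ref{prop:binary-focal-minimizer} is needed.
\begin{itemize}
\item For $p\in[\rmw,\tfrac12)$, both $p$ and $1-p$ lie in $[\rmw,1)$, where $\phi_\gamma$ is strictly decreasing.
\item For $p<\rmw$, $\log\frac1{1-p}=\log(1+s)\le s$ gives $b\le1$. Hence $D(\gamma)\ge1-s^\gamma(1+\gamma)>1-(\rme s)^\gamma\ge0$, because $s<\rmw/(1-\rmw)\approx0.255<1/\rme$.
\end{itemize}

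The paper needs no estimate at all. $\phi_\gamma$ increases from $\phi_\gamma(0^+)=1$ up to $p^+_\gamma<\tfrac12$ and decreases afterwards, with $\phi_\gamma(\tfrac12)=2^{-\gamma}(1+\gamma\log2)<1$. So for $p\le p^+_\gamma$ one has $\phi_\gamma(p)>1>\phi_\gamma(1-p)$, and for $p>p^+_\gamma$ both points lie on the decreasing branch. The paper phrases this through $\alpha^\star_\gamma$: if $\alpha^\star_\gamma\ge1$ then $p_{\gamma,b}<\tfrac12\le p_{(1)}$, and if $\alpha^\star_\gamma<1$ then $p_{\gamma,a}=0$. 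In both cases Theorem~\ref{thm:Signdi}, together with the forced sign change, yields $d_1\ge0>d_2$.
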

	\begin{conj}
		\label{conj:S3}
		Let $|\cS|=3$.
		Then, for all $\gamma>0$, it holds that $p_{\min} > p_{\gamma,a}$, i.e., $d_3 < 0$ (i.e., the over-suppression regime is absent for $|\cS|=3$). 
	\end{conj}
	We now provide numerical and theoretical evidence in support of Conjecture~\ref{conj:S3}.
	Numerically, we carried out extensive simulations (see Figure~\ref{fig: heatMap}) and we could not find cases where $d_3 \geq 0$.
	To provide theoretical evidence, we use the next result (proof in Appendix~\ref{app:ternary_regime}).

	\begin{figure}
		\centering
		\input{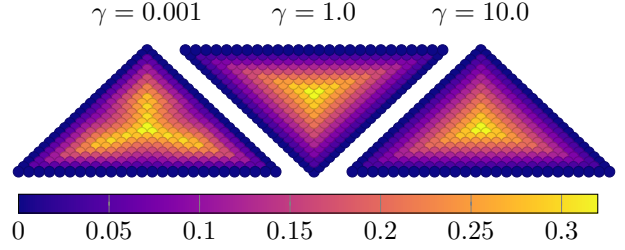}
		\caption{Heatmap of $p_{\min} - p_{\gamma,a}$ over the ternary probability simplex.}
		\label{fig: heatMap}
		\vspace{-0.3cm}
	\end{figure}
	\begin{prop} \label{prop:ternary_regime}
		Let $|\cS|=3$. Then, for all $\gamma>0$, it holds that $d_1 \geq 0$.
	\end{prop}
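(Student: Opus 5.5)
The plan is to translate each sign condition $d_i\ge 0$ into a comparison between $\phi_\gamma(p_{(i)})$ and $\alpha_\gamma^\star$, and then derive a contradiction from item~5 of Proposition~\ref{prop:properties_of_phi_gamma}.

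\textbf{Step 1 (sign criterion).} From~\eqref{eq:optimizer}, $p^\star_{(i)}=(\sfL_\gamma')^{-1}(-\alpha_\gamma^\star/p_{(i)})$. The inverse $(\sfL_\gamma')^{-1}$ is strictly increasing by Proposition~\ref{prop:AnalyticalPropFocalLossInverse}. Hence $p^\star_{(i)}\le p_{(i)}$ holds if and only if $-\alpha_\gamma^\star/p_{(i)}\le \sfL_\gamma'(p_{(i)})$, that is, if and only if $\alpha_\gamma^\star\ge \phi_\gamma(p_{(i)})$. Thus
\[
d_i\ge 0 \iff \phi_\gamma(p_{(i)})\le\alpha_\gamma^\star,\qquad d_i<0\iff \phi_\gamma(p_{(i)})>\alpha_\gamma^\star .
\]
This is essentially the content behind Theorem~\ref{thm:Signdi}.

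\textbf{Step 2 (set up a contradiction).} Write $p_1\ge p_2\ge p_3$ for the entries of $P_X$, and assume $P_X$ is not uniform; otherwise $d_i=0$ for all $i$ by Corollary~\ref{cor:ConseqTheorem1}. Suppose $d_1<0$, so that $\alpha_\gamma^\star<\phi_\gamma(p_1)$. Since the $d_i$ sum to zero, at least one of $d_2,d_3$ is positive.

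By Proposition~\ref{prop:properties_of_phi_gamma}, $p_1\ge 1/3>\rmw\ge p^+_\gamma$, and $\phi_\gamma$ is decreasing on $(p^+_\gamma,1)$ by unimodality. If some $p_i\ge p^+_\gamma$ with $i\in\{2,3\}$, then $\phi_\gamma(p_i)\ge\phi_\gamma(p_1)>\alpha_\gamma^\star$, which forces $d_i<0$. So any index with $d_i>0$ must have $p_i<p^+_\gamma$.

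I claim this forces $d_3\ge0$. Either $d_3>0$ directly, or $d_2>0$ with $p_3\le p_2<p^+_\gamma$. In the latter case $\phi_\gamma$ is increasing on $(0,p^+_\gamma)$, so $\phi_\gamma(p_3)\le\phi_\gamma(p_2)\le\alpha_\gamma^\star$, which again gives $d_3\ge 0$. Moreover, in either case $p_3<p^+_\gamma$.

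\textbf{Step 3 (contradiction via item~5).} Let $u=p_3\in(0,1/3]$. Then $p_1+p_2=1-u$, so $p_1\ge (1-u)/2\ge 1/3>p^+_\gamma$. Using monotonicity on the decreasing branch,
\[
\phi_\gamma(u)\le\alpha_\gamma^\star<\phi_\gamma(p_1)\le\phi_\gamma\!\left(\tfrac{1-u}{2}\right).
\]
This contradicts item~5 of Proposition~\ref{prop:properties_of_phi_gamma}, which gives $\phi_\gamma(u)\ge\phi_\gamma((1-u)/2)$ for $u\in(0,1/3]$. Hence $d_1\ge0$.

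\textbf{Main obstacle.} The crux is the careful bookkeeping in Step~2. I must ensure that every relevant point lies on the correct monotone branch of $\phi_\gamma$, which relies on $1/3>\rmw\ge p^+_\gamma$. I must also reduce the case where the positive $d_i$ occurs at $p_2$ to the case $d_3\ge0$. Once that reduction is in place, item~5 does all the real work.
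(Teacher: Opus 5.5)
Your proposal is correct and follows essentially the paper's argument: assume $d_1<0$, deduce that the smallest mass must sit on the over-suppression side, then combine item~5 of Proposition~\ref{prop:properties_of_phi_gamma} with $2p_{(1)}+p_{(3)}\ge 1$ to reach a contradiction. The only difference is that you apply item~5 directly at $u=p_{(3)}$ and compare $\phi_\gamma$-values, whereas the paper applies it at $u=p_{\gamma,a}$ to get $p_{\gamma,b}\le(1-p_{\gamma,a})/2$; your version has the small advantage of not needing the convention $p_{\gamma,a}=0$ when $\alpha_\gamma^\star<1$.
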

	Now, since $d_1 \geq 0$, two cases can occur:
	\begin{enumerate}
		\item $d_2 \geq 0$ (i.e., $p_{(2)}> p_{\gamma,b}$): in this case, from Corollary~\ref{cor:NumberOfSignChanges}, we need $d_3 <0$, i.e., $p_{\min}> p_{\gamma,a}$. Therefore, for this case, Conjecture~\ref{conj:S3} holds.
		\item $d_2 <0$ (i.e., $p_{\gamma,a}<p_{(2)}< p_{\gamma,b}$): For this regime, proving that $d_3 <0$ (i.e., $p_{\min}> p_{\gamma,a}$) is an open problem. In other words, this is the only regime that must be addressed to prove Conjecture~\ref{conj:S3}.
	\end{enumerate}
	
	\begin{rem}
		Although for the case $|\cS| \leq 3$ it appears that the {\em over-suppression regime} never occurs for all $\gamma > 0$, the same is not true for $|\cS| \geq 4$. 
		Appendix~\ref{app:N4WithOversuppression} gives an example with $|\cS|=4$ where $p_{\min} < p_{\gamma,a}$.
	\end{rem}

	\subsection{Interplay between $\gamma$, Cardinality of the Support, and~$P_X$}
	We now seek to study the interplay between $\gamma$, the cardinality of the support $\cS$, and the values of $P_X$.  In particular, we provide sufficient conditions under which the over-suppression regime  does not exist. 
	
	\begin{figure}
		\raggedright
		\input{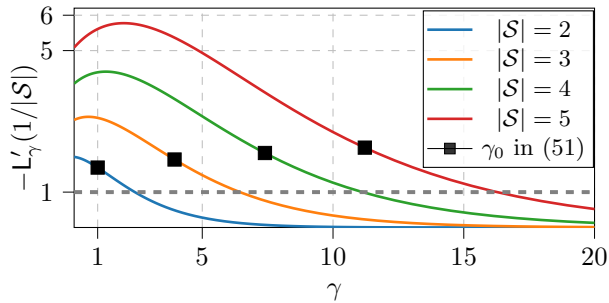}
		\vspace{-0.7cm}
		\caption{Illustration of  conditions in~\eqref{eq:sufficientCondition_p_a=0FurtherBounding} and~\eqref{eq:FinalGamma0Prop}.}
		\vspace{-0.4cm}
		\label{fig:focal_loss_prime}
	\end{figure}
	
	\begin{prop}
		\label{prop:SuffCond1Pa=0}
		Suppose that
		$|\cS| < \infty$ and 
        \begin{equation}
			-p_{\max} \sf  \sfL^{\prime}_\gamma \left( \frac{1}{|\cS|} \right) < 1. \label{eq:sufficientCondition_p_a=0}
		\end{equation}
		Then, $p_{\gamma,a}=0$, that is, the over-suppression regime  does not exist. 
	\end{prop}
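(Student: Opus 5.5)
By Lemma~\ref{lemma:Roots}, $p_{\gamma,a}=0$ holds by definition whenever $\alpha_\gamma^\star<1$, since the smallest root of $\phi_\gamma(p)=\alpha_\gamma^\star$ exists only if $\alpha_\gamma^\star\ge 1$. So the whole proof reduces to showing that condition~\eqref{eq:sufficientCondition_p_a=0} forces $\alpha_\gamma^\star<1$.

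To get this, I would invoke the upper bound in~\eqref{eq:BoundsOptAlpha} of Proposition~\ref{prop:BoundsAlpha}. For finite support $|\cS|=N$, it gives
\begin{equation*}
\alpha_\gamma^\star \le p_{\max}\, c_{N,\gamma} = -p_{\max}\,\sfL_\gamma'\!\left(\tfrac{1}{N}\right).
\end{equation*}
Combining this with the hypothesis~\eqref{eq:sufficientCondition_p_a=0} yields $\alpha_\gamma^\star<1$ immediately.

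For completeness I would also sketch why the bound in~\eqref{eq:BoundsOptAlpha} holds, since it is the only real content. Suppose $\alpha_\gamma^\star>p_{\max}c_{N,\gamma}$. Then for every $x\in\cS$ we have $-\alpha_\gamma^\star/P_X(x)<-c_{N,\gamma}=\sfL_\gamma'(1/N)$. By the strict monotonicity of $(\sfL_\gamma')^{-1}$ (Proposition~\ref{prop:AnalyticalPropFocalLossInverse}), every $P_\gamma^\star(x)<1/N$, so the entries sum to less than $1$. This contradicts the normalization $F(\alpha_\gamma^\star)=1$ in~\eqref{eq:normalization}.

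With $\alpha_\gamma^\star<1$ in hand, we use $\phi_\gamma(0^+)=1$ and the unimodality of $\phi_\gamma$ (Proposition~\ref{prop:properties_of_phi_gamma}). Together these imply that $\phi_\gamma>\alpha_\gamma^\star$ on $(0,p^+_\gamma]$, so no root exists below $p^+_\gamma$. Hence $p_{\gamma,a}=0$, and the interval $(0,p_{\gamma,a}]$ of Theorem~\ref{thm:Signdi} is empty, meaning the over-suppression regime does not exist.

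The only delicate point is the borderline $\alpha_\gamma^\star=1$: there $\phi_\gamma(p)=1$ could have a root near $0$. The strict inequality in~\eqref{eq:sufficientCondition_p_a=0} is exactly what rules this case out.
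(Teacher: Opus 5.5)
Your proof is correct and matches the paper's: the upper bound $\alpha_\gamma^\star \le p_{\max} c_{N,\gamma}$ from~\eqref{eq:BoundsOptAlpha} plus the hypothesis gives $\alpha_\gamma^\star<1$, and Lemma~\ref{lemma:Roots} then gives $p_{\gamma,a}=0$; your contradiction sketch of that bound is an equivalent rephrasing of the paper's monotonicity argument. One small correction to your closing remark: $\phi_\gamma$ is strictly increasing on $(0,p^+_\gamma)$ with $\phi_\gamma(0^+)=1$, so even $\alpha_\gamma^\star=1$ would give no root in $(0,p^+_\gamma)$, and the strict inequality is therefore not what rules out that borderline case.
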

	\begin{proof}
		Combining the upper bound on $\alpha_\gamma^\star$ in~\eqref{eq:BoundsOptAlpha} with the assumption in~\eqref{eq:sufficientCondition_p_a=0}, we arrive at $\alpha_\gamma^\star <1$.
		By Lemma~\ref{lemma:Roots},  $\alpha_\gamma^\star <1$ implies that $p_{\gamma,a}=0$. This concludes the proof of Proposition~\ref{prop:SuffCond1Pa=0}.
	\end{proof}
	The condition in~\eqref{eq:sufficientCondition_p_a=0} has three degrees of freedom, namely: $p_{\max}$, $\gamma$, and $|\cS|$. This condition can be further loosened to
	\begin{equation}
		-\sf  \sfL^{\prime}_\gamma \left( \frac{1}{| \cS|} \right)  <1.\label{eq:sufficientCondition_p_a=0FurtherBounding}
	\end{equation}
	From Figure~\ref{fig:focal_loss_prime}, for each $|\cS|$, there exists  $\bar{\gamma}$, that depends only on $|\cS|$, such that $p_{\gamma,a}=0$ for all $\gamma >\bar{\gamma}$, i.e., eventually the over-suppression regime  vanishes.
	
	The next proposition (proof in Appendix~\ref{app:SuffCondGammaMajor}) provides an additional bound on the minimum value of $\gamma$ that ensures that the over-suppression regime cannot occur.
	\begin{prop}
		\label{thm:SuffCondGammaMajor}
		Let $\kappa(\cdot)$ be given in~\eqref{eq:kappaP}. If $\gamma > \kappa(p_{\min})$, then $p_{\min} > p_{\gamma_a}$, that is, the over-suppression regime  does not exist.
	\end{prop}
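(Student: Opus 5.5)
The plan is to show that, under $\gamma > \kappa(p_{\min})$, the point $p_{\min}$ lies strictly to the right of the mode $p^+_\gamma$ of $\phi_\gamma$. Since Lemma~\ref{lemma:Roots} places the smallest root $p_{\gamma,a}$ in $[0, p^+_\gamma)$, this gives $p_{\gamma,a} < p^+_\gamma < p_{\min}$ directly. In particular, the bounds on $\alpha^\star_\gamma$ from Proposition~\ref{prop:BoundsAlpha} are not needed; only the shape of $\phi_\gamma$ matters.

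\textbf{Step 1: locate the mode through $\kappa$.} By Proposition~\ref{prop:properties_of_phi_gamma}, $\phi_\gamma$ is strictly increasing at points $p$ where $\gamma < \kappa(p)$ and strictly decreasing where $\gamma > \kappa(p)$. Since $\kappa$ is strictly decreasing on $(0,1)$, the set $\{p : \kappa(p) > \gamma\}$ is an initial interval $(0, \bar p)$. Here $\bar p$ is the unique crossing point of $\kappa$ with $\gamma$. It exists because $\kappa$ is continuous, $\kappa(p) \sim 1/p \to +\infty$ as $p \to 0^+$, and $\kappa \le 0 < \gamma$ on $[\rmw, 1)$. So $\phi_\gamma$ increases on $(0, \bar p)$ and decreases on $(\bar p, 1)$, which identifies $p^+_\gamma = \bar p$.

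\textbf{Step 2: compare $p_{\min}$ with the mode.} The hypothesis $\gamma > \kappa(p_{\min})$, combined with the strict monotonicity of $\kappa$, means $p_{\min}$ lies outside the initial interval $(0, \bar p)$. If $p_{\min} = \bar p$ we would have $\kappa(p_{\min}) = \gamma$, contradicting the strict inequality. Hence $p_{\min} > \bar p = p^+_\gamma$.

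\textbf{Step 3: conclude.} By Lemma~\ref{lemma:Roots}, either $p_{\gamma,a} = 0 < p_{\min}$, or $p_{\gamma,a} \in [0, p^+_\gamma)$. In both cases $p_{\gamma,a} < p_{\min}$. Theorem~\ref{thm:Signdi} then places every $p_{(i)}$ outside $(0, p_{\gamma,a}]$, so the over-suppression regime is empty.

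\textbf{Main obstacle.} The only delicate point is Step 1: rigorously identifying $p^+_\gamma$ with the crossing point of $\kappa$ and $\gamma$. This needs continuity of $\kappa$ and its limit behaviour at $0^+$, which are immediate from~\eqref{eq:kappaP}. It also requires the sign statement of Proposition~\ref{prop:properties_of_phi_gamma} to be read as a statement about $\phi_\gamma'$. If one prefers to avoid this identification, an alternative route is to note that $\phi_\gamma$ is strictly decreasing on $[p_{\min}, 1)$. Combined with $\alpha^\star_\gamma \le \phi_\gamma(p_{\min})$ from~\eqref{eq:BoundsOptAlpha2}, this shows that any root below $p_{\min}$ must be the increasing-branch root $p_{\gamma,a}$, and that this root is strictly less than $p_{\min}$.
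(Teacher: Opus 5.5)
Your proof is correct, and it is more direct than the paper's.

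\textbf{The paper's route.} The paper works through the sign pattern of $\{d_i\}$. Since $\kappa(p_{(i)})\le\kappa(p_{\min})<\gamma$ for all $i$, the values $\phi_\gamma(p_{(i)})$ are monotone in $i$. Combined with the sandwich $\phi_\gamma(p_{\max})\le\alpha^\star_\gamma\le\phi_\gamma(p_{\min})$ from \eqref{eq:BoundsOptAlpha2}, this gives a threshold index $k_M=\max\{i:\phi_\gamma(p_{(i)})\le\alpha^\star_\gamma\}$ with $d_i\ge 0$ for $i\le k_M$ and $d_i<0$ for $i>k_M$. The conclusion $p_{\min}>p_{\gamma,a}$ is then read off from this single $+\to-$ sign change.

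\textbf{Your route.} You avoid $\alpha^\star_\gamma$ altogether and use only the shape of $\phi_\gamma$. For $\gamma>0$, the hypothesis $\gamma>\kappa(p_{\min})$ is equivalent to $p_{\min}>p^+_\gamma$. Meanwhile $p_{\gamma,a}\le p^+_\gamma$ holds by construction in Lemma~\ref{lemma:Roots}.

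\textbf{What each buys.} Your version is shorter and gives a slightly stronger statement: $p_{\min}$ lies beyond the mode, not just beyond $p_{\gamma,a}$. It is also unaffected by two loose ends in the threshold argument:
\begin{itemize}
\item $\phi_\gamma(p_{(i)})$ is only non-decreasing, not strictly increasing, in $i$ when $P_X$ has ties.
\item In the boundary case $k_M=N$ (i.e.\ $\alpha^\star_\gamma=\phi_\gamma(p_{\min})$, as for uniform $P_X$), there is no index with $d_i<0$ from which to read off the conclusion.
\end{itemize}
In return, the paper's route displays the one-sign-change structure of $\{d_i\}$ used in Proposition~\ref{prop:Majorization}. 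That structure also follows from your conclusion via Theorem~\ref{thm:Signdi}.

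Your closing ``alternative route'' is unnecessary. The appeal to \eqref{eq:BoundsOptAlpha2} there plays no role, because strict decrease of $\phi_\gamma$ at $p_{\min}$ already puts the mode strictly to its left.
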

	We now present a final result giving a sufficient condition for $p_{\max} > p_{\gamma,b}$, i.e., $d_1 \geq 0$. In other words, the next proposition (proof in Appendix~\ref{app:SuffCond1Positive}) ensures that the maximum probability is reduced.
	\begin{prop}
		\label{prop:SuffCond1Positive}
		Suppose that $|\cS| < \infty$ and $\gamma > \gamma_0$ where
		\begin{equation}
			\label{eq:FinalGamma0Prop}
			\gamma_0 = -1 - \frac{1}{n} \rmW_{-1} \left( -n\exp (-n) \right ),
		\end{equation}
		with $n \!=\! \log \left( \frac{|\cS|}{|\cS|-1} \right )$. Then, $p_{\max} \!>\! p_{\gamma, b}$, i.e., $d_1 > 0$.
	\end{prop}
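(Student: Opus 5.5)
The plan is to turn $\gamma>\gamma_0$ into a pointwise bound $\phi_\gamma(p)<1$ for every $p\ge 1/|\cS|$. Then I would split on whether $\alpha_\gamma^\star<1$ or $\alpha_\gamma^\star\ge 1$, using Lemma~\ref{lemma:Roots} and Theorem~\ref{thm:Signdi}.

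\textbf{Step 1 (decode $\gamma_0$).} Write $N=|\cS|$ and $n=\log\frac{N}{N-1}$. Setting $y=-n(\gamma+1)$ in $y\rme^{y}=-n\rme^{-n}$ and simplifying gives
\begin{equation*}
(1+\gamma)\left(1-\tfrac{1}{N}\right)^{\gamma}=1 .
\end{equation*}
This equation has the trivial root $\gamma=0$, which corresponds to the branch $\rmW_0$. The branch $\rmW_{-1}$ gives the nontrivial root $\gamma_0$. The map $\gamma\mapsto(1+\gamma)(1-1/N)^\gamma$ is log-concave, so it exceeds $1$ on $(0,\gamma_0)$ and is below $1$ on $(\gamma_0,\infty)$. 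Hence $\gamma>\gamma_0$ is equivalent to $(1+\gamma)(1-1/N)^\gamma<1$.

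\textbf{Step 2 (bound on $\phi_\gamma$).} From~\eqref{eq:firstDer} and~\eqref{eq:PhiFunc},
\begin{equation*}
\phi_\gamma(p)=(1-p)^{\gamma-1}\left(\gamma\,p\log\tfrac1p+1-p\right).
\end{equation*}
Using $p\log\frac1p\le 1-p$, I get $\phi_\gamma(p)\le(1+\gamma)(1-p)^\gamma$. Since $p_{\max}\ge 1/N$, Step~1 gives
\begin{equation*}
\phi_\gamma(p_{\max})\le(1+\gamma)\left(1-\tfrac1N\right)^\gamma<1 .
\end{equation*}
Because $\phi_\gamma(0^+)=1$ and $\phi_\gamma$ is unimodal (Proposition~\ref{prop:properties_of_phi_gamma}), $\phi_\gamma\ge 1$ on $(0,p^+_\gamma]$. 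Therefore $p_{\max}$ lies strictly on the decreasing branch, $p_{\max}>p^+_\gamma$.

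\textbf{Step 3, case $\alpha_\gamma^\star\ge 1$.} Here $\phi_\gamma(p_{\max})<1\le\alpha_\gamma^\star$. On the decreasing branch $(p_\gamma^+,1)$, $\phi_\gamma$ crosses the level $\alpha_\gamma^\star$ exactly at $p_{\gamma,b}$, so $p_{\max}>p_{\gamma,b}$.

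\textbf{Step 3, case $\alpha_\gamma^\star<1$.} Lemma~\ref{lemma:Roots} gives $p_{\gamma,a}=0$. Suppose $p_{\max}\le p_{\gamma,b}$. If $p_{\max}<p_{\gamma,b}$, then every entry lies in $(p_{\gamma,a},p_{\gamma,b})$ and Theorem~\ref{thm:Signdi} gives $d_i<0$ for all $i$. This contradicts $\sum_i d_i=0$. The boundary case $p_{\max}=p_{\gamma,b}$ means $\phi_\gamma(p_{\max})=\alpha_\gamma^\star$, hence $d_1=0$ by the computation in Step~4. The remaining entries all lie strictly below $p_{\gamma,b}$ (they cannot all equal $p_{\max}$, since $P_X$ is not uniform), so they have $d_i<0$, which again contradicts the sum. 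Hence $p_{\max}>p_{\gamma,b}$ in both cases.

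\textbf{Step 4 (strictness $d_1>0$).} By Theorem~\ref{thm:MinimizationFocalEntropy} and the monotonicity of $(\sfL_\gamma')^{-1}$ (Proposition~\ref{prop:AnalyticalPropFocalLossInverse}),
\begin{equation*}
p^\star_{(1)}<p_{\max}\iff -\frac{\alpha_\gamma^\star}{p_{\max}}<\sfL'_\gamma(p_{\max})\iff \phi_\gamma(p_{\max})<\alpha_\gamma^\star .
\end{equation*}
We have $p_{\max}>p_{\gamma,b}>p_\gamma^+$, and $\phi_\gamma$ is strictly decreasing beyond its mode. It follows that $\phi_\gamma(p_{\max})<\phi_\gamma(p_{\gamma,b})=\alpha_\gamma^\star$, so $d_1>0$.

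The main obstacle is Step~1: checking the branch bookkeeping of $\rmW_{-1}$ and that the inequality holds for every $\gamma>\gamma_0$, not only near $\gamma_0$. The log-concavity argument handles the latter. A smaller care point is the boundary case $p_{\max}=p_{\gamma,b}$ when $\alpha_\gamma^\star<1$, which Step~3 treats separately.
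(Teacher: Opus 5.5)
Your proof is correct and follows essentially the paper's route: the same case split on $\alpha_\gamma^\star<1$ versus $\alpha_\gamma^\star\ge 1$, the same bound $\phi_\gamma(p)\le(1+\gamma)(1-p)^\gamma$, and the same characterization of $\gamma_0$ as the nontrivial root of $(1+\gamma)(1-1/|\cS|)^\gamma=1$. The only difference is where the bound is applied: the paper uses it at $p_{\gamma,b}$ to get $p_{\gamma,b}<1/|\cS|\le p_{\max}$, while you use it at $p_{\max}$ to get $\phi_\gamma(p_{\max})<1\le\alpha_\gamma^\star$. Your handling of the $\alpha_\gamma^\star<1$ case and of the strict inequality $d_1>0$ is more careful than the paper's. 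You also correctly make explicit that $P_X$ must be non-uniform, which both arguments need: for uniform $P_X$ one has $p_{\max}=p_{\gamma,b}=1/|\cS|$ and $d_1=0$.
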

	Figure~\ref{fig:focal_loss_prime} also reports $\gamma_0$ from~\eqref{eq:FinalGamma0Prop} for different values of $|\cS|$. The condition $\gamma \!>\! \gamma_0$ in Proposition~\ref{prop:SuffCond1Positive} is less stringent than~\eqref{eq:sufficientCondition_p_a=0FurtherBounding}, since the latter ensures both $d_1 \geq 0$ and $p_{\gamma,a}\!=\!0$, while the former guarantees only $d_1 \!\ge\! 0$.

	\subsection{Majorization}
	It has been observed empirically~\citep{mukhoti2020calibrating} that the focal-entropy optimizer $P_{\gamma}^\star$ typically has higher entropy than the original $P_X$. Here, we formalize this observation for distributions with finite support size, i.e., $|\cS| < \infty$. We start with the following definition.
	\begin{definition}[Majorization]
		For two vectors $x,y \in \bbR^n$, we say that $x$ majorizes $y$ (notation $x \succ y$) when
		\begin{equation}
			\sum_{i=1}^k x_{(i)} \geq \sum_{i=1}^k y_{(i)}, \ \text{for all} \ k =1,\ldots, n,
		\end{equation}
		where $x_{(i)}$ is the $i$-th largest entry of $x$. 
	\end{definition}
	An immediate consequence of Theorem~\ref{thm:Signdi} is given in the next proposition (proof in Appendix~\ref{app:Majorization}).
	\begin{prop}
		\label{prop:Majorization}
		Suppose that $p_{\min} > p_{\gamma_a}$. Then, $P_X$ majorizes $P_\gamma^\star$,   that is, $P_X \succ P^\star_\gamma$. 
	\end{prop}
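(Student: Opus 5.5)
Under the hypothesis $p_{\min} > p_{\gamma,a}$, every entry of $P_X$ lies in $(p_{\gamma,a},1)$. Theorem~\ref{thm:Signdi} then splits the support into just two bins. Entries with $p_{(i)} \in (p_{\gamma,a},p_{\gamma,b})$ have $d_i<0$, and entries with $p_{(i)} \ge p_{\gamma,b}$ have $d_i \ge 0$.

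The ordered entries $p_{(1)}\ge p_{(2)}\ge\cdots\ge p_{(N)}$ are non-increasing, where $N=|\cS|<\infty$. Hence there is an index $m\in\{0,1,\ldots,N\}$ such that $d_i\ge 0$ for $i\le m$ (these are exactly the indices with $p_{(i)}\ge p_{\gamma,b}$) and $d_i<0$ for $i>m$. In other words, the sequence $\{d_i\}$ is first non-negative and then negative, with a single sign change.

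Let $D_k=\sum_{i=1}^k d_i = \sum_{i=1}^k p_{(i)}-\sum_{i=1}^k p^\star_{(i)}$. Majorization $P_X\succ P^\star_\gamma$ is equivalent to $D_k\ge 0$ for all $k=1,\ldots,N$. Here we use that Corollary~\ref{cor:ConseqTheorem1} gives $P^\star_\gamma\ll P_X$ and a common ordering, so $P^\star_\gamma$ is supported on $\cS$ and the ordered entries match index by index. Both vectors sum to one, so $D_N=0$.

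We verify $D_k\ge 0$ in two ranges.
\begin{itemize}
\item For $k\le m$, $D_k$ is a sum of non-negative terms, so $D_k\ge0$.
\item For $k>m$, write $D_k = D_N-\sum_{i=k+1}^N d_i = -\sum_{i=k+1}^N d_i$. Every $d_i$ in this tail is negative, so $D_k>0$ for $k<N$, and $D_N=0$.
\end{itemize}
This establishes $\sum_{i=1}^k p_{(i)}\ge\sum_{i=1}^k p^\star_{(i)}$ for every $k$, i.e., $P_X\succ P^\star_\gamma$.

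The uniform case and the case $\gamma=0$ need separate handling, since Theorem~\ref{thm:Signdi} assumes a non-uniform $P_X$ and $\gamma>0$. In both cases Lemma~\ref{lem:minimum_num_sign_change} gives $d_i=0$ for all $i$, and majorization holds trivially.

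The only delicate point is the bookkeeping that turns the interval statements of Theorem~\ref{thm:Signdi} into the index statement "first non-negative, then negative". This rests on the monotonicity of $p_{(i)}$ in $i$ together with the hypothesis excluding the bin $(0,p_{\gamma,a}]$. Once that is in place, the argument is the standard single-crossing criterion for majorization.
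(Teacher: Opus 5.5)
Your proof is correct and follows essentially the same route as the paper. Under $p_{\min}>p_{\gamma,a}$, Theorem~\ref{thm:Signdi} forces $\{d_i\}$ to be non-negative and then negative, and combined with $\sum_i d_i=0$ this gives non-negative partial sums, i.e., $P_X\succ P^\star_\gamma$; your use of the monotonicity of $p_{(i)}$ for the initial-segment split and your separate handling of the uniform and $\gamma=0$ cases spell out points the paper leaves implicit.
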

	Note that sufficient conditions for which $p_{\min} > p_{\gamma_a}$ are provided in Proposition~\ref{prop:SuffCond1Pa=0} and Proposition~\ref{thm:SuffCondGammaMajor}.
	Combining the result in Proposition~\ref{prop:Majorization} with Schur-convexity, we have the following consequences.
	\begin{corollary}
		\label{cor:SchurConvexity}
		Suppose that $p_{\min} > p_{\gamma_a}$. Then:
		\begin{itemize}
			\item \emph{Entropy increase}: 
			\begin{equation}
				H(P_\gamma^\star) \ge  H(P_X),
				\label{eq:SchurEntropy}
			\end{equation}
			which follows since the Shannon entropy is Schur-concave~\citep{kvaalseth2022entropies}. The above result generalizes to the R\'enyi entropy~\citep{kvaalseth2022entropies}. 
			\item \emph{Non-uniformity decrease}: By subtracting $\log(|\cS|)$ from both sides of~\eqref{eq:SchurEntropy}, we have that
			\begin{equation}
				D_{\sf{KL}}(P_X \| P_U) \ge   D_{\sf{KL}}(P_\gamma^\star \| P_U),
				\label{eq:SchurKL}
			\end{equation} 
            where $P_U$ is the uniform distribution.
		\end{itemize}
	\end{corollary}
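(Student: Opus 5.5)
The statement to prove is Corollary~\ref{cor:SchurConvexity}. The approach is to apply Proposition~\ref{prop:Majorization} and then use Schur-concavity of the entropy functionals.

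First, under the hypothesis $p_{\min} > p_{\gamma,a}$, Proposition~\ref{prop:Majorization} gives $P_X \succ P^\star_\gamma$. Since $|\cS|<\infty$, both distributions can be viewed as vectors in $\bbR^{|\cS|}$. This uses Corollary~\ref{cor:ConseqTheorem1}, which gives $P^\star_\gamma \ll P_X$, so $P^\star_\gamma$ is supported on $\cS$. Both vectors have the same total sum, namely one, so the majorization is a genuine majorization in the sense of Hardy--Littlewood--P\'olya.

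Second, I would invoke the characterization that $x \succ y$ holds if and only if $y = Dx$ for some doubly stochastic matrix $D$. Equivalently, $\sum_i f(x_i) \ge \sum_i f(y_i)$ for every convex $f$. I would take $f(t) = t\log t$, which is convex on $[0,1]$ with the convention $0\log 0 = 0$. This gives $-H(P_X) \ge -H(P^\star_\gamma)$, which is \eqref{eq:SchurEntropy}. For R\'enyi entropy of order $\alpha$, I would use $f(t)=t^\alpha$. This $f$ is convex for $\alpha>1$ and concave for $\alpha\in(0,1)$. Combining it with the monotonicity of $u\mapsto \frac{1}{1-\alpha}\log u$ in the appropriate direction gives $H_\alpha(P^\star_\gamma)\ge H_\alpha(P_X)$ in both cases. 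The endpoint orders $\alpha\in\{0,1,\infty\}$ follow by taking limits, or directly: the max entry satisfies $p_{(1)} \ge p^\star_{(1)}$ by the $k=1$ majorization inequality, and the support size is unchanged.

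Third, for \eqref{eq:SchurKL}, I would note that $D_{\sf KL}(Q\|P_U) = \log|\cS| - H(Q)$ for any $Q$ supported on $\cS$, with $P_U$ uniform on $\cS$. This needs $P^\star_\gamma \ll P_U$, which again follows from Corollary~\ref{cor:ConseqTheorem1}. Subtracting both sides of \eqref{eq:SchurEntropy} from $\log|\cS|$ then yields the inequality.

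There is no real obstacle here. The only points requiring care are checking that both vectors live on the same finite index set with equal sums, and handling the boundary convention $0\log 0$. These are both routine given the support statement in Corollary~\ref{cor:ConseqTheorem1}.
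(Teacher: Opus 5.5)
Your proposal is correct and is essentially the paper's argument. Proposition~\ref{prop:Majorization} gives $P_X \succ P^\star_\gamma$, and Schur-concavity of the Shannon and R\'enyi entropies then gives the entropy increase; where the paper cites Schur-concavity, you verify it directly via the Hardy--Littlewood--P\'olya characterization with $f(t)=t\log t$ and $f(t)=t^\alpha$, and the KL statement follows, as in the paper, from $D_{\sf{KL}}(Q\|P_U)=\log|\cS|-H(Q)$ for $Q$ supported on $\cS$.
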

	\begin{rem}
Corollary~\ref{cor:SchurConvexity} formalizes the empirical finding of~\citep{mukhoti2020calibrating} that $H(P_\gamma^\star) \ge H(P_X)$. This is significant, since higher-entropy predictions help reduce overconfidence~\citep{mukhoti2020calibrating}.
	\end{rem}

\subsection{Relationship with Relative Entropy} \label{subsec: Relationship with Relative Entropy}
We aim at relating the focal-entropy in~\eqref{eq:FocalEntropy} to the well-studied cross-entropy in~\eqref{eq:CrossEntropy}.
We start by noting that the focal-loss in Definition~\ref{def:FocalLoss} can be equivalently written~as 
\begin{equation}
	\sfL_\gamma (P_X(x)) =  \sfL_0 (P^{(\gamma)}_X(x)) - h_\gamma (P_X), \label{eq:Identi_for_focal_loss}
\end{equation}
where
\begin{equation}
	P^{(\gamma)}_X(x) = \frac{P_{X}(x)^{(1- P_X(x))^\gamma}}{\exp \left( h_\gamma (P_X) \right)},
    \label{eq:PGammaX}
\end{equation}
and
\begin{equation}
	\label{eq:hGamma}
	h_\gamma (P_X) = \log \left( \sum_{x \in \cX}  P_{X}(x)^{(1- P_X(x))^\gamma} \right) .
\end{equation}
The function $h_\gamma(P_X)$ has recently appeared in other contexts, such as in a rate distortion framework with soft-reconstruction~\citep{focal2025lossy}.
The next lemma (proof in Appendix~\ref{app:Prophgamma}) further tightens the upper bound presented in~\citep{focal2025lossy} and shows that the correction term $ h_\gamma(P_X)$ is uniformly bounded.
\begin{lemma} 
	\label{lemma:Prophgamma}
	Let $P_X \in \cP(\cX)$. Then, for $\gamma \geq 0$, the two following properties hold:
	\begin{enumerate}
		\item $\gamma \mapsto h_\gamma(P_X)$ is strictly increasing; and 
		\item $0 \le   h_\gamma(P_X) \le \rme^{ \frac{1}{\rme}} \gamma $.
	\end{enumerate} 	
\end{lemma}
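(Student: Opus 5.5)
The plan is to work with the partition function $Z(\gamma) = \sum_{x \in \cS} P_X(x)^{(1-P_X(x))^\gamma}$, so that $h_\gamma(P_X) = \log Z(\gamma)$. Points outside $\cS$ contribute nothing, since $0^{t}=0$ for $t>0$. Both claims reduce to statements about the single-term function $g_p(\gamma) = p^{(1-p)^\gamma}$ for $p \in (0,1]$, because $\log$ is strictly increasing.

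\emph{Monotonicity.} Fix $p \in (0,1)$.
\begin{itemize}
\item The exponent $\gamma \mapsto (1-p)^\gamma$ is strictly decreasing in $\gamma$.
\item The map $t \mapsto p^t$ is strictly decreasing in $t$.
\item Hence $\gamma \mapsto g_p(\gamma)$ is strictly increasing.
\end{itemize}
For $p=1$ the term is identically $1$. Summing termwise gives that $Z$ is non-decreasing, and strictly increasing as soon as some atom has $P_X(x)\in(0,1)$, i.e. whenever $P_X$ is not a point mass. For a point mass $h_\gamma\equiv 0$, so the statement must be read with this degenerate case excluded; I would add that caveat explicitly. For countably infinite $\cS$, termwise monotonicity passes to the series provided $Z(\gamma)<\infty$, and this finiteness is guaranteed by the upper bound below.

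\emph{Lower bound.} At $\gamma=0$ the exponent equals $1$ for every atom, so $Z(0)=\sum_x P_X(x)=1$ and $h_0=0$. Monotonicity then gives $h_\gamma\ge 0$.

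\emph{Upper bound.} I would write each term as
\begin{equation}
g_p(\gamma) = p\,\exp\!\Big( \big(1-(1-p)^\gamma\big)\log\tfrac1p \Big),
\end{equation}
so that $Z(\gamma) \le \sup_{p\in(0,1]} \exp\big((1-(1-p)^\gamma)\log\frac1p\big)$, using $\sum_x P_X(x)=1$. The task is therefore to show that the exponent is at most $c\,\gamma$ with $c\le \rme^{1/\rme}$, uniformly in $p$. Using $1-\rme^{-u}\le u$ with $u=\gamma\log\frac{1}{1-p}$, I get
\begin{equation}
\big(1-(1-p)^\gamma\big)\log\tfrac1p \le \gamma\,\log\tfrac{1}{1-p}\,\log\tfrac1p .
\end{equation}
This reduces everything to bounding $\psi(p)=\log\frac{1}{1-p}\log\frac1p$ on $(0,1)$. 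Since $\log\frac1{1-p}\le \frac{p}{1-p}$ and $\log\frac1p\le \frac{1-p}{p}$, splitting at $p=1/2$ and using one of these estimates on each side together with $p\log\frac1p\le \frac1\rme$ already gives a bound of order a small constant. The symmetric maximum $(\log 2)^2$ at $p=1/2$ is the natural guess for the exact value. Either way $\psi\le 1<\rme^{1/\rme}$, so $h_\gamma(P_X)\le \gamma\sup_p\psi(p)\le \rme^{1/\rme}\gamma$.

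The only step needing care is this uniform control of $\psi$ near the endpoints $p\to 0$ and $p\to1$. There one factor blows up while the other vanishes, and the elementary inequalities above are exactly what handle it. Everything else is termwise monotonicity.
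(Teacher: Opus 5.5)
Your proof is correct. The monotonicity and lower-bound parts match the paper's argument: termwise monotonicity of $\gamma\mapsto p^{(1-p)^\gamma}$, plus $h_0(P_X)=0$. Your caveat is a genuine correction. For a point mass $h_\gamma(P_X)\equiv 0$, so strict monotonicity needs an atom with $P_X(x)\in(0,1)$, which the paper tacitly assumes.

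For the upper bound you take a different route. The paper splits into two cases.
For $\gamma>1$ it uses Bernoulli's inequality $(1-p)^\gamma\ge 1-\gamma p$ to get $\rme^{h_\gamma(P_X)}\le\sum_x P_X(x)\,\rme^{\gamma P_X(x)\log(1/P_X(x))}\le\rme^{\gamma/\rme}$.
For $0\le\gamma\le 1$ it applies the mean value theorem to $f(\gamma)=\rme^{h_\gamma(P_X)}$. It bounds $f'$ first via $\log\frac1p\log\frac1{1-p}\le 1$, and then by $\sum_x P_X(x)^{1-P_X(x)}\le\rme^{1/\rme}$, finishing with $\log(1+x)\le x$. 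The constant $\rme^{1/\rme}$ is the slack in that last sum. The paper gets $\log\frac1p\log\frac1{1-p}\le 1$ by simply multiplying the two estimates $\log x\le x-1$, so your split at $p=1/2$ is not needed.

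Your identity $p^{(1-p)^\gamma}=p\exp\bigl((1-(1-p)^\gamma)\log\frac1p\bigr)$ makes $\rme^{h_\gamma(P_X)}$ a $P_X$-average. The inequality $1-\rme^{-u}\le u$ then bounds the exponent pointwise. So in one step, and for all $\gamma\ge 0$, you get $h_\gamma(P_X)\le\gamma\sup_{p\in(0,1)}\psi(p)\le\gamma$. In fact the slope is $(\log 2)^2$, since $\psi$ is indeed maximized at $p=1/2$. This is strictly stronger than the stated bound and avoids both the case split and the derivative computation.

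What the paper's case analysis buys in return is the smaller slope $1/\rme<(\log 2)^2$ when $\gamma>1$. The two bounds are therefore complementary.
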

 The following proposition (proof in Appendix~\ref{app:RelatEntropyRelation}) formalizes the connection between the focal-entropy and other information-theoretic quantities. 

\begin{prop} 
		\label{prop:RelatEntropyRelation}
		For any $P_X, Q_X \in \cP(\cX)$ such that $P_X \ll Q_X$, it holds that
		\begin{align}
        \label{eq:rel_focal_ent}
			H_\gamma(P_X,Q_X) &= D_{\sf{KL}}\left(P_X \| Q_X^{(\gamma)}\right ) \!+\! H(P_X) - h_\gamma(Q_X)\\
			&= H_0 \left (P_X ,Q_X^{(\gamma)} \right )  - h_\gamma(Q_X), \label{eq:rel_focal_ent2}
		\end{align}
        where $Q_X^{(\gamma)}$ and $h_\gamma(Q_X)$ are defined in~\eqref{eq:PGammaX} and in~\eqref{eq:hGamma}, respectively. Moreover, it holds that
		\begin{equation}
			H_\gamma(P_X,Q_X) \!=\!\rho_\gamma(P_X,Q_X) \!\left( D_{\sf{KL}}(R_X\| Q_X) \!+\! H(R_X) \right),
		\end{equation}
where
\begin{equation}
		R_X(x)=  
		\frac{P_X(x) (1-Q_X(x))^\gamma}{\rho_\gamma(P_X,Q_X)},
	\end{equation}
	and
	\begin{equation}
		\rho_\gamma(P_X,Q_X) = 
         \bbE_{X \sim P_X} [(1- Q_X(X))^\gamma].
	\end{equation}
	\end{prop}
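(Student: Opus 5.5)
The plan is to prove both identities pointwise in $x \in \cS$ and then take expectations under $P_X$. Since $P_X \ll Q_X$, we have $Q_X(x) > 0$ on $\cS$, so every logarithm below is well defined. By Proposition~\ref{prop:focal-finite-iff-cross-finite}, both sides are finite or infinite together. For the first identity, apply~\eqref{eq:Identi_for_focal_loss} with $Q_X$ in place of $P_X$. Writing out the definitions~\eqref{eq:PGammaX}--\eqref{eq:hGamma} gives
\begin{equation}
\log\frac{1}{Q_X^{(\gamma)}(x)} = (1-Q_X(x))^\gamma \log\frac{1}{Q_X(x)} + h_\gamma(Q_X),
\end{equation}
that is, $\sfL_\gamma(Q_X(x)) = \sfL_0(Q^{(\gamma)}_X(x)) - h_\gamma(Q_X)$.

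Taking $\bbE_{X\sim P_X}$ yields~\eqref{eq:rel_focal_ent2}. The constant $h_\gamma(Q_X)$ passes through the expectation, and it is finite: by Lemma~\ref{lemma:Prophgamma} it satisfies $0 \le h_\gamma(Q_X) \le \rme^{1/\rme}\gamma$. Note that $P_X \ll Q_X^{(\gamma)}$ holds because $Q_X^{(\gamma)}(x) > 0$ wherever $Q_X(x) > 0$. Then~\eqref{eq:rel_focal_ent} follows from the standard decomposition $H_0(P_X, Q_X^{(\gamma)}) = D_{\sf KL}(P_X\|Q_X^{(\gamma)}) + H(P_X)$. This step is valid when $H(P_X) < \infty$, which I will assume, as the paper does for finite-entropy distributions. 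Otherwise the sum is read in the extended sense.

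For the second representation, first note that $\rho_\gamma(P_X,Q_X) > 0$. Indeed, $(1-Q_X(x))^\gamma > 0$ for any $x\in\cS$ with $Q_X(x) < 1$. If $Q_X(x) = 1$ for some $x$, then $P_X$ is a point mass and both sides vanish, which I treat as a degenerate case. Thus $R_X$ is a probability mass function supported on $\cS$, with $R_X \ll Q_X$. Then
\begin{equation}
H_\gamma(P_X,Q_X) = \sum_{x\in\cS} P_X(x)(1-Q_X(x))^\gamma \log\frac{1}{Q_X(x)} = \rho_\gamma(P_X,Q_X)\, \bbE_{X\sim R_X}\!\left[\log\frac{1}{Q_X(X)}\right].
\end{equation}
The last expectation is the cross-entropy $H(R_X, Q_X) = D_{\sf KL}(R_X\|Q_X) + H(R_X)$, which gives the claim.

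The only delicate step is justifying the split of $H(R_X,Q_X)$ into relative entropy plus entropy, which requires $H(R_X) < \infty$. This follows because $R_X(x) \le P_X(x)/\rho_\gamma$, so $\log(1/R_X) \ge \log(1/P_X) + \log\rho_\gamma$. On the other hand, $H(R_X) \le H(R_X,Q_X) = H_\gamma(P_X,Q_X)/\rho_\gamma < \infty$ by Gibbs' inequality. So $H(R_X)$ is finite whenever the focal-entropy is, and all terms are well defined. Everything else is direct algebraic substitution.
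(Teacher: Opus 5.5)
Your proposal is correct and follows the same route as the paper. You apply the pointwise identity~\eqref{eq:Identi_for_focal_loss} to $Q_X$, take $\bbE_{X\sim P_X}$ and add and subtract $\log P_X(X)$; you then factor out $\rho_\gamma(P_X,Q_X)$ and split $H(R_X,Q_X)=D_{\mathsf{KL}}(R_X\|Q_X)+H(R_X)$. Your added well-definedness checks are fine, except that the bound $R_X(x)\le P_X(x)/\rho_\gamma(P_X,Q_X)$ only bounds $H(R_X)$ from below and is not needed: the Gibbs step $H(R_X)\le H(R_X,Q_X)=H_\gamma(P_X,Q_X)/\rho_\gamma(P_X,Q_X)$ alone gives finiteness.
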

The next result (proof in Appendix~\ref{app:MinFocalCrossRel}) leverages Lemma~\ref{lemma:Prophgamma} and it establishes bounds on the difference between the minimum value of the cross-entropy and the minimum value of the focal-entropy. 
\begin{prop}  
	\label{prop:MinFocalCrossRel}
	Fix $\gamma \ge 0$. Then, it holds that
	\begin{align}
		0 \le  \inf_{Q_X} H(P_X,Q_X) -  \inf_{Q_X} H_\gamma(P_X,Q_X) \le \rme^{ \frac{1}{\rme}} \gamma. 
	\end{align}
\end{prop}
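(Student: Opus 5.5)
The lower bound $0$ follows from Proposition~\ref{prop: focal-entropy vs gamma}. For every $Q_X$ with $P_X \ll Q_X$ we have $H_\gamma(P_X,Q_X) \le H(P_X,Q_X)$. When $P_X \not\ll Q_X$, both quantities equal $\infty$. Taking the infimum over $Q_X$ on both sides gives $\inf_{Q_X} H_\gamma(P_X,Q_X) \le \inf_{Q_X} H(P_X,Q_X)$. By~\eqref{eq:Gibbs} the right-hand infimum equals $H(P_X)$. We are working under the standing assumption that entropies are finite, so the difference is well defined and nonnegative.

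For the upper bound, I will use the identity~\eqref{eq:rel_focal_ent} of Proposition~\ref{prop:RelatEntropyRelation}. For any $Q_X$ with $P_X \ll Q_X$,
\begin{equation}
H_\gamma(P_X,Q_X) = D_{\sf{KL}}\left(P_X \| Q_X^{(\gamma)}\right) + H(P_X) - h_\gamma(Q_X) \ge H(P_X) - h_\gamma(Q_X).
\end{equation}
The inequality uses nonnegativity of relative entropy, which is legitimate because $Q_X^{(\gamma)}$ in~\eqref{eq:PGammaX} is a valid probability mass function. Lemma~\ref{lemma:Prophgamma} then gives $h_\gamma(Q_X) \le \rme^{1/\rme}\gamma$ uniformly in $Q_X$. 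Hence $H_\gamma(P_X,Q_X) \ge H(P_X) - \rme^{1/\rme}\gamma$ for every admissible $Q_X$, and trivially for $Q_X$ with $P_X\not\ll Q_X$. Taking the infimum yields
\begin{equation}
\inf_{Q_X} H_\gamma(P_X,Q_X) \ge \inf_{Q_X} H(P_X,Q_X) - \rme^{1/\rme}\gamma,
\end{equation}
which is the claimed upper bound on the difference.

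The main obstacle is checking that the decomposition~\eqref{eq:rel_focal_ent} is valid for every $Q_X$ with $P_X \ll Q_X$, including countably infinite alphabets. This requires $h_\gamma(Q_X)$ to be finite, which Lemma~\ref{lemma:Prophgamma} ensures. It also requires that splitting the expectation into $D_{\sf{KL}}$, $H(P_X)$ and $h_\gamma$ does not create an $\infty-\infty$ form. Since $H(P_X)<\infty$ is assumed and $h_\gamma$ is bounded, I would restrict attention to $Q_X$ with finite focal-entropy. By Proposition~\ref{prop:focal-finite-iff-cross-finite}, these are exactly the $Q_X$ with finite cross-entropy, so all terms are finite and the identity applies. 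The remaining $Q_X$ contribute $\infty$ and do not affect the infimum.
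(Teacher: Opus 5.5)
Your proof is correct and follows essentially the same route as the paper: the lower bound comes from monotonicity in $\gamma$, and the upper bound comes from the decomposition in Proposition~\ref{prop:RelatEntropyRelation} together with the uniform bound $h_\gamma \le \rme^{1/\rme}\gamma$ of Lemma~\ref{lemma:Prophgamma}, where your use of $D_{\sf{KL}}(P_X\|Q_X^{(\gamma)})\ge 0$ is exactly the paper's observation that $Q_X^{(\gamma)}\in\cP(\cX)$ forces $H_0(P_X,Q_X^{(\gamma)})\ge \inf_{Q_X}H(P_X,Q_X)$. Your explicit treatment of the implicit assumption $H(P_X)<\infty$ and of possible $\infty-\infty$ forms is a small but welcome addition, since the paper leaves these unstated.
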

	
	\section{Experimental Validation}
	\label{sec:ExperimentalValidation}
	In this section, we validate our theoretical results on both {\em synthetic} and {\em real} data. 
	For the synthetic case (Appendix~\ref{app:SyntheticData}), we assume knowledge of the joint distribution over class labels and features, enabling direct theoretical validation. 
	
	For the {\em real} data, we used MNIST in binary classification: $C=0$ for `digit is not 1' and $C=1$ for `digit is 1', yielding class imbalance ($53{,}258$ of $60{,}000$ images in class $0$). From each $28 \times 28$ image, we extracted two zoning-style features capturing spatial asymmetry~\citep{impedovo2014zoning}: $F_1$, the intensity ratio of upper vs.\ lower halves, and $F_2$, the intensity ratio of left vs.\ right halves. Each image was represented by these two features, which were then quantized into four bins (empirical quantiles~\citep{dougherty1995supervised}).
	
	We trained a neural network (same architecture as in Appendix~\ref{app:SyntheticData}) using the focal loss in~\eqref{eq:FocalLoss} with $\gamma=1$. The trained model estimated the probability of $C$ for each of the $16$ bin combinations (four bins for $F_1$, four for $F_2$)—see the `Model $P_{C|F_1,F_2}$' bars in Figure~\ref{fig:MNIST data}.
	
	Since the true posterior is unknown, we approximated it by computing the empirical class proportions within each bin combination (`Estimated $P_{C|F_1,F_2}$' in Figure~\ref{fig:MNIST data}). Using this estimated posterior in~\eqref{eq:optimizer}, we derived $P_{\gamma}^\star$ for $\gamma=1$ (`Theory $P_{C|F_1,F_2}$'). Figure~\ref{fig:MNIST data} shows that the network outputs closely match $P_\gamma^\star$ (maximum difference $0.017$), supporting our theory and indicating convergence to the global minimum under focal loss.

	\begin{figure*}[!t]
		\centering
		\begin{subfigure}[t]{0.25\textwidth}\centering
			\resizebox{\linewidth}{!}{
\begin{tikzpicture}

\definecolor{crimson2143940}{RGB}{214,39,40}
\definecolor{darkgray176}{RGB}{176,176,176}
\definecolor{lightgray204}{RGB}{204,204,204}
\definecolor{orchid227119194}{RGB}{227,119,194}
\definecolor{steelblue31119180}{RGB}{31,119,180}

\begin{axis}[
legend cell align={left},
legend style={
  at={(0.75,0.57)},
  anchor=north east,
  fill opacity=1,
  draw opacity=1,
  text opacity=1,
  draw=lightgray204,
  font=\axisfont,
},
tick align=outside,
tick pos=left,
x grid style={dashed, gray!50},
y grid style={dashed, gray!50},
xlabel={$F_1$},
xlabel style={font=\axisfont},
xmajorgrids,
xmin=-0.55, xmax=3.3,
xtick style={color=black},
ymajorgrids,
ymin=0, ymax=1.05,
ytick style={color=black},
tick label style={font=\axisfont},
major tick length=3pt
]

\draw[draw=none,fill=steelblue31119180,fill opacity=0.8] (axis cs:-0.375,0) rectangle (axis cs:-0.125,0.988575458392102);
\draw[draw=none,fill=steelblue31119180,fill opacity=0.8] (axis cs:0.625,0)  rectangle (axis cs:0.875,0.991672218520986);
\draw[draw=none,fill=steelblue31119180,fill opacity=0.8] (axis cs:1.625,0)  rectangle (axis cs:1.875,0.991908293998651);
\draw[draw=none,fill=steelblue31119180,fill opacity=0.8] (axis cs:2.625,0)  rectangle (axis cs:2.875,0.997372262773723);

\draw[draw=none,fill=crimson2143940,fill opacity=0.8] (axis cs:-0.125,0) rectangle (axis cs:0.125,0.920122146606445);
\draw[draw=none,fill=crimson2143940,fill opacity=0.8] (axis cs:0.875,0)  rectangle (axis cs:1.125,0.917494297027588);
\draw[draw=none,fill=crimson2143940,fill opacity=0.8] (axis cs:1.875,0)  rectangle (axis cs:2.125,0.917146027088165);
\draw[draw=none,fill=crimson2143940,fill opacity=0.8] (axis cs:2.875,0)  rectangle (axis cs:3.125,0.954601228237152);

\addplot [line width=1pt, orchid227119194, mark=*, mark size=3, mark options={solid}]
table {%
0 0.921935646251313
1 0.93335977263035
2 0.934313859628219
3 0.962723070048503
};

\node[
  anchor=north west,
  draw=lightgray204,
  fill=white,
  fill opacity=1,
  text opacity=1,
  rounded corners,
  inner sep=4pt,
  font=\axisfont
] at (axis description cs:0.065,0.72) {$C=0$, $F_2=0$};

\addlegendimage{area legend,draw=none,fill=steelblue31119180,fill opacity=0.8}
\addlegendentry{\makebox[2.1cm][l]{Theory} $P_{C|F_1,F_2}$}

\addlegendimage{area legend,draw=none,fill=crimson2143940,fill opacity=0.8}
\addlegendentry{\makebox[2.1cm][l]{Estimated} $P_{C|F_1,F_2}$}

\addlegendimage{line legend,orchid227119194,line width=1pt,mark=*,mark options={solid},mark size=3}
\addlegendentry{\makebox[2.1cm][l]{Model} $P_{C|F_1,F_2}$}

\end{axis}
\end{tikzpicture}}
		\end{subfigure}\hfill
		\begin{subfigure}[t]{0.25\textwidth}\centering
			\resizebox{\linewidth}{!}{
\begin{tikzpicture}

\definecolor{darkgray176}{RGB}{176,176,176}
\definecolor{lightgray204}{RGB}{204,204,204}
\definecolor{darkorange25512714}{RGB}{255,127,14}
\definecolor{mediumpurple148103189}{RGB}{148,103,189}
\definecolor{gray127}{RGB}{127,127,127}

\begin{axis}[
  legend cell align={left},
  legend style={
    at={(0.75,0.57)},
    anchor=north east,
    fill opacity=1,
    draw opacity=1,
    text opacity=1,
    draw=lightgray204,
    font=\axisfont,
  },
  tick align=outside,
  tick pos=left,
  x grid style={dashed, gray!50},
  y grid style={dashed, gray!50},
  xlabel={$F_1$},
  xlabel style={font=\axisfont},
  xmajorgrids,
  xmin=-0.55, xmax=3.3,
  xtick style={color=black},
  ymajorgrids,
  ymin=0, ymax=1.05,
  ytick style={color=black},
  tick label style={font=\axisfont},
  major tick length=3pt
]

\draw[draw=none,fill=darkorange25512714,fill opacity=0.8] (axis cs:-0.375,0) rectangle (axis cs:-0.125,0.0114245416078984);
\draw[draw=none,fill=darkorange25512714,fill opacity=0.8] (axis cs:0.625,0)  rectangle (axis cs:0.875,0.00832778147901399);
\draw[draw=none,fill=darkorange25512714,fill opacity=0.8] (axis cs:1.625,0)  rectangle (axis cs:1.875,0.00809170600134862);
\draw[draw=none,fill=darkorange25512714,fill opacity=0.8] (axis cs:2.625,0)  rectangle (axis cs:2.875,0.00262773722627737);

\draw[draw=none,fill=mediumpurple148103189,fill opacity=0.8] (axis cs:-0.125,0) rectangle (axis cs:0.125,0.0798778682947159);
\draw[draw=none,fill=mediumpurple148103189,fill opacity=0.8] (axis cs:0.875,0)  rectangle (axis cs:1.125,0.0825057476758957);
\draw[draw=none,fill=mediumpurple148103189,fill opacity=0.8] (axis cs:1.875,0)  rectangle (axis cs:2.125,0.0828539803624153);
\draw[draw=none,fill=mediumpurple148103189,fill opacity=0.8] (axis cs:2.875,0)  rectangle (axis cs:3.125,0.0453987643122673);

\addplot [line width=1pt, gray127, mark=*, mark size=3, mark options={solid}]
table {%
0 0.0780643537486867
1 0.06664022736965
2 0.0656861403717812
3 0.0372769299514973
};

\node[
  anchor=north west,
  draw=lightgray204,
  fill=white,
  fill opacity=1,
  text opacity=1,
  rounded corners,
  inner sep=4pt,
  font=\axisfont
] at (axis description cs:0.065,0.72) {$C=1$, $F_2=0$};

\addlegendimage{area legend,draw=none,fill=darkorange25512714,fill opacity=0.8}
\addlegendentry{\makebox[2.1cm][l]{Theory} $P_{C|F_1,F_2}$}

\addlegendimage{area legend,draw=none,fill=mediumpurple148103189,fill opacity=0.8}
\addlegendentry{\makebox[2.1cm][l]{Estimated} $P_{C|F_1,F_2}$}

\addlegendimage{line legend,gray127,line width=1pt,mark=*,mark options={solid},mark size=3}
\addlegendentry{\makebox[2.1cm][l]{Model} $P_{C|F_1,F_2}$}

\end{axis}
\end{tikzpicture}}
		\end{subfigure}\hfill
		\begin{subfigure}[t]{0.25\textwidth}\centering
			\resizebox{\linewidth}{!}{
\begin{tikzpicture}

\definecolor{crimson2143940}{RGB}{214,39,40}
\definecolor{darkgray176}{RGB}{176,176,176}
\definecolor{lightgray204}{RGB}{204,204,204}
\definecolor{orchid227119194}{RGB}{227,119,194}
\definecolor{steelblue31119180}{RGB}{31,119,180}

\begin{axis}[
  legend cell align={left},
  legend style={
    at={(0.75,0.57)},
    anchor=north east,
    fill opacity=1,
    draw opacity=1,
    text opacity=1,
    draw=lightgray204,
    font=\axisfont,
  },
  tick align=outside,
  tick pos=left,
  x grid style={dashed, gray!50},
  y grid style={dashed, gray!50},
  xlabel={$F_2$},
  xlabel style={font=\axisfont}, 
  xmajorgrids,
  xmin=-0.55, xmax=3.3,
  xtick style={color=black},
  ymajorgrids,
  ymin=0, ymax=1.05,
  ytick style={color=black},
  tick label style={font=\axisfont},
  major tick length=3pt
]

\draw[draw=none,fill=steelblue31119180,fill opacity=0.8] (axis cs:-0.375,0) rectangle (axis cs:-0.125,0.988575458392102);
\draw[draw=none,fill=steelblue31119180,fill opacity=0.8] (axis cs:0.625,0)  rectangle (axis cs:0.875,0.996560350218887);
\draw[draw=none,fill=steelblue31119180,fill opacity=0.8] (axis cs:1.625,0)  rectangle (axis cs:1.875,0.998619737750173);
\draw[draw=none,fill=steelblue31119180,fill opacity=0.8] (axis cs:2.625,0)  rectangle (axis cs:2.875,0.99601593625498);

\draw[draw=none,fill=crimson2143940,fill opacity=0.8] (axis cs:-0.125,0) rectangle (axis cs:0.125,0.920122146606445);
\draw[draw=none,fill=crimson2143940,fill opacity=0.8] (axis cs:0.875,0)  rectangle (axis cs:1.125,0.960745930671692);
\draw[draw=none,fill=crimson2143940,fill opacity=0.8] (axis cs:1.875,0)  rectangle (axis cs:2.125,0.960874259471893);
\draw[draw=none,fill=crimson2143940,fill opacity=0.8] (axis cs:2.875,0)  rectangle (axis cs:3.125,0.955258786678314);

\addplot [line width=1pt, orchid227119194, mark=*, mark size=3, mark options={solid}]
table {%
0 0.921935646251313
1 0.957294679300958
2 0.973078982571224
3 0.954008887926193
};

\node[
  anchor=north west,
  draw=lightgray204,
  fill=white,
  fill opacity=1,
  text opacity=1,
  rounded corners,
  inner sep=4pt,
  font=\axisfont
] at (axis description cs:0.065,0.72) {$C=0$, $F_1=0$};

\addlegendimage{area legend,draw=none,fill=steelblue31119180,fill opacity=0.8}
\addlegendentry{\makebox[2.1cm][l]{Theory} $P_{C|F_1,F_2}$}

\addlegendimage{area legend,draw=none,fill=crimson2143940,fill opacity=0.8}
\addlegendentry{\makebox[2.1cm][l]{Estimated} $P_{C|F_1,F_2}$}

\addlegendimage{line legend,orchid227119194,line width=1pt,mark=*,mark options={solid},mark size=3}
\addlegendentry{\makebox[2.1cm][l]{Model} $P_{C|F_1,F_2}$}

\end{axis}
\end{tikzpicture}}
		\end{subfigure}\hfill
		\begin{subfigure}[t]{0.25\textwidth}\centering
			\resizebox{\linewidth}{!}{
\begin{tikzpicture}

\definecolor{darkgray176}{RGB}{176,176,176}
\definecolor{lightgray204}{RGB}{204,204,204}
\definecolor{darkorange25512714}{RGB}{255,127,14}
\definecolor{mediumpurple148103189}{RGB}{148,103,189}
\definecolor{gray127}{RGB}{127,127,127}

\begin{axis}[
  legend cell align={left},
  legend style={
    at={(0.75,0.57)},
    anchor=north east,
    fill opacity=1,
    draw opacity=1,
    text opacity=1,
    draw=lightgray204,
    font=\axisfont,
  },
  tick align=outside,
  tick pos=left,
  x grid style={dashed, gray!50},
  y grid style={dashed, gray!50},
  xlabel={$F_2$},
  xlabel style={font=\axisfont}, 
  xmajorgrids,
  xmin=-0.55, xmax=3.3,
  xtick style={color=black},
  ymajorgrids,
  ymin=0, ymax=1.05,
  ytick style={color=black},
  tick label style={font=\axisfont},
  major tick length=3pt
]

\draw[draw=none,fill=darkorange25512714,fill opacity=0.8] (axis cs:-0.375,0) rectangle (axis cs:-0.125,0.0114245416078984);
\draw[draw=none,fill=darkorange25512714,fill opacity=0.8] (axis cs:0.625,0)  rectangle (axis cs:0.875,0.0034396497811132);
\draw[draw=none,fill=darkorange25512714,fill opacity=0.8] (axis cs:1.625,0)  rectangle (axis cs:1.875,0.00138026224982747);
\draw[draw=none,fill=darkorange25512714,fill opacity=0.8] (axis cs:2.625,0)  rectangle (axis cs:2.875,0.00398406374501992);

\draw[draw=none,fill=mediumpurple148103189,fill opacity=0.8] (axis cs:-0.125,0) rectangle (axis cs:0.125,0.0798778682947159);
\draw[draw=none,fill=mediumpurple148103189,fill opacity=0.8] (axis cs:0.875,0)  rectangle (axis cs:1.125,0.0392540469765663);
\draw[draw=none,fill=mediumpurple148103189,fill opacity=0.8] (axis cs:1.875,0)  rectangle (axis cs:2.125,0.0391258001327515);
\draw[draw=none,fill=mediumpurple148103189,fill opacity=0.8] (axis cs:2.875,0)  rectangle (axis cs:3.125,0.0447412356734276);

\addplot [line width=1pt, gray127, mark=*, mark size=3, mark options={solid}]
table {%
0 0.0780643537486867
1 0.0427053206990422
2 0.0269210174287764
3 0.045991112073807
};

\node[
  anchor=north west,
  draw=lightgray204,
  fill=white,
  fill opacity=0.95,
  text opacity=1,
  rounded corners,
  inner sep=4pt,
  font=\axisfont
] at (axis description cs:0.065,0.72) {$C=1$, $F_1=0$};

\addlegendimage{area legend,draw=none,fill=darkorange25512714,fill opacity=0.8}
\addlegendentry{\makebox[2.1cm][l]{Theory} $P_{C|F_1,F_2}$}

\addlegendimage{area legend,draw=none,fill=mediumpurple148103189,fill opacity=0.8}
\addlegendentry{\makebox[2.1cm][l]{Estimated} $P_{C|F_1,F_2}$}

\addlegendimage{line legend,gray127,line width=1pt,mark=*,mark options={solid},mark size=3}
\addlegendentry{\makebox[2.1cm][l]{Model} $P_{C|F_1,F_2}$}

\end{axis}
\end{tikzpicture}}
		\end{subfigure}
		\vspace{-0.7cm}
		\caption{ MNIST dataset: 
			For each class, we fixed one of the features at its modal value and plotted: the true posterior, the posterior predicted by the trained neural network, and the theoretical focal-entropy minimizer. {``Model $P_{C|F_1,F_2}$'' denotes the predicted probability of class $C$ for each of the $16$ possible bin combinations (i.e., four bins for $F_1$ and four bins for $F_2$), which is estimated by using a neural network (see Appendix~\ref{app:SyntheticData} for the architecture) trained by using the focal-loss in~\eqref{eq:FocalLoss} with $\gamma=1$; ``Estimated $P_{C|F_1,F_2}$'' refers to an empirical approximation of the posterior (since the true posterior is unknown), which is obtained by computing, for each of the $16$ bin combinations, the proportion of samples with $C=0$ and $C=1$; and ``Theory $P_{C|F_1,F_2}$'' is the $P_\gamma^\star$ in~\eqref{eq:optimizer} obtained by using ``Estimated $P_{C|F_1,F_2}$'' as input.}}
		\label{fig:MNIST data}
		\vspace{-0.3cm}
	\end{figure*}
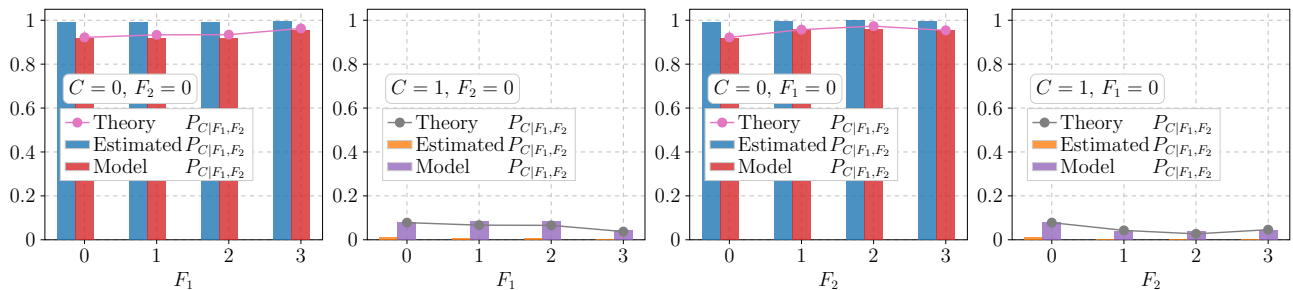
	
	\section{Conclusion}
	\label{sec:Conslusion}
	We conducted a systematic study of properties of the focal-entropy. We showed that it preserves desirable features (finiteness and convexity), while also exhibiting behaviors that distinguish it fundamentally from the cross-entropy. We established the existence and uniqueness of its minimizer and described how it reshapes distributions by amplifying mid-range probabilities, suppressing large probability outcomes, and, under certain conditions, over-suppressing very small probabilities. 
    This
finding is particularly important for practitioners, as it highlights the need to carefully select $\gamma$ to avoid the \emph{over-suppressing} regime.
    These results clarify the trade-offs that the focal-loss introduces and provide a theoretical foundation for its use in imbalanced learning. We hope that this perspective will guide both further theoretical developments and support more principled practical applications of the focal-loss.
	{An interesting direction for future work is to investigate the potential benefits (if any) of the focal-loss in classification problems with soft, real-valued labels~\citep{jeong2023demystifying}.}

\appendix

\onecolumn

\section{Proofs for Section~\ref{sec:AnalyticalPropFL}}

\subsection{Proof of Proposition~\ref{prop:AnalyticalPropFocalLoss}}
\label{app:AnalyticalPropFocalLoss}
To show the first property, note that for $p \in (0,1)$, the function $\gamma \mapsto p^{\left(1-p\right)^\gamma}$ is strictly increasing and hence, $\gamma \mapsto \sfL_\gamma(p)$ is strictly decreasing.

The first and second derivatives can be obtained through a straightforward calculation. It is also trivial to see that $\sfL^{\prime}_\gamma(p) < 0$ for all $p \in (0,1)$ and $\gamma \ge 0$.

We now show that $\sfL^{\prime \prime}_\gamma(p) > 0$ for all $p \in (0,1)$ and $\gamma \ge 0$, which implies that $p \;\mapsto\; \sfL_\gamma(p)$ is strictly convex on $(0,1)$. 
We consider two cases separately:
\begin{itemize}
	\item {\bf{Case} $\gamma > 1$ or $\gamma=0$:} it is trivial to see that the right-hand side of~\eqref{eq:secondDer} is always positive;
	\item {\bf{Case}  $0 < \gamma \leq 1$}: by using the fact that $\log(p) \geq 1-\frac{1}{p}$ for all $p>0$, we can lower bound the second order derivative in~\eqref{eq:secondDer} as follows,
	\begin{align}
		\sfL^{\prime \prime}_\gamma(p) & \ge  \frac{(1-p)^{\gamma-2}}{p^2} \left[\gamma(1-\gamma)\,p^2 \left( 1- \frac{1}{p}\right ) + 2\gamma\,p(1-p) + (1-p)^2\right]
		\\& = \frac{(1-p)^{\gamma-1}}{p^2} \left( \gamma^2 \; p +\gamma \; p + 1-p\right), 
	\end{align}
	which is always positive.
\end{itemize}\
This concludes the proof of Proposition~\ref{prop:AnalyticalPropFocalLoss}.

\subsection{Proof of Proposition~\ref{prop:AnalyticalPropFocalLossInverse}}
\label{app:AnalyticalPropFocalLossInverse}
The first property is directly derived from the inspection of~\eqref{eq:firstDer}.

For the second property, from~\eqref{eq:secondDer}, we have that $\sfL^{\prime \prime}_\gamma(p) > 0 $ on $p \in (0,1)$. Thus, the function $p \;\mapsto\;  \sfL^\prime_\gamma(p)$ is  strictly increasing on $p \in (0,1)$ and continuous. It therefore follows that $\left( \sfL^{\prime}_\gamma \right )^{-1}$ is also a strictly increasing function on its domain (i.e., the range of $\sfL^{\prime}_\gamma$), that is, $t \;\mapsto\; \left( \sfL^{\prime}_\gamma \right )^{-1}(t)$ is strictly increasing for $t<0$. 

Finally, the third property is a direct consequence of the first three properties. 

This concludes the proof of Proposition~\ref{prop:AnalyticalPropFocalLossInverse}.
\subsection{Proof of Proposition~\ref{prop:properties_of_phi_gamma}}
\label{app:properties_of_phi_gamma}

Property 1) follows by inspection.

To show Property 2), we start by proving that $p \mapsto \kappa(p)$ is strictly decreasing. By taking the first derivative of $\kappa(p)$ in~\eqref{eq:kappaP}, we obtain
\begin{equation}
	\kappa^\prime(p) = \frac{2p-1-\left( \log \left( \frac{1}{p}\right )-1 \right )^2}{p^2 \log^2 \left( \frac{1}{p} \right )}.
\end{equation}
The denominator is always positive and hence, we focus on studying the sign of the numerator, denoted as $f(p)$. We have that
$f^\prime(p) = \frac{2 \left( p-\log(p)-1\right )}{p}>0$, where we have used $\log \left(\frac{1}{p} \right ) > 1-p$ for all $p \in (0,1)$. Thus, $p \mapsto f(p)$ is strictly increasing, i.e., its maximum is $f(1)=0$. It therefore follows that $f(p)<0$ for all $p \in (0,1)$ and hence, $\kappa^\prime(p) <0$ for all $p \in (0,1)$, that is, $p \mapsto \kappa(p)$ is strictly decreasing.

We now prove that $\kappa(p) \leq 0$ for all $p \in [\rmw,1)$. Since $p \mapsto \kappa(p)$ is strictly decreasing, it suffices to show that $\kappa(\rmw) = 0$.
Setting~\eqref{eq:kappaP} equal to zero, we arrive at
\begin{equation}
	2p - \log(p)-2 = 0,
\end{equation}
or equivalently, after some manipulation,
\begin{equation}
	(-2p) \; {\rme}^{-2p} = -2 \; \rme^{-2},
\end{equation}
which is satisfied for $p = \rmw$ and $p=1$.

Now, by using the first and second derivatives of $\sfL_\gamma$ in~\eqref{eq:firstDer} and in~\eqref{eq:secondDer}, we have that
\begin{align}
	\phi^\prime (p) &= \gamma \left( 1-p\right )^{\gamma-2} \left ( \log \left( \frac{1}{p}\right ) (1-\gamma \, p) - 2(1-p) \right ) \\
	&= \gamma \left( 1-p\right )^{\gamma-2} p \log \left( \frac{1}{p} \right ) \left(\kappa(p)-\gamma \right),
\end{align}
which implies that the above derivative is negative whenever $\gamma > \kappa(p)$ and positive if $\gamma < \kappa(p)$.  

To show Property 3), note that
\begin{align}
	\phi_{\gamma}(p) &= p \; (1-p)^{\gamma-1} \left( \gamma \log \frac{1}{p} +\frac{1-p}{p}\right)  \\
	& \le (1-p)^{\gamma} (1+\gamma)\label{eq:bound_on_phi_max}\\
	& \le 1+\gamma,
\end{align}
where~\eqref{eq:bound_on_phi_max} follow by using the bound $\log x \le x-1, \, x>0$.

We now show Property 4). To show the first upper bound, note that by Property 2), $\phi_\gamma(p)$ is decreasing on $p \in [\rmw,1)$ since $\kappa(p)<0$. Hence, $p_\gamma^+ \le \rmw$.

The second upper bound follows by noting that $p^+_\gamma$ is such that $\kappa(p^+_\gamma) = \gamma$,
which implies that
\begin{equation}
	\frac{1}{p^+_\gamma}- \frac{2(1-p^+_\gamma)}{p^+_\gamma \log \left( \frac{1}{p^+_\gamma} \right )} = \gamma.
\end{equation}
By using the change of variable $p^+_\gamma = {\rme}^{-x}, x>0$, we can rewrite the above as follows,
\begin{equation}
	\gamma = {\rme}^x - \frac{2\left( \rme^x - 1\right )}{x}
	\leq \rme^x - 2 \; {\rme}^{\frac{x}{2}},
\end{equation}
where the inequality follows since $\frac{\rme^x - 1}{x} \geq {\rme}^{\frac{x}{2}}$ for $x>0$. By letting $y = {\rme}^{\frac{x}{2}}$, the above inequality can be rewritten as
\begin{equation}
	y^2 -2 y - \gamma \geq 0,
\end{equation}
which, since $\gamma \geq 0$, implies that
\begin{equation}
	y \geq 1+\sqrt{1+\gamma},
\end{equation}
and hence, proves the upper bound on $p^+_\gamma$.

To show Property 5), we use~\eqref{eq:PhiFunc} and write
\begin{align}
	\frac{\phi_\gamma(u)}{\phi_\gamma \left(\frac{1-u}{2}\right)} &= \frac{(1-u)^\gamma \left( \gamma \frac{u}{1-u} \log \left( \frac{1}{u} \right ) +1\right )}{\left(\frac{1+u}{2} \right )^\gamma \left(\gamma \frac{1-u}{1+u} \log \left( \frac{2}{1-u}\right ) +1  \right )}
	\\& = \left( \frac{2}{s_u} \right )^\gamma \left( \frac{1+ \gamma \frac{\log(1+t_u)}{t_u}}{1+ \gamma \frac{\log(1+s_u)}{s_u}} \right ),
\end{align}
where we let $t_u = \frac{1-u}{u}$ and $s_u = \frac{1+u}{1-u}$. By leveraging the equality above and letting $\sfT_u = \frac{\log(1+t_u)}{t_u}$ and $\sfS_u = \frac{\log(1+s_u)}{s_u}$, we can write
\begin{align}
	\phi_\gamma(u) - \phi_\gamma \left(\frac{1-u}{2}\right) 
	= \frac{\phi_\gamma \left( \frac{1-u}{2}\right)}{1+\gamma \sfS_u} \left [ \left( \frac{2}{s_u} \right )^\gamma (1+\gamma \sfT_u) - (1+\gamma \sfS_u) \right ].
	\label{eq:IntermediateDifferencePhy}
\end{align}
Our goal is to prove that the above expression is non-negative for all $u \in (0,1/3]$. We note that from Property 1) and the fact that $\sfS_u>0$, we have that $\frac{\phi_\gamma \left( \frac{1-u}{2}\right)}{1+\gamma \sfS_u} \geq 0$. Thus, proving that~\eqref{eq:IntermediateDifferencePhy} is non-negative for all $u \in (0,1/3 ]$ is equivalent to prove that
\begin{equation}
	f_u(\gamma) = \left( \frac{2}{s_u} \right )^\gamma (1+\gamma \sfT_u) - (1+\gamma \sfS_u)
	\label{eq:FGammaProp5}
\end{equation}
is non-negative for all $u \in (0,1/3 ]$. 
We next prove that this is the case by showing that $\gamma \mapsto f_u(\gamma)$ is non-decreasing for all $u \in (0,1/3 ]$, which implies that
\begin{equation}
	f_u(\gamma) \geq f_u(0) =0,
\end{equation}
for all $u \in (0,1/3 ]$.
To show that $\gamma \mapsto f_u(\gamma)$ in~\eqref{eq:FGammaProp5} is non-decreasing for all $u \in (0,1/3 ]$, it suffices to show that $f_u^\prime(\gamma) \geq 0$ for all $u \in (0,1/3 ]$.
To prove this, we first show that $\gamma \mapsto f_u^\prime(\gamma)$ is strictly increasing and then show that $f_u^\prime(0) \geq 0$ for all $u \in (0,1/3 ]$.

We have that
\begin{equation}
	f_u^\prime(\gamma) = \log \left( \frac{2}{s_u}\right ) \left( \frac{2}{s_u} \right )^\gamma (1+\gamma \sfT_u) + \left( \frac{2}{s_u} \right )^\gamma \sfT_u -\sfS_u,
\end{equation}
and
\begin{equation}
	f_u^{\prime \prime}(\gamma) =\left( \frac{2}{s_u} \right )^\gamma \log \left( \frac{2}{s_u}\right ) \left [ \log \left( \frac{2}{s_u}\right ) (1+\gamma \sfT_u) + 2\sfT_u\right ].
\end{equation}
Now, since $s_u \leq 2$ and $\sfT_u > 0$ for all $u \in (0,1/3 ]$, it follows that $f_u^{\prime \prime}(\gamma) > 0$ for all $u \in (0,1/3 ]$. Thus, $\gamma \mapsto f_u^\prime(\gamma)$ is strictly increasing for all $u \in (0,1/3 ]$.

We now show that, for all $u \in (0,1/3 ]$, it holds that $f_u^\prime(0) \geq 0$. We have that
\begin{align}
	f_u^\prime(0) & = \log \left( \frac{2}{s_u}\right )  +  \sfT_u -\sfS_u
	\\& = \log \left( \frac{2}{s_u}\right ) + \frac{\log(1+t_u)}{t_u} - \frac{\log(1+s_u)}{s_u}
	\\& =\frac{1}{(1-u^2)} \left( 2u(1-u)\log(2)-(1-u^2)\log(1+u) + 2(1-u)\log(1-u)-u(1+u)\log(u) \right ),
	\label{eq:DerivativefuGamma0}
\end{align}
where the last equality follows by substituting the definitions of $t_u$ and $s_u$.
Note that the denominator of $f_u^\prime(0)$ is always positive for all $u \in (0,1/3 ]$. Therefore, to prove $f_u^\prime(0) \geq 0$ for all $u \in (0,1/3 ]$, it suffices to prove that the numerator of~\eqref{eq:DerivativefuGamma0} is non-negative for all $u \in (0,1/3 ]$.
We denote the numerator of~\eqref{eq:DerivativefuGamma0} by $g_u(0)$.
In what follows, we will show that, for all $u \in (0,1/3 ]$, we have that $g_u(0)$ is concave and hence, it achieves minimum value at one of its end points, which are
\begin{equation}
	g_{0^+}(0) = g_{1/3}(0) = 0.
\end{equation}
To prove concavity, we take the second derivative of $g_u(0)$. It is a simple exercise to obtain
\begin{equation}
	\frac{\rmd^2}{\rmd u^2}g_u(0) = \frac{1-5u^2 -2u(1-u^2) \log \left( \frac{1+u}{4u} \right)}{u(u^2-1)}.
	\label{eq:DenSecDerivativeguGamma0}
\end{equation}
For all $u \in (0,1/3 ]$, the denominator of the above expression is always negative and hence, to prove concavity of $g_u(0)$ it suffices to prove that the numerator of~\eqref{eq:DenSecDerivativeguGamma0} is positive for all $u \in (0,1/3 ]$. We denote the numerator of~\eqref{eq:DenSecDerivativeguGamma0} by $h_u(0)$. We have that
\begin{align}
	h_u(0) & = 1-5u^2 -2u(1-u^2) \log \left( \frac{1+u}{4u} \right)
	\\& \geq 1-5u^2 -2u(1-u^2) \left( \frac{1+u}{4u}-1\right)
	\\& = \frac{1}{2}(1+3u-9u^2-3u^3)
	\\& = \frac{1}{2} ((1-9u^2)+3u(1-u^2))
	\\& >0,
\end{align}
where the first inequality follows by using the fact that $\log(x)\leq x-1$ for all $x>0$ and the second inequality is due to the fact that, for all $u \in (0,1/3 ]$, we have that  $(1-9u^2)\geq0$ and $3u(1-u^2)>0$.
Thus, for all $u \in (0,1/3 ]$, $h_u(0)>0$, which implies that $g_u(0)$ is concave with $g_u(0)\geq 0$ and hence, $f_u^\prime(0) \geq 0$, concluding the proof that $\gamma \mapsto f_u(\gamma)$ in~\eqref{eq:FGammaProp5} is non-decreasing for all $u \in (0,1/3 ]$.
This concludes the proof of Proposition~\ref{prop:properties_of_phi_gamma}.

\section{Proofs for Section~\ref{sec:focal_entropy}}

\subsection{Additional Properties of the Mapping $\gamma \mapsto H_\gamma(P_X,Q_X)$}
\label{app: focal-entropy vs gamma}
Consider the mapping $\gamma \mapsto H_\gamma(P_X,Q_X)$. The first and second order derivatives are given by
\begin{align}
	\frac{\rmd}{\rmd \gamma} H_\gamma(P_X,Q_X) & =\bbE \left [\log (1-Q_X(X)) \sfL_\gamma(Q_X(X)) \right ] \le 0, 
	\\
	\frac{\mathrm{d}^2}{\mathrm{d}\gamma^2} H_\gamma(P_X,Q_X) &= \bbE \left [\left(\log (1-Q_X(X) \right)^2 \sfL_\gamma(Q_X(X)) \right ] \ge 0,
\end{align}
where the expectations are taken with respect to $X \sim P_X$.

\subsection{Proof of Proposition~\ref{Prop:LimitGammaInf}}
\label{app:LimitGammaInf}
Define a positive measure
\begin{equation}
	\mu(x) = P_X(x) \log \frac{1}{Q_X(x)}, \, x \in \cX,
\end{equation}
which is well-defined, in view of the fact that $P_X \ll Q_X$, and has the same support as $P_X$, namely $\cS$. Moreover, $\mu$ is a finite measure since, from~\eqref{eq:CrossEntropy}, we have that
\begin{equation}
	\sum_x \mu(x) = H(P_X,Q_X)<\infty, 
\end{equation}
where the inequality holds from the assumption. 
Next, note that we can write 
\begin{align}
	H^{ \frac{1}{\gamma}}_\gamma(P_X,Q_X) &= \left( \sum_{x} \mu(x)  (1-Q_X(x))^\gamma \right)^{ \frac{1}{\gamma}} \nonumber
	\\& = \| 1-Q_X(\cdot) \|_\gamma,
\end{align}
where $\| \cdot\|_p$ is the $\ell_p(\cX, \mu), \, p>0,$ norm.   Now, taking
the limit for $\gamma \to \infty$ and using the standard result that $\ell_p$ norms converge to the essential supremum of the underlying function~\citep[Ch.~6, Ex.~7]{folland1999real}, we arrive at 
\begin{align}
	\lim_{\gamma \to \infty} \| 1-Q_X(\cdot) \|_\gamma &= \sup_{ x \in  \text{ support } \mu } (1-Q_X(x))\\
	&=\sup_{ x \in \cS } \,( 1-Q_X(x) ), \label{eq:ess_sup}
\end{align}
where in the last step we have used the fact that $\mu$ is supported on $\cS$. 
The proof of Proposition~\ref{Prop:LimitGammaInf} is concluded by noting that~\eqref{eq:ess_sup} is equal to one if $\cS$ is countably infinite. 

\noindent Figure~\ref{fig:focalEntropy_vs_gamma} provides an illustration of the convergence of $H_\gamma^{ \frac{1}{\gamma}}(P_X, Q_X)$ for a fixed $P_X$ with $|\cS|=3$ and two different distributions $Q_X$. From Figure~\ref{fig:focalEntropy_vs_gamma}, we note that this convergence is pretty fast.
\begin{figure}
	\centering
\begin{tikzpicture}

\definecolor{steelblue31119180}{RGB}{31,119,180}
\definecolor{darkorange25512714}{RGB}{255,127,14}
\definecolor{lightgray204}{RGB}{204,204,204}

\begin{axis}[
  width=8.1cm,
  height=4.5cm,
  xlabel={$\gamma$}, 
  xlabel style={yshift=0.2cm},
  ylabel={$H_\gamma^{ \frac{1}{\gamma}}(P_X, Q_X)$},
  ylabel style={font=\normalsize, yshift=-0.8cm},
  tick label style={font=\normalsize},
  axis line style={semithick},
  tick style={semithick},
  xmode=log, ymode=log,
  log basis x={10}, log basis y={10},
  xmin=0.07, xmax=150,
  ymin=0.08, ymax=1e5,
  ytick={1e-1,1e2,1e5},
yticklabels={$10^{-1}$,, $10^{5}$}, 
  xmajorgrids, ymajorgrids,
  grid style={dashed, gray!50},
  legend cell align={left},
  legend style={fill opacity=0.8, draw opacity=1, text opacity=1, draw=lightgray204}
]

\addplot [line width=1.3pt, steelblue31119180, mark=o, mark size=2, mark repeat=6]
table {%
0.1 41815.2099713677
0.115139539932645 10230.3136842604
0.132571136559011 3011.97272896198
0.152641796717523 1041.47581744061
0.175751062485479 414.097159211114
0.202358964772516 185.881625373336
0.232995181051537 92.7100318206413
0.268269579527973 50.6709170745002
0.308884359647748 29.9853642818701
0.355648030622313 19.0126324237646
0.409491506238043 12.8002396590885
0.471486636345739 9.07853434052229
0.542867543932386 6.73696357526219
0.625055192527397 5.19976236225951
0.719685673001152 4.15272573348113
0.828642772854684 3.41639470831827
0.954095476349994 2.88402256143293
1.09854114198756 2.48976528919902
1.2648552168553 2.19166375318739
1.45634847750124 1.9621840290451
1.67683293681101 1.78277373567424
1.93069772888325 1.64062930957344
2.22299648252619 1.52672045895369
2.55954792269954 1.43455085862462
2.94705170255181 1.3593617841465
3.39322177189533 1.29760888561365
3.90693993705462 1.24661125664011
4.49843266896945 1.20431149304263
5.17947467923121 1.16910867724524
5.96362331659464 1.13974020727724
6.866488450043 1.11519698676401
7.9060432109077 1.09466188489316
9.10298177991522 1.07746481165396
10.4811313415469 1.06304995416708
12.0679264063933 1.05095210185423
13.8949549437314 1.0407798044539
15.9985871960606 1.03220354007455
18.4206996932672 1.02494728952834
21.2095088792019 1.018782085965
24.4205309454865 1.01352037167178
28.1176869797423 1.00901040183878
32.3745754281764 1.00513043356502
37.2759372031494 1.00178288877416
42.9193426012878 0.998888935482085
49.4171336132383 0.996383925368967
56.8986602901829 0.994213912249355
65.5128556859551 0.992333202468516
75.4312006335462 0.990702700410112
86.8511373751352 0.989288769587767
100 0.98806239268361
};
\addlegendentry{$Q_X = [0.2\;\;0.02\;\;0.78]$}

\addplot [line width=1.3pt, darkorange25512714, mark=triangle*, mark size=2, mark repeat=6]
table {%
0.1 0.158007997493135
0.115139539932645 0.191206916570655
0.132571136559011 0.225757024157375
0.152641796717523 0.260930243913529
0.175751062485479 0.296073805128654
0.202358964772516 0.330638932559467
0.232995181051537 0.364193932990498
0.268269579527973 0.396424891151601
0.308884359647748 0.427127563339744
0.355648030622313 0.45619376217457
0.409491506238043 0.483594893321555
0.471486636345739 0.509364578374228
0.542867543932386 0.533581629579394
0.625055192527397 0.556354111735168
0.719685673001152 0.57780486630062
0.828642772854684 0.598058685945421
0.954095476349994 0.61723130037572
1.09854114198756 0.635420436096679
1.2648552168553 0.652699390780412
1.45634847750124 0.669113732084911
1.67683293681101 0.684681773532731
1.93069772888325 0.699399269243722
2.22299648252619 0.713248224847442
2.55954792269954 0.726208889903707
2.94705170255181 0.738273106113823
3.39322177189533 0.749456613130343
3.90693993705462 0.759808027565322
4.49843266896945 0.769413135811304
5.17947467923121 0.778394594489395
5.96362331659464 0.786908464958001
6.866488450043 0.795139472935777
7.9060432109077 0.803295982816178
9.10298177991522 0.811603393036967
10.4811313415469 0.820291562043252
12.0679264063933 0.829569425098702
13.8949549437314 0.839581328192603
15.9985871960606 0.850349540271048
18.4206996932672 0.861726696938669
21.2095088792019 0.873396198069705
24.4205309454865 0.884941734171524
28.1176869797423 0.895958922134927
32.3745754281764 0.906149229616646
37.2759372031494 0.915355498173345
42.9193426012878 0.923543342124458
49.4171336132383 0.930758170508614
56.8986602901829 0.9370842139243
65.5128556859551 0.942617081435143
75.4312006335462 0.947449739227422
86.8511373751352 0.951667189819782
100 0.955345340368282
};
\addlegendentry{$Q_X = [0.2\;\;0.78\;\;0.02]$}

\addplot [line width=2pt, red, dashed, opacity=0.8]
table {%
0.0707945784384138 0.98
141.253754462275 0.98
};
\addlegendentry{$\max_{x \in \cS} \ (1-Q_{X}(x))$}

\end{axis}
\end{tikzpicture}
	\vspace{-0.3cm}
	\caption{$P_X = \begin{bmatrix}0.4 & 0.58 & 0.02\end{bmatrix}$.}
	\vspace{-0.4cm}
	\label{fig:focalEntropy_vs_gamma}
\end{figure} 
\subsection{Proof of Proposition~\ref{prop:focal-finite-iff-cross-finite}}
\label{app:focal-finite-iff-cross-finite}
If $P_X$ is not absolutely continuous with respect to  $Q_X$, the statement trivially holds by the definition in~\eqref{eq:FocalEntropy}.     
Hence, suppose that $P_X \ll Q_X$. Then, on the one hand, from  Proposition~\ref{prop: focal-entropy vs gamma},  we have that 
\begin{equation}
	H_\gamma(P_X,Q_X)
	\;\le\;
	H(P_X,Q_X). \label{Eq:use_monote_In_finite}
\end{equation}
On the other hand, fix some $\delta >0$ and let $\cQ_\delta = \{x \in \cX :Q_X(x) \le \delta\}$. Then,
\begin{align}
	H_\gamma(P_X, Q_X) & \geq \sum_{x \in \cQ_\delta} P_X(x) \left( 1-Q_X(x)\right )^\gamma \log \left( \frac{1}{Q_X(x)}\right ) \notag
	\\& \ge  \left( 1-\delta\right )^\gamma \sum_{x \in \cQ_\delta} P_X(x)  \log \left( \frac{1}{Q_X(x)}\right )
	\\& = (1-\delta)^\gamma \left( H(P_X,Q_X) -\Delta_\delta \right), \label{eq:finitness_of_cross}
\end{align}
where we have defined
\begin{equation}
	\Delta_\delta = \sum_{x \in \cQ^c_\delta} P_X(x)  \log \left( \frac{1}{Q_X(x)}\right ). 
\end{equation}
Next, note that  since $\cQ^c_\delta$ is a finite set, then $\Delta_\delta$ is a finite sum for all $\delta>0$.  Combining~\eqref{Eq:use_monote_In_finite} and~\eqref{eq:finitness_of_cross}, we conclude that  $H_\gamma(P_X, Q_X) < \infty$ if and only if $H(P_X, Q_X) < \infty$, which concludes the proof of Proposition~\ref{prop:focal-finite-iff-cross-finite}.

\subsection{Proof of Proposition~\ref{prop:convexity-Hgamma}}
\label{app:convexity-Hgamma}
Consider a sequence $Q_n$ that converges to a distribution $Q_{\infty}$. Then, the lower semicontinuity property follows from the application of Fatou's lemma since
\begin{align}
	\liminf_{n \to \infty}   H_{\gamma}(P_X,Q_n )  & \ge  \bbE \left [  \liminf_{n \to \infty} \sfL_\gamma(Q_n(X)   ) \right ] 
	\\& = H_{\gamma}(P_X,Q_{\infty} ).
\end{align}
To show the convexity property, recall from Proposition~\ref{prop:AnalyticalPropFocalLoss} that $p \;\mapsto\; \sfL_\gamma(p)$ is strictly convex on $(0,1)$ for $\gamma \geq 0$.
Consequently,
\begin{align}
	H_\gamma(P_X,Q) =\sum_{x\in\cX}P_X(x)\,\sfL_{\gamma}(Q(x)) 
\end{align}
is a non-negative weighted sum of strictly convex functions and hence, it is convex in $Q$.  Moreover, $H_\gamma(P_X,Q)$ is strictly convex in $Q$ on $\cS$ due to the fact that it is a positive weighted sum of strictly convex functions. This concludes the proof of Proposition~\ref{prop:convexity-Hgamma}.

\subsection{Proof of Corollary~\ref{cor:ConseqTheorem1}}  
\label{app:ConseqTheorem1}

Property 1) follows directly from the expression of $P_\gamma^\star$ in~\eqref{eq:optimizer}.

For Property 2), note that the fact that $P_X\left(x_1\right) \geq P_X\left(x_2\right)$ implies that
\begin{equation}
	-\frac{\alpha_\gamma^\star}{P_X\left(x_1\right)}  \geq -\frac{\alpha_\gamma^\star}{P_X\left(x_2\right)}.
\end{equation}
By applying $\left( \sfL^{\prime}_\gamma \right )^{-1}$ to both sides of the above inequality and recalling that $t \;\mapsto\; \left( \sfL^{\prime}_\gamma \right )^{-1}(t)$ for $t \in (-\infty,0)$ is strictly increasing (see Proposition~\ref{prop:AnalyticalPropFocalLossInverse}), we arrive at 
\begin{equation}
	P^\star_\gamma(x_1) \ge P^\star_\gamma(x_2).
\end{equation}

For Property 3), note  that to have $P_\gamma^\star = P_X$, from~\eqref{eq:optimizer} we would, in fact, need that
\begin{equation}
	(\mathsf L_\gamma')^{-1} \left (-\tfrac{\alpha_\gamma^\star}{P_X(x)}\right ) = P_X(x) \ \text{for all} \ x \in \cS,
\end{equation}
or equivalently
\begin{equation}
	\alpha_\gamma^\star = -P_X(x) \,  \mathsf L_\gamma' \left( P_X(x)\right ), \ \text{for all} \ x \in \cS.
\end{equation}
Since the function $p \;\mapsto\;  \sfL^\prime_\gamma(p)$ is strictly increasing on $p \in (0,1)$ and continuous (see Proposition~\ref{prop:AnalyticalPropFocalLoss}), then the above equation is satisfied if and only if $\gamma=0$ or $P_X$ is a uniform distribution.

This concludes the proof of Corollary~\ref{cor:ConseqTheorem1}.

\subsection{Proof of Proposition~\ref{prop:BoundsAlpha}}
\label{app:BoundsAlpha}
We start by showing~\eqref{eq:alpha_bound_max_phi}. Toward this end, let $\phi_{\gamma,M}=\max_{t \in (0,1)} \phi_\gamma(t)$ and let
\begin{equation}
	q_i = \left( \sfL^{\prime}_\gamma \right )^{-1}\left(-\frac{\phi_{\gamma,M}}{p_i}\right),
\end{equation}
from which we obtain
\begin{equation}
	p_i = - \frac{\phi_{\gamma,M}}{\sfL^{\prime}_\gamma(q_i)} = \frac{\phi_{\gamma,M}}{\phi_\gamma(q_i)}  \, q_i \geq q_i,
\end{equation}
for all $i$'s. Thus, we have that
\begin{equation}
	F(\phi_{\gamma,M}) = \sum_{i}(\sfL^\prime_\gamma)^{-1} \left(-\tfrac{\phi_{\gamma,M}}{p_i}\right) = \sum_{i} q_i \leq \sum_{i} p_i =1. 
\end{equation}
Since $\alpha \mapsto F(\alpha)$ is strictly decreasing and $F(\alpha^\star_\gamma)=1$, it follows that $\alpha^\star_\gamma \le \phi_{\gamma,M}$. Finally, the bound $\phi_{\gamma,M} \le 1+\gamma$ is given in Proposition~\ref{prop:properties_of_phi_gamma}.

We now show~\eqref{eq:BoundsOptAlpha}. From the definition of $\alpha_\gamma^\star$, we have that
\begin{equation}
	\frac{1}{N} =  \frac{1}{N} \sum_{x \in \cS}  (\mathsf L_\gamma')^{-1} \left (-\tfrac{\alpha_\gamma^\star}{P_X(x)}\right ),  \label{eq:equality_norm_N}
\end{equation}
and an additional note that 
\begin{align}
	(\mathsf L_\gamma')^{-1} \left (-\tfrac{\alpha_\gamma^\star}{p_{\min}}\right )  & \le  \frac{1}{N} \sum_{x \in \cS}  (\mathsf L_\gamma')^{-1} \left (-\tfrac{\alpha_\gamma^\star}{P_X(x)}\right )  \notag\\
	&\le       (\mathsf L_\gamma')^{-1} \left (-\tfrac{\alpha_\gamma^\star}{p_{\max}}\right ), \label{eq:bounds_to_get_alpha}
\end{align}
where the inequalities follow from the fact that $t \mapsto (\sfL_\gamma')^{-1} \left ( t\right )$  for $t \in (-\infty,0)$ is strictly increasing,  as shown in Proposition~\ref{prop:AnalyticalPropFocalLossInverse}.  Now, using the fact that $p \;\mapsto\;  \sfL^\prime_\gamma(p)$ is  strictly increasing on $p \in (0,1)$,
as shown in Proposition~\ref{prop:AnalyticalPropFocalLoss}, and applying it to~\eqref{eq:equality_norm_N} and~\eqref{eq:bounds_to_get_alpha} we arrive at 
\begin{equation}
	-\tfrac{\alpha_\gamma^\star}{p_{\min}}   \le     \mathsf L_\gamma' \left (\frac{1}{N}\right )  \le      - \tfrac{\alpha_\gamma^\star}{p_{\max}},
\end{equation}
which leads to~\eqref{eq:BoundsOptAlpha}. 

{
We now prove~\eqref{eq:BoundsOptAlpha2}.
	Let $p_{(i)}$ denote the $i$th largest entry of $P_X$.
	Under the assumption $\gamma > \kappa(p_{\min})$,
	we know from Proposition~\ref{prop:properties_of_phi_gamma} that $\phi(\cdot)$ is strictly decreasing, i.e.,
	\begin{equation}
		\phi(p_{(1)})< \cdots < \phi(p_{(N)}),    
	\end{equation}
	which,  for all $i \in \{1,\ldots, N\}$, implies that
	\begin{equation}
		-\frac{\phi(p_{(1)})}{p_{(i)}}
		>
		-\frac{\phi(p_{(i)})}{p_{(i)}}
		=  \sfL_\gamma^\prime(p_{(i)}). 
	\end{equation}
	Now, by applying $\left( \sfL^{\prime}_\gamma \right )^{-1}$ on both sides of the above inequality and recalling that $t \mapsto (\sfL_\gamma')^{-1} \left ( t\right )$  for $t \in (-\infty,0)$ is strictly increasing
	(see Proposition~\ref{prop:AnalyticalPropFocalLossInverse}), we get
	\begin{equation}
		p_{(i)} <    (\sfL_\gamma^\prime)^{-1} \left(-\frac{\phi(p_{(1)})}{p_{(i)}}\right).   
	\end{equation}
	Similarly,   for all $i \in \{1,\ldots, N\}$, we conclude that 
	\begin{equation}
		p_{(i)} >    (\sfL_\gamma^\prime)^{-1} \left(-\frac{\phi(p_{(N)})}{p_{(i)}}\right). 
	\end{equation}
	Now, evaluating $F(\cdot)$ in~\eqref{eq:normalization} in $\phi(p_{(1)})$, we obtain
	\begin{equation}
		F\left(\phi(p_{(1)})\right)
		=\sum_{i=1}^N(\sfL_\gamma^\prime)^{-1} \left(-\frac{\phi(p_{(1)})}{p_{(i)}}\right)
		> \sum_{i=1}^N p_{(i)}=1,    
	\end{equation}  
	and  evaluating $F(\cdot)$ in~\eqref{eq:normalization} in $\phi(p_{(N)})$, we obtain
	\begin{equation}
		F\left(\phi(p_{(N)})\right)
		=\sum_{i=1}^N(\sfL_\gamma^\prime)^{-1} \left(-\frac{\phi(p_{(N)})}{p_{(i)}}\right)
		< \sum_{i=1}^N p_{(i)}=1.   
	\end{equation}
	Now recall that $F(\alpha)$ is continuous and strictly decreasing in~$\alpha$. Thus, by monotonicity, we conclude the proof of~\eqref{eq:BoundsOptAlpha2}.
	The proof of~\eqref{eq:BoundsOptAlpha3} follows by the same steps as those used for the proof of~\eqref{eq:BoundsOptAlpha2} with the only difference that, under the assumption $\gamma < \kappa(p_{\max})$,
	we know from Proposition~\ref{prop:properties_of_phi_gamma} that $\phi(\cdot)$ is now strictly increasing, i.e.,
	\begin{equation}
		\phi(p_{(1)})>\cdots > \phi(p_{(N)}),     
	\end{equation}
    This concludes the proof of Proposition~\ref{prop:BoundsAlpha}.
}

\subsection{Proof of Proposition~\ref{prop:binary-focal-minimizer}}
\label{app:binary-focal-minimizer}
		From Proposition~\ref{prop:convexity-Hgamma}, we have that the mapping $Q_X \mapsto H_\gamma(P_X,Q_X)$ is strictly convex {on $\{0,1\}$} provided that the distribution $P_X$ is non-degenerate. Thus, by letting $q = Q_X(1)$ (or equivalently, $Q_X(0) = 1-q$),
        the derivative
		\begin{align}
			D(q) & = \frac{\rmd }{\rmd q} H_\gamma(P_X,\{q,1-q\}) 
            \\& = p \, (1-q)^{\gamma-1} \left( \gamma \log (q) - \frac{1-q}{q} \right ) +(1-p) \, q^{\gamma-1} \left( - \gamma \log(1-q) + \frac{q}{1-q} \right )\label{eq:DerivativeDqBinary}
		\end{align}
		is strictly increasing with a unique root $q^\star_\gamma$.  Note also that the fact that $p\in \left (0,\tfrac12 \right ]$ from~\eqref{eq:powerLaw prob} implies that $q_{\gamma}\leq\tfrac12$ for all $\gamma>0$.

We prove lower bound in~\eqref{eq:BoundsonOptimlDistrBinaryCase}. From~\eqref{eq:powerLaw prob}, we have that
\begin{equation}
\frac{q_\gamma}{1-q_\gamma} = \frac{p^{\tfrac{1}{\gamma}}}{\,p^{\tfrac{1}{\gamma}} + (1-p)^{\tfrac{1}{\gamma}}\,} \frac{\,p^{\tfrac{1}{\gamma}} + (1-p)^{\tfrac{1}{\gamma}}\,}{(1-p)^{\tfrac{1}{\gamma}}} = \frac{p^{\tfrac{1}{\gamma}}}{(1-p)^{\tfrac{1}{\gamma}}},
\end{equation}
or equivalently
\begin{equation}
\frac{1-p}{p}=\left(\frac{1-q_\gamma}{q_\gamma}\right)^{\gamma}.	
\end{equation}
			Using the above inside~\eqref{eq:DerivativeDqBinary}, we obtain
			\begin{equation}
				D(q_\gamma)
				=p\,\frac{(1-q_\gamma)^{\gamma-1}}{q_\gamma} \left[\gamma\,\Delta(q_\gamma)+2q_\gamma-1\right],
			\end{equation}
			where $\Delta(q)=(1-q)\log \left( \frac1{1-q} \right )-q\log \left( \frac1{q} \right ) \leq 0 $ for $q \leq \tfrac12$.  
			Thus, since $q_\gamma \leq \frac{1}{2}$, we have that 
			\begin{equation}
				D(q_\gamma)
				\leq p\,\frac{(1-q_\gamma)^{\gamma-1}}{q_\gamma}(2q_\gamma-1)
				\leq 0,
			\end{equation}
			which implies that $q_\gamma \leq q^\star_\gamma$. 

We now prove the upper bound in~\eqref{eq:BoundsonOptimlDistrBinaryCase}. {From~\eqref{eq:powerLaw prob}, we have that}
			\begin{equation}
				\frac{1-p}{p}
				= \left(\frac{1-q_{\gamma+1}}{q_{\gamma+1}}\right)^{\gamma+1}.
			\end{equation}
			Using the above inside~\eqref{eq:DerivativeDqBinary}, we obtain
			\begin{equation}
				D(q_{\gamma+1})
				    =
				\gamma \,  p\,\frac{(1-q_{\gamma+1})^{\gamma-1}}{q_{\gamma+1}^2}\,\left[(1-q_{\gamma+1})^2\log \left( \tfrac1{1-q_{\gamma+1}} \right )
				- q_{\gamma+1}^2\log \left( \tfrac1{q_{\gamma+1}} \right )\right].
			\end{equation}
			For $q \leq \tfrac12$, we have that $(1-q)^2\log \left( \frac1{1-q} \right ) \geq q^2\log \left( \frac1{q} \right )$ and hence, $D(q_{\gamma+1})\geq 0$, leading to $q^\star_\gamma \leq q_{\gamma+1}$.
			
{Combining the fact that $D(q_\gamma) \leq 0 \leq D(q_{\gamma+1})$ with the strict monotonicity of $D$} yields
			\begin{equation}
				{q_\gamma \leq q^\star_\gamma \leq q_{\gamma+1}.}
			\end{equation}
		We now prove~\eqref{eq:DiffBoundsBinary}.
        {By the mean value theorem, there exists $\xi\in(\gamma,\gamma+1)$ such that
        \begin{equation}
        \label{eq:qPrimeXi}
        q'(\xi) = q_{\gamma+1} - q_\gamma = \widetilde{Q}_{\gamma+1}(1) - \widetilde{Q}_{\gamma}(1),
        \end{equation}
        where $q'(\xi) = \left. \frac{\rmd}{\rmd \gamma} q_\gamma \right |_{\gamma=\xi}$, which can be computed from~\eqref{eq:powerLaw prob}, that is,
        \begin{align}
				q'(\xi) & = - \log \left( \frac{p}{1-p}\right ) \frac{q_{\xi} (1-q_\xi)}{\xi^2}
                \\& \leq \left| \log \left( \frac{p}{1-p}\right ) \right | \frac{1}{4 \xi^2} \label{eq:IntermDistanceQOpt}
                \\& \leq \left| \log \left( \frac{p}{1-p}\right ) \right | \frac{1}{4 \gamma^2}, \label{eq:IntermDistanceQOpt2}
        \end{align}
        where~\eqref{eq:IntermDistanceQOpt} follows since $q_{\gamma} (1-q_\gamma) \leq \frac{1}{4}$ for $q_\gamma \leq \frac{1}{2}$ and~\eqref{eq:IntermDistanceQOpt2} follows since $\xi\in(\gamma,\gamma+1)$. Combining~\eqref{eq:qPrimeXi} with~\eqref{eq:IntermDistanceQOpt2}, we arrive at
        \begin{equation}
        \widetilde{Q}_{\gamma+1}(1) - \widetilde{Q}_{\gamma}(1) \leq \left| \log \left( \frac{p}{1-p}\right ) \right | \frac{1}{4 \gamma^2}.
        \end{equation}
        }
		This concludes the proof of Proposition~\ref{prop:binary-focal-minimizer}.

\section{Proofs for Section~\ref{sec:prop_optimizer}}

\subsection{Proof of Proposition~\ref{eq:PGmmaStar0Inft}}
\label{app:PGmmaStar0Inft}
Let $f_0$ be such that $(\mathsf L_\gamma')^{-1} \left (f_0\right ) = \frac{1}{|\cS|}$. 
Then, using the Taylor remainder theorem, we have that 
\begin{align}
	P_\gamma^\star(x) &=  (\mathsf L_\gamma')^{-1} \left ( -\frac{\alpha_\gamma^\star}{P_X(x)}\right ) 
	\\&= \frac{1}{|\cS|} - \left(  \frac{\alpha_\gamma^\star}{P_X(x)} + f_0  \right)  \frac{\rmd}{\rmd t} (\mathsf L_\gamma')^{-1}(t) |_{t=f_0}  + R_{x,\infty}
	\\&= \frac{1}{|\cS|} -  \frac{\left(  \frac{\alpha_\gamma^\star}{P_X(x)} + f_0  \right)}{\mathsf L_\gamma'' \left( (\mathsf L_\gamma')^{-1}(f_0) \right)}     + R_{x,\infty}
	\\&= \frac{1}{|\cS|} -  \frac{\left(  \frac{\alpha_\gamma^\star}{P_X(x)} + \sfL_\gamma' \left( \frac{1}{|\cS|} \right)  \right)}{\mathsf L_\gamma'' \left( \frac{1}{|\cS|} \right)}     + R_{x,\infty}, \label{eq:P_x_remenider_Expression}
\end{align}
where  the last step follows from the fact that $f_0 = \sfL_\gamma' \left( \frac{1}{|\cS|} \right) $ and the remainder term is given by 
\begin{align}
	R_{x,\infty} &=  \frac{\left (  \frac{\alpha_\gamma^\star}{P_X(x)} + \sfL_\gamma' \left( \frac{1}{|\cS|} \right) \right )^2}{2}  \left. \frac{\rmd^2}{\rmd t^2} (\mathsf L_\gamma')^{-1} (t) \right |_{t=c} 
	\\&=  - \frac{ \left (  \frac{\alpha_\gamma^\star}{P_X(x)} + \sfL_\gamma' \left( \frac{1}{|\cS|} \right) \right )^2}{2}  \left. \frac{ \mathsf L_\gamma''' (u) }{ \left(  \mathsf L_\gamma'' (u)\right)^3} \right  |_{u = (\mathsf L_\gamma')^{-1} (c) },
\end{align}
for some $ c \in \left (\sfL_\gamma' \left( \frac{1}{|\cS|} \right),  -\frac{\alpha_\gamma^\star}{P_X(x)} \right )$. By combining the bounds on $\alpha^\star_\gamma$ in~\eqref{eq:BoundsOptAlpha},  it is not difficult to see that 
\begin{equation}
	R_{x,\infty} = O \left ( \frac{1}{\gamma}\right ).\label{eq:order_bounds_remeinder}
\end{equation}
Summing both sides of~\eqref{eq:P_x_remenider_Expression} over $x \in \cS$, we arrive at 
\begin{equation}
	\alpha_\gamma^\star \sum_{x \in \cS} \frac{1}{P_X(x)}   + |\cS| \, \sfL_\gamma' \left( \frac{1}{|\cS|} \right)   -  \sfL_\gamma'' \left( \frac{1}{|\cS|} \right)  \sum_{x \in \cS}  R_{x,\infty} = 0,
\end{equation}
which implies that 
\begin{align}
	\alpha_\gamma^\star &=    -  \sfL'_\gamma \left( \frac{1}{|\cS|} \right) \mathsf{HM}(P_X) +  \frac{\mathsf{HM}(P_X) }{|\cS|} \sfL_\gamma'' \left( \frac{1}{|\cS|} \right)  \sum_{x \in \cS}  R_{x,\infty} \\
	&= -  \sfL'_\gamma \left( \frac{1}{|\cS|} \right) \mathsf{HM}(P_X) +  O \left( \gamma \, \left (1-\frac{1}{|\cS|}\right )^{\gamma} \right), \label{eq:alpha_approx_inside_proof}
\end{align}
where the last step follows by using~\eqref{eq:order_bounds_remeinder}.

Now, inserting~\eqref{eq:alpha_approx_inside_proof} into~\eqref{eq:P_x_remenider_Expression}, we arrive at the conclusion that as $\gamma \to \infty$, we have that
\begin{align}
	P_\gamma^\star(x) & = \frac{1}{|\cS|} + \frac{\sfL'_\gamma \left( \frac{1}{|\cS|} \right)}{ \sfL''_\gamma \left( \frac{1}{|\cS|} \right)} \left(  \frac{\mathsf{HM}(P_X)}{P_X(x) } -1 \right) + O \left( \frac{1}{\gamma}\right ) \\  
	&= \frac{1}{|\cS|} + O \left( \frac{1}{\gamma}\right ), 
\end{align}
where in the last equality we have used that  $\frac{\sfL'_\gamma \left( \frac{1}{|\cS|} \right)}{ \sfL''_\gamma \left( \frac{1}{|\cS|} \right)} = O \left( \frac{1}{\gamma} \right)$. 
This concludes the proof of Proposition~\ref{eq:PGmmaStar0Inft}.

\subsection{Proof of Theorem~\ref{thm:Signdi}}
\label{app:Signdi}
We consider two cases:

\noindent {\bf{Case~1:}} If $p_{(i)} \in (0,p_{\gamma,a}]$ or $p_{(i)} \in [ p_{\gamma,b},1)$, then $\alpha_\gamma^\star \geq \phi_\gamma(p_{(i)})$. Since $t \mapsto (\sfL_\gamma')^{-1} \left ( t\right )$  for $t \in (-\infty,0)$ is strictly increasing 
(see Proposition~\ref{prop:AnalyticalPropFocalLossInverse}), we have that
\begin{equation}
	p_{(i)} \geq (\mathsf L_\gamma')^{-1} \left (-\tfrac{\alpha_\gamma^\star}{p_{(i)}}\right ).
\end{equation}
From~\eqref{eq:optimizer}, the right-hand side of the above inequality is precisely $p^\star_{(i)}$ and hence, $d_i \geq 0.$

\noindent {\bf{Case~2:}} If $p_{(i)} \in (p_{\gamma,a},p_{\gamma,b})$, then $\alpha_\gamma^\star < \phi_\gamma(p_{(i)})$. Since $t \mapsto (\sfL_\gamma')^{-1} \left ( t\right )$  for $t \in (-\infty,0)$ is strictly increasing 
(see Proposition~\ref{prop:AnalyticalPropFocalLossInverse}), this implies that
\begin{equation}
	p_{(i)} < (\mathsf L_\gamma')^{-1} \left (-\tfrac{\alpha_\gamma^\star}{p_{(i)}}\right ).
\end{equation}
From~\eqref{eq:optimizer}, the right-hand side of the above inequality is precisely $p^\star_{(i)}$ and hence, $d_i < 0$. This concludes the proof of Theorem~\ref{thm:Signdi}.
\subsection{Proof of Proposition~\ref{prop:binary_regime}}
\label{app:binary_regime}
We focus on the case $\alpha_\gamma^\star \geq 1$ (if $\alpha_\gamma^\star < 1$, then from Lemma~\ref{lemma:Roots}, the result trivially holds). 
If $\alpha_\gamma^\star \geq 1$, then we need $\alpha_\gamma^\star = \phi_\gamma(p_{\gamma,b})\geq 1$. 
However, we now prove that for all $p \in [1/2,1)$, we have that $\phi_\gamma(p)<1$, which implies that $p_{\gamma,b} < 1/2$ (to ensure that $\phi_\gamma(p_{\gamma,b})\geq 1$) and hence, $p_{(1)} > p_{\gamma,b}$, i.e., $d_1\geq 0$. The fact that $\phi_\gamma(p)<1$, for all $p \in [1/2,1)$ is because:
(i) from Proposition~\ref{prop:properties_of_phi_gamma}, $\phi_\gamma(p)$ is unimodal and it is such that $\phi_\gamma(0^+) = 1$ and $\phi_\gamma(1)=0$; 
(ii) $\phi_\gamma(1/2) < 1$;
(iii) $p \mapsto \phi_\gamma(p)$ is strictly decreasing on $p \in [1/2,1)$ (from Proposition~\ref{prop:properties_of_phi_gamma}). Thus, $p_{\max} \geq p_{\gamma,b}$, i.e., $d_1\geq 0$. The fact that $d_2<0$ follows from Corollary~\ref{cor:NumberOfSignChanges}.
This concludes the proof of Proposition~\ref{prop:binary_regime}.
\subsection{Proof of Proposition~\ref{prop:ternary_regime}}
\label{app:ternary_regime} 
From Proposition~\ref{prop:properties_of_phi_gamma}, since $p_{\gamma,a}<1/3$, we have that
\begin{equation}
	\phi_\gamma(p_{\gamma,a}) > \phi_\gamma \left(\frac{1-p_{\gamma,a}}{2}\right).
\end{equation}
Moreover, we also need
\begin{equation}
	\alpha_\gamma^\star = \phi_\gamma(p_{\gamma,a}) = \phi_\gamma(p_{\gamma,b}).
\end{equation}
The above two equations, together with the unimodality of $\phi_\gamma$ (see Proposition~\ref{prop:properties_of_phi_gamma}), imply that
\begin{equation}
	p_{\gamma,b} <  \frac{1-p_{\gamma,a}}{2}.
\end{equation}
We are now ready to prove that $d_1 \geq 0$. The proof is by contradiction. Assume that $d_1 <0$. Then, from Theorem~\ref{thm:Signdi}, we have that
\begin{equation}
	p_{(1)} < p_{\gamma,b} <  \frac{1-p_{\gamma,a}}{2},
\end{equation}
which leads to
\begin{align}
	p_{(1)}+p_{(2)}+p_{(3)} &\leq  2 \, p_{(1)} + p_{(3)}
	\\& < 2 \frac{1-p_{\gamma,a}}{2} + p_{\gamma,a} 
	\\& =1,
\end{align}
where the second inequality follows since we need at least one sign change in the sequence $\{d_i\}, i \in \{1,2,3\}$ (see Corollary~\ref{cor:NumberOfSignChanges}).
Thus, we have found a contradiction since we need $p_{(1)}+p_{(2)}+p_{(3)}=1$. This concludes the proof of Proposition~\ref{prop:ternary_regime}.

\subsection{Example: $p_{\min} < p_{\gamma,a}$}
\label{app:N4WithOversuppression}
Let $\gamma=0.2$ and $\cS = \{x_1,x_2,x_3,x_4\}$ with
\begin{equation}
	P_X(x) = \frac{1}{51}\left \{
	\begin{array}{ll}
		1 & x = x_1,
		\\
		20 & x = x_2, x_4,
		\\
		10 & x = x_3,
		\\
		0 & \text{otherwise.}
	\end{array}
	\right.
\end{equation}
With this, from~\eqref{eq:optimizer}, we obtain
\begin{equation}
	P_\gamma^\star = \left \{
	\begin{array}{ll}
		3/154 & x = x_1,
		\\
		195/499 & x = x_2, x_4,
		\\
		191/960 & x = x_3,
		\\
		0 & \text{otherwise.}
	\end{array}
	\right.
\end{equation}
From the above, we have that
\begin{align}
	&d_1 = d_2 = \frac{20}{51} - \frac{195}{499} = \frac{35}{25449} > 0,
	\\
	& d_3 = \frac{10}{51} - \frac{191}{960} = -\frac{47}{16320} < 0,
	\\
	& d_4 = \frac{1}{51}  - \frac{3}{154} = \frac{1}{7854} >0,  
\end{align}
that is, $p_{\min} = \frac{1}{51} < p_{\gamma,a}$ (see also Figure~\ref{fig:sign_changes}).

\subsection{Proof of Proposition~\ref{thm:SuffCondGammaMajor}}
\label{app:SuffCondGammaMajor}
We start by noting that the bounds in~\eqref{eq:BoundsOptAlpha2}, together with the fact that $\phi(p_{(i)})$ is strictly increasing in $i$ (see Proposition~\ref{prop:properties_of_phi_gamma}), implies that there is a unique $k_M$ such that
\begin{equation}
	\label{eq:kM}
	k_M
	=\max\{i:\phi(p_{(i)})\le \alpha^\star_\gamma\},\ k_M \in \mathbb{N}. 
\end{equation}
Now, we consider two cases:
\begin{enumerate}
	\item {\bf{Case~1:}} $i \leq k_M$. For this case, we have that $\phi(p_{(i)})\le \alpha^\star_\gamma$, which implies
	\begin{equation}
		-\frac{\alpha^\star_{\gamma}}{p_{(i)}}
		\le 
		-\frac{\phi(p_{(i)})}{p_{(i)}}
		=\sfL^\prime_\gamma(p_{(i)}),
	\end{equation}
	and hence, by applying $\left( \sfL^{\prime}_\gamma \right )^{-1}$ on both sides of the above inequality and recalling that $\left( \sfL^{\prime}_\gamma \right )^{-1}$ is strictly increasing on its domain $(-\infty,0)$ (see Proposition~\ref{prop:AnalyticalPropFocalLossInverse}), we arrive at
	\begin{equation}
		p^\star_{(i)} \leq p_{(i)},
	\end{equation}
	that is, $d_i \geq 0$.
	\item {\bf{Case~2:}} $i > k_M$. For this case, we have that $\phi(p_{(i)}) > \alpha^\star_\gamma$, which implies
	\begin{equation}
		-\frac{\alpha^\star_{\gamma}}{p_{(i)}}
		> 
		-\frac{\phi(p_{(i)})}{p_{(i)}}
		=\sfL^\prime_\gamma(p_{(i)}),
	\end{equation}
	and hence, by using similar steps as in Case~1, we arrive at
	\begin{equation}
		p^\star_{(i)} > p_{(i)},
	\end{equation}
	that is, $d_i < 0$.
\end{enumerate}
In summary, we have proved that $d_i \geq 0$ for all $i \leq k_M$ and $d_i < 0$ for all $i >k_M$, that is, $p_{\min} > p_{\gamma_a}$ (see also Figure~\ref{fig:sign_changes}). This concludes the proof of Proposition~\ref{thm:SuffCondGammaMajor}.
\begin{rem}
	By following similar steps as above, it is not difficult to prove that, whenever $\gamma < \kappa(p_{\max})$, then $p_{\max} < p_{\gamma,b}$, that is, for these values of $\gamma$, the over-suppression regime {\em does} exist.
\end{rem}

\subsection{Proof of Proposition~\ref{prop:SuffCond1Positive}}
\label{app:SuffCond1Positive}
We start by noting that, if $\alpha_\gamma^\star < 1$, then the result trivially holds since from Corollary~\ref{cor:NumberOfSignChanges}, the sequence $\{ d_i \}$ has at least one sign change. Therefore, we next assume $\alpha_\gamma^\star \geq  1$.

Using the definition of $\phi_\gamma(p)$ and the fact that $\phi_\gamma(p_{\gamma,b}) = \alpha_\gamma^\star$, we have that
\begin{align}
	\alpha_\gamma^\star &= p_{\gamma,b} \; (1-p_{\gamma,b})^{\gamma-1} \left( \gamma \log \frac{1}{p_{\gamma,b}} +\frac{1-p_{\gamma,b}}{p_{\gamma,b}} \right) \label{eq:InitialAlphaNoBound}
	\\& \leq p_{\gamma,b} \; (1-p_{\gamma,b})^{\gamma-1} \left(  \frac{\gamma}{p_{\gamma,b}} - \gamma +\frac{1-p_{\gamma,b}}{p_{\gamma,b}} \right)
	\\& = (1-p_{\gamma,b})^{\gamma} (\gamma+1),
\end{align}
where the inequality follows since $\log(x) \leq x-1$ for $x>0$. Now, since $\alpha_\gamma^\star \geq 1$, we can further bound the above as follows,
\begin{equation}
	1 \leq (1-p_{\gamma,b})^{\gamma} (\gamma+1),
\end{equation}
which implies
\begin{equation}
	p_{\gamma,b} \leq 1 - \left( \frac{1}{\gamma+1}\right )^{\frac{1}{\gamma}}.
\end{equation}
To show that $p_{\max} > p_{\gamma,b}$, it suffices to show that $p_{\gamma,b} < \frac{1}{|\cS|}$ (since $p_{\max} \geq \frac{1}{|\cS|}$). A sufficient condition for this can be obtained by setting the above upper bound to be smaller than $\frac{1}{|\cS|}$, that is,
\begin{equation}
	1 - \left( \frac{1}{\gamma+1}\right )^{\frac{1}{\gamma}} < \frac{1}{|\cS|},
\end{equation}
or equivalently
\begin{equation}
	\label{eq:GeneralN}
	\left( \frac{1}{\gamma+1}\right )^{\frac{1}{\gamma}} > \frac{|\cS|-1}{|\cS|}.
\end{equation}
Since $\gamma \mapsto \left( \frac{1}{\gamma+1}\right )^{\frac{1}{\gamma}}$ is strictly increasing, it suffices to find $\gamma_0 > 0$ such that
\begin{equation}
	\label{eq:gamma0}
	\left( \frac{1}{\gamma_0+1}\right )^{\frac{1}{\gamma_0}} = \frac{|\cS|-1}{|\cS|}
\end{equation}
to ensure that~\eqref{eq:GeneralN} is satisfied for all $\gamma > \gamma_0$. We now focus on solving~\eqref{eq:gamma0}, which is equivalent to
\begin{equation}
	\label{eq:IntermExcludeW0}
	\left( \frac{|\cS|}{|\cS|-1} \right )^{\gamma_0} = \gamma_0 + 1,
\end{equation}
or
\begin{equation}
	\exp \left( \gamma_0 \log \left( \frac{|\cS|}{|\cS|-1} \right ) \right ) = \gamma_0 + 1.
\end{equation}
Letting $x=\gamma_0 +1$ (where $x >1$) and $n = \log \left( \frac{|\cS|}{|\cS|-1} \right )$, we arrive at
\begin{equation}
	\exp \left( (x-1) \, n \right ) = x,
\end{equation}
or equivalently
\begin{equation}
	\exp (-n) = x \exp(-nx),
\end{equation}
or equivalently
\begin{equation}
	-n\exp (-n) = -n \; x \exp(-nx).
\end{equation}
By letting $y = -n \;x$ and $t = -n\exp (-n)$, we obtain
\begin{equation}
	\label{eq:FinalLambert}
	y \exp(y) = t.
\end{equation}
Note that the above equation can be solved for $y$ only if $t\geq -\frac{1}{\rme}$, which in our case holds since
\begin{align}
	t & = -n\exp (-n)
	\\& = - \log \left( \frac{|\cS|}{|\cS|-1} \right ) \exp \left ( -\log \left( \frac{|\cS|}{|\cS|-1} \right ) \right )
	\\& = \frac{|\cS|-1}{|\cS|} \log \left( \frac{|\cS|-1}{|\cS|} \right ) \label{eq:EqTInc}
	\\& \geq \frac{1}{2} \log \left ( \frac{1}{2}\right ) \label{eq:FirstIneT}
	\\& > -\frac{1}{\rme},
\end{align}
where the inequality in~\eqref{eq:FirstIneT} follows since the function in~\eqref{eq:EqTInc} is increasing in $|\cS|$. 
Thus,~\eqref{eq:FinalLambert} can be solved and, since $t<0$, it has two distinct real solutions, namely
\begin{equation}
	\label{eq:SolPrincBranch}
	y  = \rmW_0 (t),
\end{equation}
and 
\begin{equation}
	y  = \rmW_{-1} (t).
\end{equation}
We now show that the solution $y  = \rmW_0 (t)$ leads to the trivial solution $\gamma_0=0$. We start by noting that~\eqref{eq:SolPrincBranch} is equivalent to
\begin{equation}
	\label{eq:Gamma0PrincipalBranch}
	\gamma_0 = -1 - \frac{1}{n} \rmW_0 \left( -n \exp(-n)\right ).
\end{equation}
Now, we know that $\rmW_0 (t) \in (-1,0)$ when $-\frac{1}{\rme}<t<0$ and hence, $-\frac{1}{n} \rmW_0(t) \in \left (0, \frac{1}{n} \right )$. Thus, from~\eqref{eq:Gamma0PrincipalBranch}, we obtain
\begin{equation}
	\gamma_0 = -1 - \frac{1}{n} \rmW_0 \left( -n \exp(-n)\right ) \in \left (-1,-1 + \frac{1}{n} \right ).
\end{equation}
We note that $0 \in \left (-1,-1 + \frac{1}{n} \right )$ and indeed $\gamma_0 =0$ satisfies~\eqref{eq:IntermExcludeW0}. However, we already know that $\gamma>0$.
Therefore, we only consider $y  = \rmW_{-1} (t)$, which is equivalent to
\begin{equation}
	\label{eq:FinalGamma0}
	\gamma_0 = -1 - \frac{1}{n} \rmW_{-1} \left( -n\exp (-n) \right ),
\end{equation}
where recall that $n = \log \left( \frac{|\cS|}{|\cS|-1} \right )$. Note also that the right-hand side of~\eqref{eq:FinalGamma0} is always positive. This is because $\rmW_{-1} (t) \in (-\infty,-2 \log(2))$ when $\frac{1}{2} \log \left ( \frac{1}{2}\right )\leq t<0$ and hence, $-\frac{1}{n} \rmW_{-1}(t) \in \left (\frac{2 \log(2)}{n}, \infty \right )$ and $\frac{2 \log(2)}{n} \geq 2$ for all $|\cS| \geq 2$.
Therefore, we conclude that $p_{\max} > p_{\gamma,b}$, whenever $\gamma > \gamma_0$.
This concludes the proof of Proposition~\ref{prop:SuffCond1Positive}.

\subsection{Proof of Proposition~\ref{prop:Majorization}}
\label{app:Majorization}
We start by noting that when $p_{\min} > p_{\gamma_a}$, from Theorem~\ref{thm:Signdi} we have that either $p_{(i)} \in (p_{\gamma,a},p_{\gamma,b})$ (in which case $d_i< 0$) or $p_{(i)} \in [p_{\gamma,b},1)$ (in which case $d_i \geq 0$) for all $i \in \{1,2,\ldots,|\cS|\}$. Let
\begin{equation}
	v = \min \{i: d_i < 0\}.
\end{equation}
Then, the sequence $\{ d_i \}$ has one sign change, i.e., $d_i \geq 0$ for all $i \in \{1,2,\ldots,v-1\}$ and $d_i < 0$ for all 
$i \in \{v,v+1, \ldots , |\cS|\}$.
In addition, we also know that
\begin{equation}
	{\sum_{i=1}^{|\cS|} d_i = 0.}
\end{equation}
A vector whose entries change sign 
at most
once (from positive to negative) and
sum to zero has all partial sums non‐negative, 
that is, for all 
{$k \in \{1,\ldots,|\cS|\}$,} 
we have that 
\begin{equation}
	\sum_{i=1}^k d_i \ge0,
\end{equation}
or equivalently
\begin{equation}
	\sum_{i=1}^k p_{(i)}\;\ge\;\sum_{i=1}^k p^\star_{(i)},
\end{equation}
which, together with $\sum_{i=1}^{|\cS|} p_{(i)}=\sum_{i=1}^{|\cS|} p^\star_{(i)}=1$,
is equivalent
to $P_X\succ P^\star_\gamma$. This concludes the proof of Proposition~\ref{prop:Majorization}.

\subsection{Proof of Lemma~\ref{lemma:Prophgamma}}
\label{app:Prophgamma}
To show the first property, note that for $p \in (0,1)$, the function $\gamma \mapsto p^{\left(1-p\right)^\gamma}$ is strictly increasing and hence, $\gamma \mapsto h_\gamma(P_X)$ is strictly increasing.  

To show the lower bound in the second property, note that
\begin{equation}
	h_\gamma(P_X) \ge h_0(P_X) =\log \left( \sum_{x \in \cX} P_X(x) \right )  = 0, 
\end{equation}
where we have used the fact that $\gamma \mapsto h_\gamma(P_X)$ is strictly increasing.
To prove the upper bound in the second property, we consider two cases separately:
\begin{itemize}
	\item {\bf{Case $\gamma > 1$:}} In this case, we have that
	\begin{align}
		&\sum_{x \in \cX} P_X(x)^{\left(1-P_X(x)\right)^\gamma} 
		\\ &= \sum_{x \in \cX} \rme^{ {\left(1-P_X(x)\right)^\gamma \log P_X(x)}  }\\
		&\le   \sum_{x \in \cX} \rme^{\left(1- \gamma P_X(x)\right) \log P_X(x)}  \label{eq:Bernoulli_inq}\\
		&=   \sum_{x \in \cX} P_X(x) \rme^{ - \gamma P_X(x) \log P_X(x)} \\
		& \le \rme^{  \frac{\gamma}{\rme} }   \sum_{x \in \cX} P_X(x) = \rme^{  \frac{\gamma}{\rme} } , \label{eq:xlogx_bound}
	\end{align}
	where~\eqref{eq:Bernoulli_inq} follows from the Bernoulli's inequality  $(1+x)^r \ge 1+rx, \, x \ge -1, r \ge 1$; and~\eqref{eq:xlogx_bound} follows from the bound $x \log x \ge -\frac{1}{\rme}$.
	Thus, by using the above inside~\eqref{eq:hGamma}, we arrive at 
	\begin{equation}
		h_\gamma(P_X) \leq \frac{\gamma}{\rme}.
	\end{equation}
	\item {\bf{Case $0 \leq \gamma \leq 1$:}}
	For this case, we let
	\begin{equation}
		f(\gamma) = \sum_{x \in \cX} P_X(x)^{(1- P_X(x))^\gamma},
	\end{equation}
	and we note that
	\begin{align}
		f'(\gamma) &= \sum_{x  \in\cX} P_X(x)^{(1- P_X(x))^\gamma}  (1- P_X(x))^\gamma  \log 
		(P_X(x)) \log (1-P_X(x)) \\
		&\le  \sum_{x \in\cX}  P_X(x)^{(1- P_X(x))^\gamma}
		\log 
		\left (\frac{1}{P_X(x)} \right ) \log \left (\frac{1}{1-P_X(x)} \right ) \label{eq:first_bound_(1-p)^t}
		\\& \leq \sum_{x \in\cX}  P_X(x)^{(1- P_X(x))^\gamma} \label{eq:second_bound_(1-p)^t}
		\\& \leq \sum_{x \in\cX}  P_X(x)^{(1- P_X(x))} \label{eq:third_bound_(1-p)^t}
		\\& \le \rme^{ \frac{1}{\rme}},
		\label{eq:Bound_on_der}
	\end{align}
	where: in~\eqref{eq:first_bound_(1-p)^t}, we have used the fact that $(1- P_X(x))^\gamma \le 1$;~\eqref{eq:second_bound_(1-p)^t} follows from $\log(x) \le x-1, x>0$;~\eqref{eq:third_bound_(1-p)^t} is due to the fact that $\gamma \mapsto p^{\left(1-p\right)^\gamma}$ is strictly increasing; and in~\eqref{eq:Bound_on_der} we have used the bound in \eqref{eq:xlogx_bound}.
	
	Now, by using the mean value theorem, there exists $c \in (0, \gamma)$ such that
	\begin{align}
		f(\gamma)  &= f'(c) (\gamma -0)  + f(0)\\
		& \le \rme^{ \frac{1}{\rme}} \gamma  + 1,
	\end{align}
	where in the last step we have used the bound in~\eqref{eq:Bound_on_der} and the fact that $f(0)=1$.
	
	Next, by using the above inside~\eqref{eq:hGamma}, we arrive at 
	\begin{align}
		h_\gamma(P_X) &= \log \left( f(\gamma) \right ) \\
		&\le \log ( \rme^{ \frac{1}{\rme}} \gamma +1)\\
		&\le \rme^{ \frac{1}{\rme}} \gamma ,
	\end{align}
	where in the last step we have used the bound $\log(1+x) \le x.$
\end{itemize}

Taking the maximum of both bounds concludes the proof of Lemma~\ref{lemma:Prophgamma}.

\subsection{Proof of Proposition~\ref{prop:RelatEntropyRelation}}
\label{app:RelatEntropyRelation}
		To prove the first expression,  we note that by using the identity in~\eqref{eq:Identi_for_focal_loss}, we have that 
		\begin{align}
			&H_\gamma(P_X,Q_X) \\
			&= \bbE_{X \sim P_X} \left[ \log \left( \frac{1}{Q_X^{(\gamma)}(X)} \right ) \right] - h_\gamma(Q_X)\\
			&= \bbE_{X \sim P_X} \left[ \log \left( \frac{P_X(X)}{Q_X^{(\gamma)}(X)} \right ) \right] -  \bbE \left[ \log P_X(X)  \right] - h_\gamma(Q_X).
		\end{align}
		To show the second expression, we observe that 
		\begin{align}
			&H_\gamma(P_X,Q_X) \\
			&= \sum_{x \in \cX} P_X(x) (1- Q_X(x))^\gamma \log  \left( \frac{1}{Q_X(x)} \right )\\
			&= \rho_\gamma(P_X,Q_X) \left( \sum_{x \in \cX} R_X(x)  \log  \left( \frac{1}{Q_X(x)} \right ) \right) \\
			&=   \rho_\gamma(P_X,Q_X) \left( D_{\sf{KL}}(R_X\| Q_X) + H(R_X) \right). 
		\end{align}
		This concludes the proof of Proposition~\ref{prop:RelatEntropyRelation}.
 
\subsection{Proof of Proposition~\ref{prop:MinFocalCrossRel}}
\label{app:MinFocalCrossRel}

The lower bound follows since, from Proposition~\ref{prop: focal-entropy vs gamma}, we know that the mapping $\gamma \mapsto H_\gamma(P_X,Q_X)$ is non-increasing and hence,
\begin{align}
&\inf_{Q_X} H(P_X,Q_X) -  \inf_{Q_X} H_\gamma(P_X,Q_X) 
\\& \geq \inf_{Q_X} H(P_X,Q_X) -  \inf_{Q_X} H_0(P_X,Q_X) 
\\& = 0.
\end{align}
For the upper bound, we have that
	\begin{align}
    & \inf_{Q_X} H(P_X,Q_X) -  \inf_{Q_X} H_\gamma(P_X,Q_X) \\
		& = \inf_{Q_X} H(P_X,Q_X) -  \inf_{Q_X} \left ( H \left( P_X,Q^{(\gamma)}_X \right )  - h_\gamma(Q_X) \right) \label{eq:FirstEqInf}\\
		& \le  \inf_{Q_X} H(P_X,Q_X) -  \inf_{Q_X} H \!\left (P_X,Q^{(\gamma)}_X \right )  +     \sup_{Q_X}  h_\gamma(Q_X)\label{eq:bound_minmax} \\
		& \le  \inf_{Q_X} H(P_X,Q_X) -  \inf_{Q_X} H(P_X,Q_X)  +     \sup_{Q_X}  h_\gamma(Q_X) \label{eq:bound_without_Q_hat} \\
		&= \sup_{Q_X} h_\gamma(Q_X)\\
		& \le  \rme^{ \frac{1}{\rme}} \gamma, \label{eq:using_lemma}
	\end{align}
	where:~\eqref{eq:FirstEqInf} follows from~\eqref{eq:rel_focal_ent2}; ~\eqref{eq:bound_minmax} is due to the fact that $\inf_x (f(x)-g(x)) \geq \inf_x f(x) - \sup_{x} g(x)$, which follows since $f(x) \geq \inf_x f(x)$ and $g(x) \leq \sup_x g(x)$;~\eqref{eq:bound_without_Q_hat} is due to the fact that the space over which $Q^{(\gamma)}_X$ is defined is a subset of $\cP(\cX)$; and 
	in~\eqref{eq:using_lemma} we have used the bound in Lemma~\ref{lemma:Prophgamma}.
    This concludes the proof of Proposition~\ref{prop:MinFocalCrossRel}.

\section{Experiments on Synthetic Data}
\label{app:SyntheticData}
We consider two classes $C \in \{0,1\}$ 
with $P[C=0]=0.95$ and $P[C=1]=0.05$, and two features $F_1 \in \{0,1,2,3\}$ and $F_2 \in \{0,1,2,3\}$,  
Note that this is an example of an imbalanced dataset. We let
\begin{align}
	P[F_1|C = 0] &= [0.65, 0.20, 0.10, 0.05],
	\\
	P[F_1|C = 1] &= [0.10, 0.25, 0.45, 0.20],
	\\
	P[F_2|C = 0] &= [0.50, 0.30, 0.15, 0.05],
	\\
	P[F_2|C = 1] &= [0.20, 0.50, 0.20, 0.10].
\end{align}
We assume that the two features are 
conditionally independent given the class, i.e., $P[F_1, F_2|C ] =
P [F_1|C ] \,P [F_2|C]$. 
From this, we can therefore obtain the joint probability mass function as follows,
\begin{equation}
	P [F_1, F_2, C ] = P [F_1|C ] \, P [F_2|C ] \, P[C],
\end{equation}
and, by the Bayes' rule, also get $P[C|F_1,F_2]$ (see the bars labeled as ``True $P_{C|F_1,F_2}$'' in Figure~\ref{fig:MLP experiment} for some examples).
We used this $P_{C|F_1,F_2}$  inside~\eqref{eq:optimizer} to obtain $P_{\gamma}^\star$ for $\gamma=1$, i.e., $P_{\gamma}^\star = \arg \min_{Q_X} H_\gamma(P_{C|F_1,F_2},Q_X)$ (see the bars labeled as ``Theory $P_{C|F_1,F_2}$'' in Figure~\ref{fig:MLP experiment} for some examples).

We generated $10,000$ samples by repeatedly sampling from the $16$ possible two-dimension feature vector space (i.e., two features with four possible values each, and two classes). These samples were used to train a feedforward neural network with a single hidden layer of $64$ neurons, each using the ReLU activation function.
We trained this model by using the focal loss in~\eqref{eq:FocalLoss} with $\gamma=1$ and performed the optimization with the Adam optimizer at a learning rate of $10^{-3}$. Training was carried out with a batch size of 64 over 30 epochs. The trained neural network produced predicted probabilities for each of the two classes by applying the softmax function to the output layer (see the bars labeled as ``Model $P_{C|F_1,F_2}$'' in Figure~\ref{fig:MLP experiment} for some examples).
From Figure~\ref{fig:MLP experiment}, we observe that the neural network output probability is close to $P_\gamma^\star$ in~\eqref{eq:optimizer} (i.e., with a maximum difference of $0.027$). This not only supports our theoretical analysis, but it also shows that the neural network has effectively converged to the global minimum, achieving the best possible performance under the focal loss.

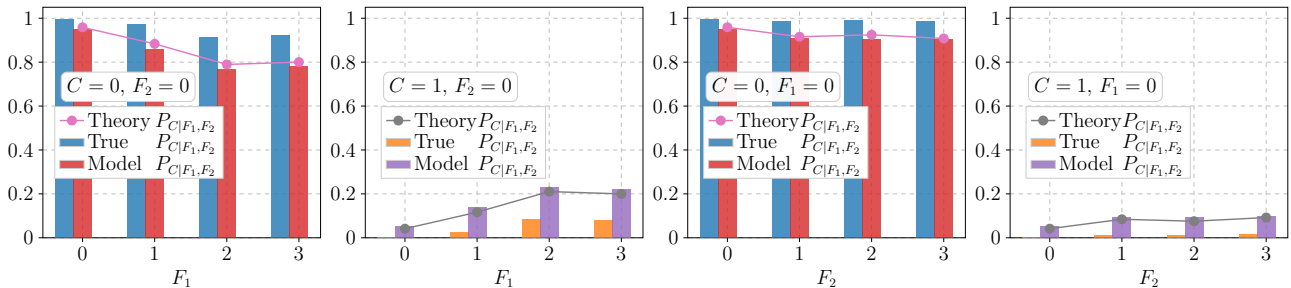
\begin{figure*}[t]
	\centering
	\begin{subfigure}[t]{0.25\textwidth}\centering
		\resizebox{\linewidth}{!}{
\begin{tikzpicture}

\definecolor{crimson2143940}{RGB}{214,39,40}
\definecolor{darkgray176}{RGB}{176,176,176}
\definecolor{lightgray204}{RGB}{204,204,204}
\definecolor{orchid227119194}{RGB}{227,119,194}
\definecolor{steelblue31119180}{RGB}{31,119,180}

\begin{axis}[
legend cell align={left},
legend style={
  at={(0.65,0.57)},
  anchor=north east,
  fill opacity=1,
  draw opacity=1,
  text opacity=1,
  draw=lightgray204,
  font=\axisfont,
},
tick align=outside,
tick pos=left,
x grid style={dashed, gray!50},
xlabel={$F_1$},
xlabel style={font=\axisfont},
xmajorgrids,
xmin=-0.55, xmax=3.3,
xtick style={color=black},
y grid style={dashed, gray!50},
ymajorgrids,
ymin=0, ymax=1.05,
ytick style={color=black},
tick label style={font=\axisfont},
major tick length=3pt
]

\draw[draw=none,fill=steelblue31119180,fill opacity=0.8] (axis cs:-0.375,0) rectangle (axis cs:-0.125,0.996771589991929);
\draw[draw=none,fill=steelblue31119180,fill opacity=0.8] (axis cs:0.625,0)  rectangle (axis cs:0.875,0.974358974358974);
\draw[draw=none,fill=steelblue31119180,fill opacity=0.8] (axis cs:1.625,0)  rectangle (axis cs:1.875,0.913461538461538);
\draw[draw=none,fill=steelblue31119180,fill opacity=0.8] (axis cs:2.625,0)  rectangle (axis cs:2.875,0.922330097087379);

\draw[draw=none,fill=crimson2143940,fill opacity=0.8] (axis cs:-0.125,0) rectangle (axis cs:0.125,0.948001027107239);
\draw[draw=none,fill=crimson2143940,fill opacity=0.8] (axis cs:0.875,0)  rectangle (axis cs:1.125,0.860628306865692);
\draw[draw=none,fill=crimson2143940,fill opacity=0.8] (axis cs:1.875,0)  rectangle (axis cs:2.125,0.767855882644653);
\draw[draw=none,fill=crimson2143940,fill opacity=0.8] (axis cs:2.875,0)  rectangle (axis cs:3.125,0.781163275241852);

\addplot [line width=1pt, orchid227119194, mark=*, mark size=3, mark options={solid}]
table {%
0 0.958639176152701
1 0.883408162284922
2 0.789506398979483
3 0.800144160879881
};

\node[
  anchor=north west,
  draw=lightgray204,
  fill=white,
  fill opacity=1,
  text opacity=1,
  rounded corners,
  inner sep=4pt,
  font=\axisfont
] at (axis description cs:0.065,0.72) {$C=0$, $F_2=0$};

\addlegendimage{area legend,draw=none,fill=steelblue31119180,fill opacity=0.8}
\addlegendentry{\makebox[1.5cm][l]{Theory} $P_{C|F_1,F_2}$}

\addlegendimage{area legend,draw=none,fill=crimson2143940,fill opacity=0.8}
\addlegendentry{\makebox[1.5cm][l]{True} $P_{C|F_1,F_2}$}

\addlegendimage{line legend,orchid227119194,line width=1pt,mark=*,mark options={solid},mark size=3}
\addlegendentry{\makebox[1.5cm][l]{Model} $P_{C|F_1,F_2}$}

\end{axis}
\end{tikzpicture}}
	\end{subfigure}\hfill
	\begin{subfigure}[t]{0.25\textwidth}\centering
		\resizebox{\linewidth}{!}{
\begin{tikzpicture}

\definecolor{darkgray176}{RGB}{176,176,176}
\definecolor{darkorange25512714}{RGB}{255,127,14}
\definecolor{gray127}{RGB}{127,127,127}
\definecolor{lightgray204}{RGB}{204,204,204}
\definecolor{mediumpurple148103189}{RGB}{148,103,189}

\begin{axis}[
  legend pos=south east,
  legend cell align={left},
legend style={
  at={(0.65,0.57)},
  anchor=north east,
  fill opacity=1,
  draw opacity=1,
  text opacity=1,
  draw=lightgray204,
  font=\axisfont,
},
  tick align=outside,
  tick pos=left,
  x grid style={dashed, gray!50},
  xlabel={$F_1$},
  xlabel style={font=\axisfont}, 
  xmajorgrids,
  xmin=-0.55, xmax=3.3,
  xtick style={color=black},
  y grid style={dashed, gray!50},
  ymajorgrids,
  ymin=0, ymax=1.05,
  ytick style={color=black},
  tick label style={font=\axisfont},
  major tick length=3pt
]

\draw[draw=none,fill=darkorange25512714,fill opacity=0.8] (axis cs:-0.375,0) rectangle (axis cs:-0.125,0.00322841000807103);
\draw[draw=none,fill=darkorange25512714,fill opacity=0.8] (axis cs:0.625,0)  rectangle (axis cs:0.875,0.0256410256410256);
\draw[draw=none,fill=darkorange25512714,fill opacity=0.8] (axis cs:1.625,0)  rectangle (axis cs:1.875,0.0865384615384616);
\draw[draw=none,fill=darkorange25512714,fill opacity=0.8] (axis cs:2.625,0)  rectangle (axis cs:2.875,0.0776699029126214);

\draw[draw=none,fill=mediumpurple148103189,fill opacity=0.8] (axis cs:-0.125,0) rectangle (axis cs:0.125,0.0519989915192127);
\draw[draw=none,fill=mediumpurple148103189,fill opacity=0.8] (axis cs:0.875,0)  rectangle (axis cs:1.125,0.139371693134308);
\draw[draw=none,fill=mediumpurple148103189,fill opacity=0.8] (axis cs:1.875,0)  rectangle (axis cs:2.125,0.23214416205883);
\draw[draw=none,fill=mediumpurple148103189,fill opacity=0.8] (axis cs:2.875,0)  rectangle (axis cs:3.125,0.218836724758148);

\addplot [line width=1pt, gray127, mark=*, mark size=3, mark options={solid}]
table {%
0 0.0413608238472989
1 0.116591837715077
2 0.210493601020516
3 0.199855839120119
};

\node[
  anchor=north west,
  draw=lightgray204,
  fill=white,
  fill opacity=1,
  text opacity=1,
  rounded corners,
  inner sep=4pt,
  font=\axisfont
] at (axis description cs:0.065,0.72){$C=1$, $F_2=0$};

\addlegendimage{area legend,draw=none,fill=darkorange25512714,fill opacity=0.8}
\addlegendentry{\makebox[1.4cm][l]{Theory} $P_{C|F_1,F_2}$}

\addlegendimage{area legend,draw=none,fill=mediumpurple148103189,fill opacity=0.8}
\addlegendentry{\makebox[1.4cm][l]{True} $P_{C|F_1,F_2}$}

\addlegendimage{line legend,gray127,line width=1pt,mark=*,mark options={solid},mark size=3}
\addlegendentry{\makebox[1.4cm][l]{Model} $P_{C|F_1,F_2}$}

\end{axis}

\end{tikzpicture}}
	\end{subfigure}\hfill
	\begin{subfigure}[t]{0.25\textwidth}\centering
		\resizebox{\linewidth}{!}{
\begin{tikzpicture}

\definecolor{crimson2143940}{RGB}{214,39,40}
\definecolor{darkgray176}{RGB}{176,176,176}
\definecolor{lightgray204}{RGB}{204,204,204}
\definecolor{orchid227119194}{RGB}{227,119,194}
\definecolor{steelblue31119180}{RGB}{31,119,180}

\begin{axis}[
  legend pos=south east,
  legend cell align={left},
  legend style={
  at={(0.65,0.57)},
  anchor=north east,
  fill opacity=1,
  draw opacity=1,
  text opacity=1,
  draw=lightgray204,
  font=\axisfont,
},
  tick align=outside,
  tick pos=left,
  x grid style={dashed, gray!50},
  xlabel={$F_2$},
  xlabel style={font=\axisfont}, 
  xmajorgrids,
  xmin=-0.55, xmax=3.3,
  xtick style={color=black},
  y grid style={dashed, gray!50},
  ymajorgrids,
  ymin=0, ymax=1.05,
  ytick style={color=black},
  tick label style={font=\axisfont},
  major tick length=3pt
]

\draw[draw=none,fill=steelblue31119180,fill opacity=0.8] (axis cs:-0.375,0) rectangle (axis cs:-0.125,0.996771589991929);
\draw[draw=none,fill=steelblue31119180,fill opacity=0.8] (axis cs:0.625,0) rectangle (axis cs:0.875,0.986684420772304);
\draw[draw=none,fill=steelblue31119180,fill opacity=0.8] (axis cs:1.625,0) rectangle (axis cs:1.875,0.98931909212283);
\draw[draw=none,fill=steelblue31119180,fill opacity=0.8] (axis cs:2.625,0) rectangle (axis cs:2.875,0.98406374501992);

\draw[draw=none,fill=crimson2143940,fill opacity=0.8] (axis cs:-0.125,0) rectangle (axis cs:0.125,0.948001027107239);
\draw[draw=none,fill=crimson2143940,fill opacity=0.8] (axis cs:0.875,0) rectangle (axis cs:1.125,0.906934261322021);
\draw[draw=none,fill=crimson2143940,fill opacity=0.8] (axis cs:1.875,0) rectangle (axis cs:2.125,0.904766798019409);
\draw[draw=none,fill=crimson2143940,fill opacity=0.8] (axis cs:2.875,0) rectangle (axis cs:3.125,0.902139008045197);

\addplot [line width=1pt, orchid227119194, mark=*, mark size=3, mark options={solid}]
table {%
0 0.958639176152701
1 0.915736205751727
2 0.924517380021371
3 0.907855099213066
};

\node[
  anchor=north west,
  draw=lightgray204,
  fill=white,
  fill opacity=0.95,
  text opacity=1,
  rounded corners,
  inner sep=4pt,
  font=\axisfont
] at (axis description cs:0.065,0.72) {$C=0$, $F_1=0$};

\addlegendimage{area legend,draw=none,fill=steelblue31119180,fill opacity=0.8}
\addlegendentry{\makebox[1.4cm][l]{Theory} $P_{C|F_1,F_2}$}

\addlegendimage{area legend,draw=none,fill=crimson2143940,fill opacity=0.8}
\addlegendentry{\makebox[1.4cm][l]{True} $P_{C|F_1,F_2}$}

\addlegendimage{line legend,orchid227119194,line width=1pt,mark=*,mark options={solid},mark size=3}
\addlegendentry{\makebox[1.4cm][l]{Model} $P_{C|F_1,F_2}$}

\end{axis}
\end{tikzpicture}}
	\end{subfigure}\hfill
	\begin{subfigure}[t]{0.25\textwidth}\centering
		\resizebox{\linewidth}{!}{
\begin{tikzpicture}

\definecolor{darkgray176}{RGB}{176,176,176}
\definecolor{darkorange25512714}{RGB}{255,127,14}
\definecolor{gray127}{RGB}{127,127,127}
\definecolor{lightgray204}{RGB}{204,204,204}
\definecolor{mediumpurple148103189}{RGB}{148,103,189}

\begin{axis}[
  legend pos=north east,
  legend cell align={left},
  legend style={
  at={(0.65,0.57)},
  anchor=north east,
  fill opacity=1,
  draw opacity=1,
  text opacity=1,
  draw=lightgray204,
  font=\axisfont,
},
  tick align=outside,
  tick pos=left,
  x grid style={dashed, gray!50},
  xlabel={$F_2$},
  xlabel style={font=\axisfont}, 
  xmajorgrids,
  xmin=-0.55, xmax=3.3,
  xtick style={color=black},
  y grid style={dashed, gray!50},
  ymajorgrids,
  ymin=0, ymax=1.05,
  ytick style={color=black},
  tick label style={font=\axisfont},
  major tick length=3pt
]

\draw[draw=none,fill=darkorange25512714,fill opacity=0.8] (axis cs:-0.375,0) rectangle (axis cs:-0.125,0.00322841000807103);
\draw[draw=none,fill=darkorange25512714,fill opacity=0.8] (axis cs:0.625,0)  rectangle (axis cs:0.875,0.0133155792276964);
\draw[draw=none,fill=darkorange25512714,fill opacity=0.8] (axis cs:1.625,0)  rectangle (axis cs:1.875,0.0106809078771696);
\draw[draw=none,fill=darkorange25512714,fill opacity=0.8] (axis cs:2.625,0)  rectangle (axis cs:2.875,0.0159362549800797);

\draw[draw=none,fill=mediumpurple148103189,fill opacity=0.8] (axis cs:-0.125,0) rectangle (axis cs:0.125,0.0519989915192127);
\draw[draw=none,fill=mediumpurple148103189,fill opacity=0.8] (axis cs:0.875,0)  rectangle (axis cs:1.125,0.0930656641721725);
\draw[draw=none,fill=mediumpurple148103189,fill opacity=0.8] (axis cs:1.875,0)  rectangle (axis cs:2.125,0.0952332094311714);
\draw[draw=none,fill=mediumpurple148103189,fill opacity=0.8] (axis cs:2.875,0)  rectangle (axis cs:3.125,0.0978609770536423);

\addplot [line width=1pt, gray127, mark=*, mark size=3, mark options={solid}]
table {%
0 0.0413608238472989
1 0.0842637942482724
2 0.0754826199786294
3 0.0921449007869341
};

\node[
  anchor=north west,
  draw=lightgray204,
  fill=white,
  fill opacity=0.95,
  text opacity=1,
  rounded corners,
  inner sep=4pt,
  font=\axisfont
] at (axis description cs:0.065,0.72) {$C=1$, $F_1=0$};

\addlegendimage{area legend,draw=none,fill=darkorange25512714,fill opacity=0.8}
\addlegendentry{{\makebox[1.4cm][l]{Theory} $P_{C|F_1,F_2}$}}

\addlegendimage{area legend,draw=none,fill=mediumpurple148103189,fill opacity=0.8}
\addlegendentry{\makebox[1.4cm][l]{True} $P_{C|F_1,F_2}$}

\addlegendimage{line legend,gray127,line width=1pt,mark=*,mark options={solid},mark size=3}
\addlegendentry{\makebox[1.4cm][l]{Model} $P_{C|F_1,F_2}$}

\end{axis}
\end{tikzpicture}}
	\end{subfigure}
	\vspace{-0.7cm}
	\caption{ Synthetic dataset. For each class, we fixed one of the features at its modal value and plot: the true posterior, the posterior predicted by the trained neural network, and the theoretical focal-entropy minimizer.}
	\label{fig:MLP experiment}
\end{figure*}

\bibliographystyle{plainnat}
	\bibliography{refs.bib}
\end{document}